\newtheorem{definition}{Definition}
\newtheorem{theorem}{Theorem}
\newtheorem{lemma}{Lemma}
\newenvironment{proof-sketch}{\noindent{\textit {Proof Sketch:}}}{\qed}
\newtheorem{corollary}{Corollary}
\newtheorem{proposition}{Proposition}
\DeclareMathOperator*{\argmin}{arg\,min}
\title{Age-Optimal Low-Power Status Update over Time-Correlated Fading Channel}
{\author{\IEEEauthorblockN{Guidan Yao\IEEEauthorrefmark{1}, Ahmed M. Bedewy\IEEEauthorrefmark{1}, and Ness B. Shroff \IEEEauthorrefmark{1}\IEEEauthorrefmark{2}}\IEEEauthorblockA{\IEEEauthorrefmark{1}  Department of Electrical and Computer Engineering, Ohio State University} 
\IEEEauthorblockA{\IEEEauthorrefmark{2}Department of Computer Science and Engineering, Ohio State University}}}
\begin{document}
\maketitle
%\normalem

\begin{abstract}
In this paper, we consider transmission scheduling in a status update system, where updates are generated periodically and transmitted over a Gilbert-Elliott fading channel. The goal is to minimize the long-run average \textit{age of information} (AoI) at the destination under an average energy constraint. We consider two practical cases to obtain channel state information (CSI): (i) \emph{without channel sensing} and (ii) with \emph{delayed channel sensing}. For case (i), the channel state is revealed when an ACK/NACK is received at the transmitter following a transmission, but when no transmission occurs, the channel state is not revealed. Thus, we have to design schemes that balance tradeoffs across energy, AoI, channel exploration, and channel exploitation. The problem is formulated as a constrained partially observable Markov decision process problem (POMDP). To reduce algorithm complexity, we show that the optimal policy is a randomized mixture of no more than two stationary deterministic policies each of which is of a threshold-type in the belief on the channel. For case (ii), (delayed) CSI is available at the transmitter via channel sensing. In this case, the tradeoff is only between the AoI and energy consumption and the problem is formulated as a constrained MDP. The optimal policy is shown to have a similar structure as in case (i) but with an AoI associated threshold. Finally, the performance of the proposed structure-aware algorithms is evaluated numerically and compared with a Greedy policy.
\end{abstract}

\section{Introduction}

For status update systems, where time-sensitive status updates of certain underlying physical process are sent to a remote destination, it is important that the destination receives fresh updates.  
The \textit{age of information} (AoI) is a performance metric that is a good measure of the freshness of the data at the destination. In particular, AoI is defined as the time elapsed since the generation of the recently received status update. 

The problem of minimizing the AoI in status update systems has attracted significant recent attention (e.g., \cite{kadota2016minimizing,kadota2018scheduling,bedewy2020optimizing,qiao2019age,bedewy2020optimal,rafiee2020maintaining,bedewy2019age,bedewy2019minimizing,yao2020battle}). 
Due to the fact that sensors in the status update system are usually battery-powered and thus have limited energy supply, the problem of minimizing the long-run average AoI has to take energy constraints into account. 
Moreover, communication over a wireless channel is subject to multiple impairments such as fading, path loss and interference, which may lead to status updating failure. Since each failed transmission consumes unnecessary energy, there is a strong motivation for designing intelligent transmission scheduling algorithms i.e., retransmission or suspension of transmission to increase channel utilization as well as prolong battery life.  

%Many existing works that deal with the AoI minimization problem under energy constraints in status update systems assume either perfect knowledge of the channel state or noiseless channel to guarantee successful transmission. In \cite{bacinoglu2015age, wu2017optimal}, the authors assume that the channel is noiseless, and propose offline or online status updating policies to minimize the long-run average AoI under energy constraints. In \cite{zhou2019joint}, the authors jointly design sampling and updating processes over a channel with perfect channel state information. The success of each transmission is guaranteed via using predefined transmission power which is a function of the channel state. However, in most practical scenarios, the channel state may not be known a priori. More appealing cases are to consider unreliable transmissions with imperfect knowledge of wireless channels. Papers \cite{huang2020age} and \cite{ceran2019average} consider transmission scheduling over unreliable channels to minimize average AoI under average energy/resource constraints. In \cite{huang2020age}, the authors consider a block fading channel, where the channel is assumed to vary independently and identically over time slots. In \cite{ceran2019average}, the authors consider an error-prone channel, where decoding error depends only on the number of retransmissions.  
Many existing works that deal with the AoI minimization problem under energy constraints in status update systems assume either perfect knowledge of the channel state or noiseless channel to guarantee successful transmission. In \cite{bacinoglu2015age, wu2017optimal}, the authors assume that the channel is noiseless, and propose offline or online status updating policies. In \cite{zhou2019joint}, the authors jointly design sampling and updating processes over a channel with perfect channel state information. The success of each transmission is guaranteed via using predefined transmission power which is a function of the channel state. However, in many practical scenarios, the channel state may not be known a priori. Thus, more recent works have also considered unreliable transmissions with imperfect knowledge of wireless channels. For example, in \cite{huang2020age}, the authors consider a block fading channel, where the channel is assumed to vary independently and identically over time slots. In \cite{ceran2019average}, the authors consider an error-prone channel, where decoding error depends only on the number of retransmissions.  

%All of these works neglect an important characteristics of the wireless fading channel- the \textit{channel memory} or \textit{time correlation} \cite{tse2005fundamentals}. 
%Indeed, the memory can be intelligently exploited to predict the channel state and thus to design efficient scheduling policies in the presence of transmission cost. A finite state Markov chain is an often used and appropriate model for fading channel \cite{zhang1999finite}. A somewhat simplified but often-used abstraction is a two-state Markovian model known as the Gilbert-Elliot channel \cite{gilbert1960capacity}. The model assumes that the channel can be either in good or bad state, and captures the essence of the fading process. In \cite{leng2019age}, the authors consider status updating in cognitive radio networks, where updates from the energy harvesting secondary user are generated at will and transmitted over the primary user's channel. The occupation of primary user's channel is modeled as a two-state Markov chain. Although a Markov chain is used to model occupation of primary channel, their threshold-type structural result is built on perfect knowledge of the channel state since update decisions are made based on perfect sensing results. In contrast, in our work, we do not assume that the channel state is known a priori at the time of making updating decisions.

However, these works neglect an important characteristics of the wireless fading channel: The \textit{channel memory} or \textit{time correlation} \cite{tse2005fundamentals} when studying unreliable transmissions with imperfect knowledge of channel states.
Indeed, the memory can be intelligently exploited to predict the channel state and thus to design efficient scheduling policies in the presence of transmission cost. A finite state Markov chain is an often used and appropriate model for fading channel \cite{zhang1999finite}. A somewhat simplified but often-used abstraction is a two-state Markovian model known as the Gilbert-Elliot channel \cite{gilbert1960capacity}. The model assumes that the channel can be either in a good or bad state, and captures the essence of the fading process. In \cite{leng2019age}, the authors consider status updating in cognitive radio networks. The occupation of primary user's channel is modeled as a two-state Markov chain. Although a Markov chain is used to model occupation of primary channel, their threshold-type structural result is built on perfect knowledge of the channel state since update decisions are made based on perfect sensing results. In contrast, in our work, we do not assume that the channel state is known a priori at the time of making updating decisions.

Motivated by the time-correlation in a fading channel and the fact that sensors in practice are typically configured to generate status updates periodically \cite{champati2019statistical}, in this paper, we consider a status update system where the status update is generated periodically and transmitted over a Gilbert-Elliot channel. 
 We do not assume that the channel state is known a priori and consider two practical cases to obtain the channel state information (CSI): (i) \textit{(without channel sensing)} CSI is revealed by the ACK/NACK feedback of a transmission; (ii) \textit{(with delayed channel sensing)} delayed CSI is always available via delayed channel sensing regardless of transmission decisions. %Note that the latter case applies to the scenario where the transition period of the Markov channel is not long enough for round-trip time of sensing plus transmission. 
To increase the reliability of received status updates, retransmissions are allowed. With these, we study the problem of how to minimize the average AoI under a long-run average energy constraint. The problem in case (i) is formulated as a constrained partially observable Markov decision process problem (POMDP) while in case (ii), it is formulated as a constrained Markov decision problem (MDP). It is known that in general POMDP is PSPACE hard to solve and MDP suffers from the curse of dimensionality. In fact, the problem in both cases involves long-run average cost with infinite state space and unbounded costs, which makes the analysis difficult. %Hence, the main contributions of this paper are to characterize the structure of optimal transmission scheduling policies and accordingly develop structure-aware algorithms for both cases. In particular, 
Our key contributions include:
\begin{itemize}
	\item For the case without channel sensing, we show that the optimal transmission scheduling policy is a randomized mixture of no more than two \emph{stationary deterministic threshold-type} policies (Theorem \ref{thre_avg} and Corollary \ref{opt_policy_sturcture}). Note that although there are some works that deal with showing optimality of threshold-type policies in POMDPs \cite{lovejoy1987some, albright1979structural,laourine2010betting,liu2010indexability,abad2017channel}, the techniques in these papers cannot be applied to our problem. This is because, given hidden state and action, the one-stage cost in these papers is constant and bounded, while the one-stage cost in our paper depends on varying and unbounded AoI.
	\item We propose a finite-state approximation for our infinite-state (unbounded AoI and belief on channel state) belief MDP and show that the optimal policy for the approximated belief MDP converges to the original one (Theorem \ref{approx}). Based on this, we propose an optimal efficient structure-aware transmission scheduling algorithm (Algorithm \ref{alg1}) for the approximate belief MDP.
	\item For the case with delayed channel sensing, we show that the optimal transmission scheduling policy is also a randomized mixture of no more than two stationary deterministic threshold-type policies. However, due to the simplification in the state, the threshold here is on AoI (Theorem \ref{thre_avg2}). Moreover, we provide a relation between the thresholds associated with different channel states (Theorem \ref{thre_avg2}). Based on the theoretical insights, we develop an efficient structure-aware algorithm (Algorithm \ref{alg2}).
	\end{itemize}
%\begin{itemize}
%	\item For the case without channel sensing, we formulate the problem as a constrained average-AoI POMDP and re-express it as a constrained average-AoI belief MDP by adding the belief on the channel to the system state. For the case with delayed channel sensing, we formulate the problem as a constrained average-AoI MDP. 
%	\item Using the Lagrangian approach, we transform the \emph{constrained} average-AoI (belief) MDP in either case into a parameterized \emph{unconstrained} average cost (belief) MDP. By relating the unconstrained \emph{average} cost MDP to the unconstrained \emph{discounted} cost MDP, we prove that in either case, there exists a stationary deterministic threshold-type policy that minimizes the unconstrained average cost.
%Value iterations are used in proof for both cases while concavity of the value functions is used in the case without channel sensing.
%	\item Based on a result in \cite{altman1999constrained}, we show that the optimal transmission scheduling policy in either case is a randomized mixture of no more than two deterministic threshold-type policies. 
%	\item To reduce the algorithm complexity, we develop structure-aware algorithms based on relative value iteration (RVI) and Lagrange dynamic programming for each case.
%\end{itemize} 

The remainder of this paper is organized as follows. The system model is introduced in Section \ref{system_model}. For the case without channel sensing, we formulate the problem in Section \ref{formulation}, and in Section \ref{structure}, we explore the structure of the optimal policy and propose a structure-aware algorithm. In Section \ref{case22}, we investigate the case with delayed channel sensing. Section \ref{numerical_results} contains numerical results.

\section{System Model}
\label{system_model}
%\subsection{System Model}
%We consider a status update system where status updates are generated periodically and transmitted to a remote destination over a time-correlated fading channel as shown in Fig. \ref{model}. We consider a time-slotted system, where a time slot corresponds to the time duration of the packet transmission time and feedback period. Every $K$ consecutive time slots form \textit{a frame}. The frame length represents the generation period of status updates. Define $\mathcal{K}$ as the set of \emph{relative slot index} within a frame, $\mathcal{K}\triangleq \{1,2,\cdots, K\}$. Use $t\in \{1,2,\cdots\}$ as an \emph{absolute index} for the slot count, which increments indefinitely with time. For any slot $t$, the corresponding frame index $l_t\in\{1,2,\cdots\}$ is determined by $l_t=\lfloor \frac{t}{K}\rfloor+1$  and relative slot index $k_t\in\mathcal{K}$ is determined by $k_t=\left(\left(t-1\right)\!\!\!\mod K\right)+1$, where $\lfloor \cdot \rfloor$ is the floor function.
We consider a status update system where status updates are generated periodically and transmitted to a remote destination over a time-correlated fading channel as shown in Fig. \ref{model}. We consider a time-slotted system, where a time slot corresponds to the time duration of the packet transmission time and feedback period. Every $K$ consecutive time slots form \textit{a frame}. Updates are generated at the beginning of each frame. In any frame, if the generated status update is not delivered by the end of the frame, then it gets replaced by a new one in the next frame. Define $\mathcal{K}$ as the set of \emph{relative slot index} within a frame, $\mathcal{K}\triangleq \{1,2,\cdots, K\}$. Use $t\in \{1,2,\cdots\}$ as an \emph{absolute index} for the time slot count, which increments indefinitely with time. For any time slot $t$, the corresponding frame index $l_t\in\{1,2,\cdots\}$ is determined by $l_t=\lfloor \frac{t}{K}\rfloor+1$  and relative slot index $k_t\in\mathcal{K}$ is determined by $k_t=\left(\left(t-1\right)\!\!\!\mod K\right)+1$,
where $\lfloor \cdot \rfloor$ is the floor function.
%Use $l\in \{1,2,\cdots\}$ to index the frame, $k\in\mathcal{K}$ to index the relative slot count within a frame and
\begin{figure}
	\centering
	\includegraphics[scale=0.28]{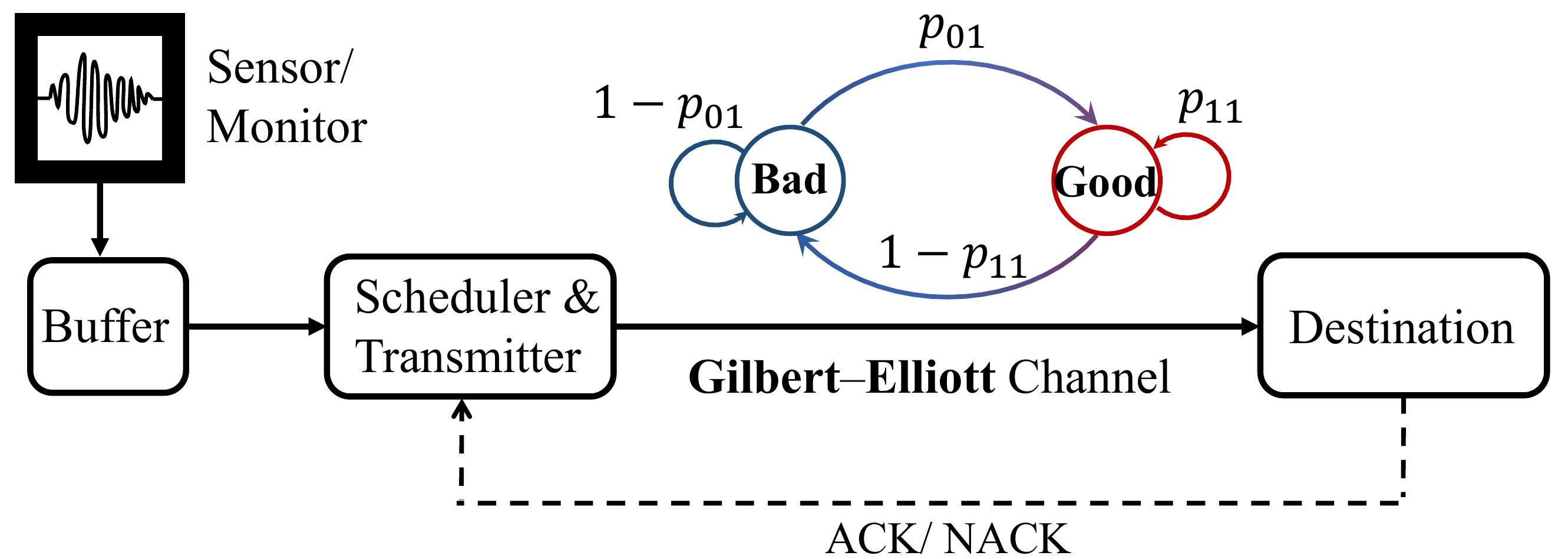}
	\caption{System Model}
	\label{model}
	%\vspace{-0.6cm}
\end{figure}
%\vspace{-0.1cm}
\subsection{Channel Model} 
The time-correlated fading channel for transmission is assumed to evolve as a two-state Gilbert-Elliot model \cite{gilbert1960capacity}. Let $h_{t}$ denote the channel state at time slot $t$. Then, $h_{t}=1$ ($h_{t}=0$) denotes that channel is in a ``good'' (``bad'') state. %which is modeled as a one-dimensional Markov chain with two states: a ``good'' state denoted by 1 and a ``bad'' state denoted by 0. 
In the ``bad'' state, the channel is assumed to be in a deep fade such that transmission fails with probability one; while in the ``good'' state, a transmission attempt is always successful. This assumption conforms with the signal-to-noise ratio (SNR) threshold model for reception where successful decoding of a packet at the destination occurs if and only if the SNR exceeds a certain threshold value. The channel transition probabilities are given by $\mathbb{P}(h_{t+1}\!=\!1|h_t\!=\!1)\!=\!p_{11}$ and $\mathbb{P}(h_{t+1}\!\!=\!\!1|h_t\!\!=\!0)\!=\!p_{01}$. We assume that the channel transitions occur at the end of each time slot, and that $p_{11}$ and $p_{01}$ are known. 

The presence of channel memory (time correlation) makes it possible to predict the channel state. Define Markovian channel memory as $\mu=p_{11}-p_{01}$ \cite{wang1995finite,johnston2006opportunistic}. In this paper, we assume that $p_{11}\geq p_{01}$ (positively correlated channel) (similar assumptions have been used in \cite{laourine2010betting,abad2017channel}). %Note that $p_{11}< p_{01}$ corresponds to oscillatory channels and does not have practical meaning in Gilbert-Elliot channels \cite{johnston2006opportunistic}. 

%\vspace{-0.05cm}
\subsection{Transmission Scheduler and Channel State Information}
\label{sys_scheduler}
%Updates are generated at the beginning of each time frame. In any frame, if the generated status update is not delivered by the end of the frame, then it gets replaced by a new one in the next frame. 
At the beginning of each slot $t$, the scheduler takes a decision $u_{t} \in \mathcal{U}\triangleq\{0,1\}$, where $u_{t}=1$ means transmitting (retransmitting) the undelivered statues update, and $u_{t}=0$ denotes suspension of transmission  (retransmission).
In each frame, if the generated update is delivered at the $k_t$-th slot of the frame, then we have $u_t=0$ for the remaining slots in the frame.
%We use $u_{t} \in \mathcal{U}\triangleq\{0,1\}$ to denote the decision of the scheduler at the time slot $t$, where $u_{t}=1$ means transmitting (retransmitting) the undeliverd statues update, and $u_{t}=0$ denotes suspension of transmission  (retransmission). %In each time slot $t$, if the status update in the frame $l_t$ is delivered in a previous time slot $t'<t$ within the frame, i.e. $l_{t'}=l_t$, then we have $u_{t}=0$. 
%In each slot $t$, if the status update is delivered, i.e. $\Delta_t<K$, then we have $u_{t}=0$.
For simplicity, we use transmission to refer to both transmission and retransmission in the remaining content.% which can be easily told apart by relative slot index $k$, i.e. transmitting a new status update occurs when $k=1$.
%For each frame $l$, if the status update is delivered in the $k$-th slot of the time frame, then we have $u_t=0$ for all $l_t=l$ and $k_t>k$.
%At the beginning of each time slot $t$, if the newly generated status update within this frame $l_t$ is not delivered, then the transmission scheduler chooses an action $u_{t} \in \mathcal{U}\triangleq\{0,1\}$, where $u_{t}=1$ means transmitting the status update and $u_{t}=0$ denotes suspension of transmission. If the new status update is delivered, then the scheduler will suspend transmission at the slot. 

In this paper, we consider two practical cases to obtain CSI: (i) \emph{(without channel sensing)} CSI is revealed via the feedback on transmission from the destination; (ii) \emph{(with delayed channel sensing)} CSI of the last time slot is always available via delayed channel sensing regardless of transmission decisions. In particular, for case (i), if a transmission is attempted, then the scheduler receives an error-free ACK/NACK feedback from the destination specifying whether the status update was delivered or not before the end of the slot. We use $\Theta$ to denote the set of observations, $\Theta\triangleq\{0, 1\}$. Let $\theta_t \in \Theta$ be the observation at time slot $t$. Then, $\theta_{t} =1$ denotes a successful transmission. $\theta_{t} =0$ occurs when the transmission occurs over the channel in the bad state or the transmission is suspended. Note that when a decision is made not to transmit updates, the scheduler will not obtain feedback revealing the CSI. Thus, the channel in this case is partially observable. In contrast, for case (ii), CSI of the last time slot is always available via delayed channel sensing regardless of transmission decisions. %This case applies to the scenario where transition period of the Markov channel is not long enough for the round-trip time of channel sensing plus transmission time. Thus, the sensing result is exploited in the next time slot. %Note that if we consider the case where CSI is delayed by more than one time slot, then feedback on transmission should be considered; otherwise, unnecessary energy will be wasted by retransmitting a delivered update. This case can be regarded as a combination of the two cases (i), (ii), and can be solved with methods in this paper. For simplicity, we present cases (i) and (ii). 
%\vspace{-0.2cm}
\subsection{Age of Information} 
%\vspace{-0.1cm}
Age of information (AoI) reflects the timeliness of the information at the destination. It is defined as the time elapsed since the generation of the most recently received update at the destination. Let $\Delta_{t}$ denote the AoI at the beginning of the time slot $t$. Let $U(t)$ denote the generation time of the last successfully received status update for time slot $t$. Then, $\Delta_{t}$ is given by $\Delta_{t}\triangleq t-U(t)$.
%\vspace{-0.2cm}
%%\begin{equation}
	%\Delta_{t}\triangleq t-U(t).
	%\vspace{-0.2cm}
%\end{equation}

If a status update is not successfully delivered in slot, then the AoI increases by one, otherwise, the AoI drops to the time elapsed since the beginning of the frame (generation time of the newly delivered status update). %Specifically, at the beginning of slot $t+1$, relative slot index $k_t$ (defined in \eqref{kt}) captures the time elapsed since the beginning of the frame. 
Then, the value of $\Delta_{t+1}$ is updated as follows:
\begin{align}
\label{age_update}
& \Delta_{t+1}=
 %\left\{
  \begin{cases}%{l l}  
  k_t & \text{if}\, \, u_{t}=1, \theta_{t}=1, \\  
  \Delta_{t}+1 & \text{otherwise}. 
  \end{cases} %\right.		
  \vspace{-0.2cm}
\end{align}
Let $\mathcal{A}_k$ denote the set of all possible AoI values at the $k$-th slot of a frame. By \eqref{age_update}, $ \mathcal{A}_k\!=\!\{\Delta:\!\Delta=mK\!+\!(k)_{-}, m\in \{0,1,2,\cdots\}\}$, where $(k)_-\!\triangleq((K+k-2)\!\mod\! K) +1$ denotes the relative slot index before $k$.
%In Fig. \ref{evolution_AoI}, the evolution of AoI is illustrated for a given sample path of deliveries at the destination with $K=4$. 
An example of the AoI evolution with K=4 is illustrated in Fig. \ref{evolution_AoI}. 
\begin{figure}
\centering
	\includegraphics[width=0.45\textwidth]{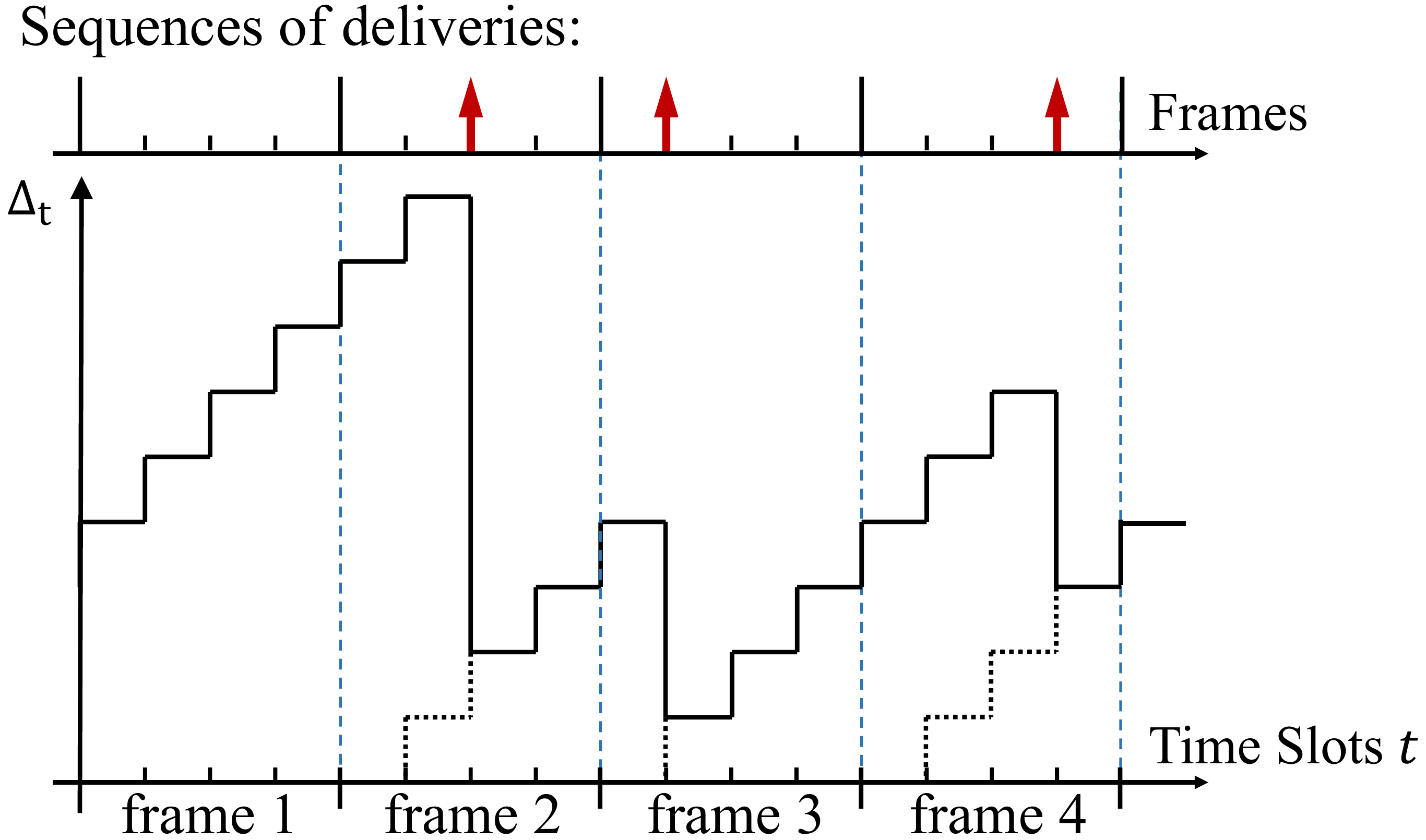}
	\caption{\small{On the top, a sample sequence of deliveries during four frames. Each frame consists of 4 time slots. The upward arrows represent the times of deliveries. On the bottom, the associated evolution of AoI.}}
\label{evolution_AoI}
\end{figure}

We aim to design an energy efficient scheduler, where each transmission consumes one unit energy. Therefore, the long-run average energy consumption cannot exceed a certain limit $E_{\text{max}}\in (0,1]$. Observe that $E_{\text{max}}=1$ means that we have enough energy to support a transmission in every time slot. Although a failed transmission does not decrease AoI, it provides channel state information at the cost of energy. Thus, the transmission scheduler has to balance tradeoffs across energy, AoI, channel exploration, and channel exploitation.

\section{Constrained POMDP Formulation and Lagrangian relaxation without Channel Sensing}
\label{formulation}
\subsection{Constrained POMDP Formulation}
At the beginning of each time slot, the scheduler chooses an action $u$. Given that the state of the underlying Markov channel is $i$, the user observes $\theta(i, u)\in \{0,1\}$, which indicates the state of the current channel. Specifically, an ACK will be received if and only if the status update is transmitted over a ``good" channel, i.e. $\theta(1,1)\!=\!1$. Otherwise, for $(i, u)\neq (1,1)$, $\theta(i, u)\!=\!0$. Upon receipt of the feedback/observation, the AoI changes accordingly at the end of this slot. The sequence of operations in each slot is illustrated in Fig. \ref{sequ_of_oper}. Note that when transmission is suspended, the channel state is not directly observable. Together with the average energy constraint, the problem we consider in the paper turns out to be a constrained partially observable Markov decision problem (POMDP). 
%\vspace{-0.4cm}
\begin{figure}[h]
	\centering
	\includegraphics[scale=.39]{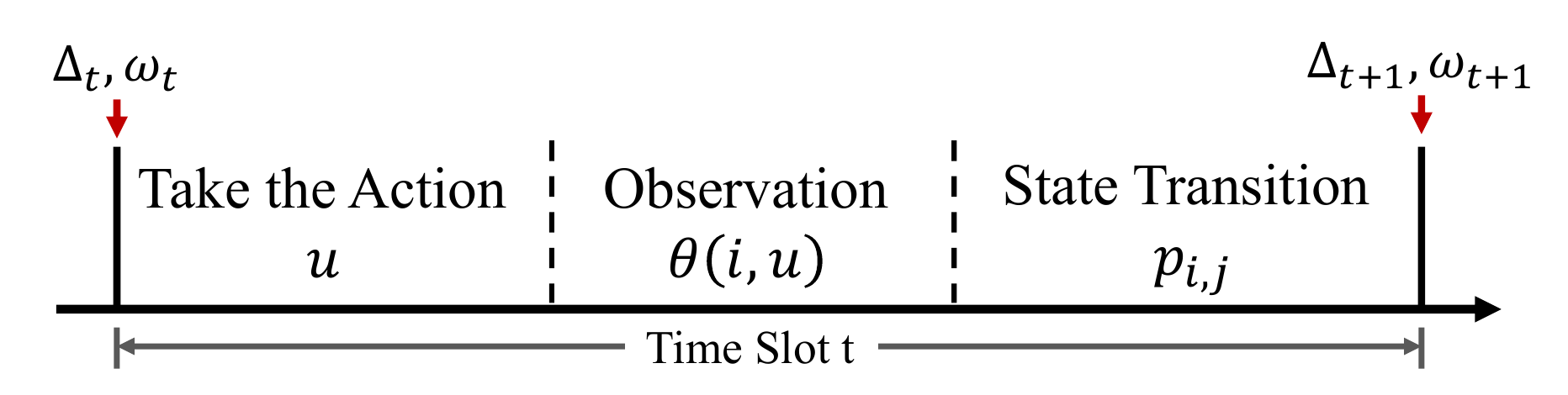}
	\caption{Sequence of operations in a slot}
	\label{sequ_of_oper}
	%\vspace{-0.6cm}
\end{figure}
%Figure XX illustrates the sequence of operations in each slot. At the beginning of each time slot, decision is made on whether to transmit or suspend. Given that the current underlying Markov channel state is $j$, the user observes $\theta(j, u)\in \{0,1\}$, which indicates the state of current channel. Specifically, if channel is ``good" and transmission is conducted, then an ACK will be received, i.e. $\theta(1,1)=1$. Otherwise, for $(j, u) \neq (1,1)$, $\theta(j, u)=0$. 

%At the beginning of each time slot $t$, our knowledge of underlying channel state can be summarized by a \textit{belief} state $\omega_t$, where $\omega_t$ is based on previous observations and decisions. Specifically, the belief state $\omega_t$ is defined as conditional probability (given observation and action history) that channel is in "good" state at the beginning of the slot $t$. 
%It has been shown in \cite{smallwood1973optimal} that for any slot $t$, the belief $\omega_t$ is a \textit{sufficient statistic} to describe the knowledge of the channel and thus can be used for making optimal decisions at time slot $t$. Thus, adding the belief to the system state, the constrained POMDP can be written as constrained belief MDP \cite{sawaragi1970discrete}. We describe the components of the framework as follows:
It has been shown in \cite{smallwood1973optimal} that for any slot $t$, a belief state $\omega_t$ is a \textit{sufficient statistic} to describe the knowledge of underlying channel state and thus can be used for making optimal decisions at time slot $t$. 

\begin{definition}
The belief state $\omega_t$ is the conditional probability (given observation and action history) that channel is in a good state at the beginning of the time slot $t$.
\end{definition}
%Specifically, the belief state $\omega_t$ is defined as conditional probability (given observation and action history) that channel is in "good" state at the beginning of the slot $t$. 
\noindent Thus, adding the belief to the system state, the constrained POMDP can be written as constrained belief MDP \cite{sawaragi1970discrete}. We describe the components of the framework as follows:

 \textbf{States:} The system state consists of completely observable states and the belief state, i.e., the system state at slot $t$ is defined by a 3-tuple $\mathbf{s}_{t}\!\!=\!\! (\Delta_{t}, k_{t}, \omega_{t})$, where $\Delta_t \!\!\in\! \mathcal{A}_{k_t}$ is the AoI state that evolves as  \eqref{age_update}; $k_t\!\in\! \mathcal{K}$ is the relative slot index in the frame $l_t$ that evolves as $k_{t+1}\!=\!(k_{t})_+$, where $(y)_+\triangleq (y\mod K)+1$; $\omega_t$ is the belief state whose evolution is defined in the following paragraph. 

\textit{Belief Update:} 
%Recall that $h_t$ denotes the state of the underlying Markov channel at the time slot $t$ and at the time slot $t$, the belief $\omega_{t}$ is a conditional probability that $h_t=1$. 
Given $u_t$ and $\theta_t$, the belief state in time slot $t+1$ is updated by $\omega_{t+1}=\Lambda(\omega_t,u_t,\theta_t)$, where $\Lambda(\omega_t,u_t,\theta_t)$ is given by 
\vspace{-0.0cm}
\begin{align}
\vspace{-0.4cm}
& \omega_{t+1}=\! \Lambda(\omega_t,u_t,\theta_t)=\!
 %\left\{
  \begin{cases}%{l l}  
  p_{11}\! & \text{if}\,  u_{t}=1, \theta_{ t}=1, \\  
  p_{01}\! & \text{if}\,  u_{t}=1, \theta_{ t}=0, \\
  \mathcal{T}(\omega_{t})\! & \text{if}\,  u_{t}=0,
  \end{cases} %\right.		
  \label{omega}
  %\vspace{-0.7cm}
\end{align}
where $\mathcal{T}(\omega_{t})= \omega_{t} p_{11}+(1-\omega_{t})p_{01}$ denotes the one-step belief update. Observe that, if $u_t=0$, then the scheduler will not learn the channel state and the belief is updated only according to the Markov chain. If $u_t=1$, the observation $\theta_t$ after the transmission provides the true channel state before the state transition, which occurs at the end of the time slot (see Fig. \ref{sequ_of_oper}).
%Let $\mathcal{T}^m(\omega_{t})\triangleq \mathbb{P}(h_{t+m}=1|\omega_{t})$ denote $m$-step belief update for $m$ consecutive unobserved channels, where $m\in\{0,1,\cdots\}$ and $\mathcal{T}^0(\omega)=\omega$. 

Let $\mathcal{T}^m(\omega_{t})\triangleq \mathbb{P}(h_{t+m}=1|\omega_{t})$ denote $m$-step belief update when the channel is unobserved for $m$ consecutive slots, where $m\in\{0,1,\cdots\}$ and $\mathcal{T}^0(\omega)=\omega$. 
Note that by \eqref{omega}, after a transmission ($u_t=1$), $\omega_{t+1}$ is either $p_{01}$ or $p_{11}$. The belief state $\omega$ is, hereafter, updated by $\mathcal{T}$ upon each suspension until next transmission attempt. Thus, the belief state $\omega$ is in the form of $\mathcal{T}^m(p_{01})$ or $\mathcal{T}^m(p_{11})$, where $m\geq 0$. Moreover, an increase in AoI by one results from either a failed transmission or suspension. Thus, given AoI state $\Delta_t$, the maximum suspension time after last transmission is no longer than $\Delta_t-1$. %This implies that the AoI cannot be smaller than the maximum suspension time. %updating times via $\mathcal{T}$. 
 By this, given AoI state $\Delta$, the belief state belongs to the following set $\Omega_\Delta\triangleq \{\omega: \omega=\mathcal{T}^m(p_{01}) \ \text{or} \ \mathcal{T}^m(p_{11}), 0\leq m<\Delta\}$. 
As a result, the state space is given by $\mathcal{S}\triangleq \{(\Delta,k,\omega):k\in K, \Delta\in \mathcal{A}_k,  \omega\in \Omega_\Delta\}$. 

 \textbf{Actions:} Action set is $\mathcal{U}=\{0,1\}$ defined in Section \ref{sys_scheduler}. %Note that $\Delta_t<K$ implies the delivery of recent generated update packet. Thus, if $\Delta_t<K$, then $u_t=0$. 

 \textbf{Transition probabilities:} Given the current state $\mathbf{s}_{t}= (\Delta_{t}, k_{t}, \omega_{t})$ and action $u_{t}$ at slot $t$, the transition probability to the state $\mathbf{s}_{t+1}=(\Delta_{t+1}, k_{t+1}, \omega_{t+1})$ at the next slot $t+1$, which is denoted by $P_{\mathbf{s}_{t}\mathbf{s}_{t+1}}(u_t)$, is defined as
 \vspace{-0.2cm}
 \begin{align}
\vspace{-0.2cm}
	P_{\mathbf{s}_{t}\mathbf{s}_{t+1}}(u_t)&\triangleq\mathbb{P}(\mathbf{s}_{t+1}|\mathbf{s}_{t},u_{t})\notag\\
	&=\sum_{\theta_{t}\in \Theta}\mathbb{P}(\theta_{t}|\mathbf{s}_{t},u_{t})\mathbb{P}(\mathbf{s}_{t+1}|\mathbf{s}_{t},u_{t},\theta_{t}),
	\label{transition_prob}
\end{align}
\vspace{-0.1cm}
where
\vspace{-0.1cm}
\begin{align}
\vspace{-0.3cm}
 &\mathbb{P}(\theta_{t}|\mathbf{s}_{t},u_{t})=
  \begin{cases}%{l l}  
  \omega_{t} & \text{if}\, \, u_{t}=1,\theta_{t}=1, \\
  1-\omega_{t} & \text{if} \, \, u_{t}=1,\theta_{t}=0,\\  
  1 &\text{if} \, \, u_{t}=0,\theta_{t}=0,\\
  0 & \text{otherwise},
  \end{cases} %\right.		
\end{align}
\vspace{-0.15cm}
\vspace{-0.25cm}
\begin{align}
\vspace{-0.3cm}
&\mathbb{P}(\mathbf{s}_{t+1}|\mathbf{s}_{t},u_{t},\theta_{t})\notag\\
=&\! \begin{cases}%{l l}  
  1 &\!\!\! \text{if}\, \,\mathbf{s}_{t+1}\!=\!(k_{t},(k_{t})_{+},\Lambda(\omega_t,u_t,\theta_t)), u_{t}\!=\!1,\theta_{t}\!=\!1, \\
  1 & \!\!\!\text{if} \, \,\mathbf{s}_{t+1}\!=\!(\Delta_{t}\!+\!1,(k_{t})_{+},\Lambda(\omega_t,u_t,\theta_t)), \theta_{t}\!=\!0,\\  
  0 &\! \!\!\text{otherwise}.
  \end{cases} %\right.	
  \vspace{-0.1cm}	
\end{align}
\textbf{Costs:} Given a state $\mathbf{s}_t=(\Delta_{t}, k_{t}, \omega_{t})$ and an action choice $u_t$ at slot $t$, the cost of one slot is the AoI at the beginning of this slot, i.e., we have
$
	C_\Delta(\mathbf{s},u_{t})=\Delta_{t}
$.
Moreover, the energy consumption of one slot is
$
	C_E(\mathbf{s},u_{t})=u_t
$.

A transmission scheduling policy $\pi=\{d_1,d_2,\cdots\}$ specifies the decision rule for each time slot, where a decision rule $d_t$ maps the history of states and actions, and the current state to an action. A policy is \emph{stationary} if the decision rule is independent of time, i.e., $d_{t}=d$, for all $t$. Moreover, a policy is \emph{randomized} if $d_t: \mathcal{S}\rightarrow \mathcal{P}(\mathcal{U})$ specifies a probability distribution on the set of actions. The policy is \emph{deterministic} if $d_t: \mathcal{S} \rightarrow \mathcal{U}$ chooses an action with certainty. %$d_t:\mathcal{S}\rightarrow \mathcal{U}$ is a deterministic function. %A transmission scheduling policy is called \emph{stationary deterministic} if $u_t=d(\mathbf{s}_t)$ for all $t$, where $d:\mathcal{S}\rightarrow \mathcal{U}$ is a deterministic function. 
For any policy $\pi$, we assume that the resulted Markov chain is a unichain (same assumptions are also made in \cite{zhou2019joint,djonin2007mimo}). Our objective is to design a policy $\pi$ that minimizes the long-run average AoI $\bar{A}(\pi)$ while the long-run average energy consumption $\bar{E}(\pi)$ does not exceed $E_{\text{max}}$, which is formulated as 
%The transmission scheduling problem can be formulated as a constrained belief MDP: 

\textit{Problem 1 (Constrained average-AoI belief MDP):}
\begin{alignat}{2}
\label{ori_prob}
 \bar{A}^\star\triangleq   \min_\pi \quad & \bar{A}(\pi)= \limsup_{T \rightarrow \infty} \frac{1}{T}\mathbb{E_\pi}\big[\sum_{t=1}^T C_\Delta(\mathbf{s}_{t},u_{t})\big] &  \\
    \mathrm{s.t.} \quad & \bar{E}(\pi)=\limsup_{T \rightarrow \infty} \frac{1}{T}\mathbb{E_\pi}\big[\sum_{t=1}^T C_E(\mathbf{s}_{t},u_{t})\big] \leq E_{\text{max}} & .
        \nonumber 
\end{alignat}
We use $\bar{A}^\star$ to denote the optimal average AoI, which is the solution to the problem \eqref{ori_prob}. %Although stationary deterministic policies are the easiest to be implemented and evaluated, by \cite{bertsekas1995dynamic}, there may not exist a stationary deterministic policy that achieves $\bar{A}^\star$. In fact,
We show in Section \ref{structure} that there exists a stationary policy which is a randomized mixture of no more than two deterministic policies that achieves $\bar{A}^*$.

\subsection{Lagrange Formulation of the Constrained POMDP}
\label{Lag_transform}
To obtain the optimal transmission scheduling policy, we reformulate the constrained average-AoI belief MDP in \eqref{ori_prob} as a parameterized unconstrained average cost belief MDP using Lagrangian approach. Given Lagrange multiplier $\lambda$, the instantaneous Lagrangian cost at time slot $t$ is defined by
\begin{align}
	C(\mathbf{s}_{t},u_{t};\lambda)=C_\Delta(\mathbf{s}_{t},u_{t})+\lambda C_E(\mathbf{s}_{t},u_{t}).
	\label{inst_lag_cost}
\end{align}
Then, the average Lagrangian cost under policy $\pi$ is given by
\begin{align}
	&\bar{L}(\pi;\lambda)= \limsup_{T \rightarrow \infty} \frac{1}{T}\mathbb{E_\pi}\big[\sum_{t=1}^T C(\mathbf{s}_{t},u_{t}; \lambda)\big].
\end{align}

Then, we have an unconstrained average cost belief MDP which aims at minimizing the above average Lagrangian cost:

\textit{Problem 2 (Unconstrained average cost belief MDP):}
\begin{align}
	\bar{L}^*(\lambda) \triangleq \min_{\pi} \ \bar{L}(\pi;\lambda)
	\label{unconstrained_avg},
\end{align}
where $\bar{L}^*(\lambda)$ is the optimal average Lagrangian cost with regard to $\lambda$. A policy is said to be \emph{average cost optimal} if it minimizes the average Lagrangian cost.

The relation between the optimal solutions of the problems \eqref{ori_prob} and \eqref{unconstrained_avg} is provided in the following corollary.
\begin{corollary}
	The optimal average AoI of problem \eqref{ori_prob} and the optimal average Lagrangian cost of problem \eqref{unconstrained_avg} satisfy
	\begin{align}
	\vspace{-0.2cm}
		\bar{A}^*=\sup_{\lambda\geq0}\bar{L}^*(\lambda)-\lambda E_{\text{max}}
	\end{align}
\end{corollary}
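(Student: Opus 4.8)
The plan is to prove the two inequalities separately: the easy \emph{weak-duality} bound $\bar{A}^* \geq \sup_{\lambda \geq 0}\big(\bar{L}^*(\lambda) - \lambda E_{\text{max}}\big)$, and the substantive \emph{strong-duality} bound in the reverse direction, which together force equality.

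For weak duality, I would fix an arbitrary $\lambda \geq 0$ and an arbitrary policy $\pi$ that is feasible for \eqref{ori_prob}, so that $\bar{E}(\pi) \leq E_{\text{max}}$. Since $\lambda \geq 0$, the term $\lambda(\bar{E}(\pi) - E_{\text{max}})$ is nonpositive, and using $\bar{L}(\pi;\lambda) = \bar{A}(\pi) + \lambda\bar{E}(\pi)$ together with the definition of $\bar{L}^*(\lambda)$ as a minimum over all policies,
\begin{align}
\bar{A}(\pi) &\geq \bar{A}(\pi) + \lambda\big(\bar{E}(\pi) - E_{\text{max}}\big) \notag \\
&= \bar{L}(\pi;\lambda) - \lambda E_{\text{max}} \;\geq\; \bar{L}^*(\lambda) - \lambda E_{\text{max}}.
\end{align}
Taking the infimum over feasible $\pi$ on the left and the supremum over $\lambda \geq 0$ on the right yields the weak-duality bound.

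The reverse inequality is where the real work lies. First I would argue that the set of achievable performance pairs $\{(\bar{A}(\pi),\bar{E}(\pi)):\pi\}$ is convex, via the occupation-measure (state-action frequency) description of the average-cost belief MDP: both $\bar{A}$ and $\bar{E}$ are linear functionals of the occupation measure, and any convex combination of achievable occupation measures is itself realized by a randomized stationary policy. Consequently \eqref{ori_prob} is the minimization of a linear objective over a convex set subject to a single linear inequality constraint, whose Lagrangian dual is exactly $\sup_{\lambda \geq 0}\big(\bar{L}^*(\lambda) - \lambda E_{\text{max}}\big)$. I would then verify a Slater condition: because $E_{\text{max}} > 0$, a randomized stationary policy that transmits with a small but positive probability keeps the average energy strictly below $E_{\text{max}}$ while retaining a finite average AoI, so a strictly feasible policy exists. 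Standard convex Lagrangian duality then guarantees a zero duality gap, giving $\bar{A}^* \leq \sup_{\lambda \geq 0}\big(\bar{L}^*(\lambda) - \lambda E_{\text{max}}\big)$ and hence equality.

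The main obstacle is establishing convexity and the absence of a duality gap despite the \emph{infinite} (countable) state space and the \emph{unbounded} one-stage AoI cost, since the classical constrained-MDP duality theory is cleanest for bounded costs and finite state spaces. I would control this by exploiting the standing unichain assumption, which makes every policy's average costs well defined and finite, and by passing through the finite-state, bounded-cost truncations of Theorem~\ref{approx}: establish the convexity and strong-duality conclusions for the truncated problems and then transfer them to the original problem by a limiting argument. Showing that the dual values of the truncations converge to the dual value of the original problem is the step I expect to require the most care.
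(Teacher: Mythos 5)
Your weak-duality half is essentially fine; note only that the step $\bar{L}(\pi;\lambda)=\bar{A}(\pi)+\lambda\bar{E}(\pi)$ is in general an inequality $\bar{L}(\pi;\lambda)\leq\bar{A}(\pi)+\lambda\bar{E}(\pi)$ (a $\limsup$ of a sum is at most the sum of the $\limsup$s), which happens to point the right way for your chain. The genuine gap is in the reverse (strong-duality) direction: you never actually establish it, you reduce it to ``standard convex Lagrangian duality'' plus a truncation-limit transfer, and that transfer is precisely the hard part. Theorem~\ref{approx} gives, for each \emph{fixed} $\lambda$, convergence of the \emph{unconstrained} Lagrangian values $\bar{L}^{N*}(\lambda)\rightarrow\bar{L}^*(\lambda)$; it gives neither convergence of the \emph{constrained} primal values of the truncated problems to $\bar{A}^*$, nor any uniformity in $\lambda$ that would justify exchanging $\lim_{N}$ with $\sup_{\lambda\geq 0}$. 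Both would have to be proved. There is also a subtler identification issue you gloss over: occupation-measure convexity plus Slater yields $\bar{A}^*=\sup_{\lambda\geq0}\bigl(\inf_\pi\{\bar{A}(\pi)+\lambda\bar{E}(\pi)\}-\lambda E_{\text{max}}\bigr)$, whereas the corollary involves $\bar{L}^*(\lambda)$, defined through the $\limsup$ of the \emph{combined} cost. Since $\bar{L}^*(\lambda)-\lambda E_{\text{max}}$ can a priori be \emph{smaller} than that convex-analytic dual function, strong duality for the latter does not by itself give the corollary; closing this requires showing the unconstrained problem \eqref{unconstrained_avg} is solved by a stationary policy under which the limits exist (so the two dual functions coincide), which is additional work you do not flag.

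The paper takes a much more direct route: it does not rebuild any duality theory, but invokes Theorem 12.7 of \cite{altman1999constrained}, a Lagrangian duality theorem for countable-state constrained MDPs with unbounded costs, whose only hypothesis to verify is a growth condition --- for every $r\in\mathbb{R}$, the level set $G(r)=\{\mathbf{s}\in\mathcal{S}:\inf_u C_\Delta(\mathbf{s},u)<r\}$ is finite. This is checked in one line: $\mathbf{s}=(\Delta,k,\omega)\in G(r)$ forces $\Delta<r$, there are finitely many admissible AoI values below $r$, the slot index $k$ ranges over a finite set, and $\Omega_\Delta$ is finite for each finite $\Delta$. Ironically, the unbounded growth of the AoI cost, which you treat purely as an obstacle, is exactly what makes this level-set condition hold (it is what forces tightness of the relevant occupation measures), so the off-the-shelf theory applies. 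Completing your route would essentially amount to re-proving that cited theorem.
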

\begin{proof}
	By Theorem 12.7 in \cite{altman1999constrained}, we only need to check the following condition:
		for all $r \in \mathbb{R}$, the set $G(r)\triangleq\{\mathbf{s}\in \mathcal{S}:\inf_u C_\Delta (\mathbf{s},u)<r\}$ is finite. Given $r$, for any $\mathbf{s}=(\Delta,k,\omega)\in G(r)$, $\Delta=\inf_u C_\Delta (\mathbf{s},u)<r$. With fixed finite $\Delta$, $\Omega_\Delta$ is finite. Thus, $G(r)$ is finite.
		%This holds since for any $r$, if $(\Delta,k,\omega)\in G(r)$, then $\inf_u C_\Delta (\mathbf{s},u)=\Delta<r$ and $\omega\in \Omega_\Delta$, where $\Omega_\Delta$ is finite with finite $\Delta$.
\end{proof}

%The corollary establishes the steps %to solve \eqref{ori_prob}, which will be specified in the following section. %
%The corollary establishes that the constrained average-AoI belief MDP in \eqref{ori_prob} can be solved with two steps: (i) solve the unconstrained average cost belief MDP in \eqref{unconstrained_avg} with fixed $\lambda$; (ii) determine $\lambda$ to obtain the optimal policy for \eqref{ori_prob}.
%\vspace{-0.1cm}
\section{Structure Based Algorithm Design}
\label{structure}  
In this section, we investigate the structure of the optimal policy for the constrained average-AoI belief MDP in \eqref{ori_prob} and propose a structure-aware algorithm. 
\subsection{Structure of Constrained Average-AoI Optimal Policy}
\label{threshold_type_policy}
\subsubsection{Main results}
To explore the structure, we first show that there exists a stationary deterministic \emph{threshold-type} scheduling policy that solves the unconstrained average cost belief MDP in \eqref{unconstrained_avg}.   
\begin{theorem}
\label{thre_avg}
	Given $\lambda$, there exists a stationary deterministic unconstrained average cost optimal policy that is of threshold-type in belief. Specifically, \eqref{unconstrained_avg} can be minimized by a policy of the form $\pi_\lambda^\star=(d_\lambda^\star,d_\lambda^\star,\cdots)$, where 
\begin{align}
\label{avg_opt_policy}
 &d_\lambda^\star(\Delta,k,\omega)=
  \begin{cases}%{l l}  
  0 & \text{if}\, \, 0\leq \omega<\omega^\star(\Delta,k;\lambda), \\
  1 & \text{if} \, \,\omega^\star(\Delta,k;\lambda)\leq \omega,
  \end{cases} %\right.		
\end{align}
 where $\omega^\star(\Delta,k;\lambda)$ denotes the threshold given pair of AoI and relative slot index $(\Delta, k)$ and Lagrange multiplier $\lambda$.
\end{theorem}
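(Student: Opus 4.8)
The plan is to prove the statement for the unconstrained problem \eqref{unconstrained_avg} by first solving a $\beta$-discounted surrogate, extracting the belief-threshold structure there via a convex-analytic single-crossing argument, and then transferring it to the average-cost criterion through a vanishing-discount limit. Fix $\lambda$ and let $V_\beta(\Delta,k,\omega)$ denote the optimal $\beta$-discounted cost, the fixed point of the value-iteration operator whose Bellman equation reads $V_\beta(\Delta,k,\omega)=\Delta+\min\{Q_0(\omega),Q_1(\omega)\}$, where the suspend factor is $Q_0(\omega)=\beta V_\beta(\Delta+1,(k)_+,\mathcal{T}(\omega))$ and the transmit factor is $Q_1(\omega)=\lambda+\beta[\omega V_\beta(k,(k)_+,p_{11})+(1-\omega)V_\beta(\Delta+1,(k)_+,p_{01})]$, both read off from \eqref{omega} and \eqref{transition_prob}.

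First I would show, by induction on the value-iteration index, that for every fixed $(\Delta,k)$ the map $\omega\mapsto V_\beta(\Delta,k,\omega)$ is concave and non-increasing, and that $V_\beta$ is non-decreasing in $\Delta$. The base case is immediate. In the inductive step the suspend factor $Q_0$ is concave and non-increasing, since it composes the (inductively) concave non-increasing value with the affine increasing belief update $\mathcal{T}(\omega)=p_{01}+\mu\omega$, while the transmit factor $Q_1$ is affine in $\omega$; because the observation law is linear in $\omega$, the pointwise minimum of these two concave functions is again concave and non-increasing, and adding $\Delta$ preserves monotonicity in AoI. I note that verifying $Q_1$ is non-increasing already forces the cross-AoI comparison $V_\beta(k,\cdot,p_{11})\le V_\beta(\Delta+1,\cdot,p_{01})$, which follows from $k\le\Delta+1$ together with $p_{11}\ge p_{01}$ and the two monotonicities just established.

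Next I would extract the threshold. Define the advantage $h(\omega)\triangleq Q_0(\omega)-Q_1(\omega)$, so transmitting is optimal exactly on $\{h\ge 0\}$. Evaluating at $\omega=0$ and using $\mathcal{T}(0)=p_{01}$ gives $h(0)=-\lambda<0$, so suspension is strictly optimal at the worst belief and any threshold is strictly positive. The decisive point is that $\{h\ge 0\}$ is an \emph{upper} interval rather than an interior one: since $Q_1$ is affine with slope $\beta[V_\beta(k,\cdot,p_{11})-V_\beta(\Delta+1,\cdot,p_{01})]$ and $Q_0$ is concave with its $\omega$-slope damped by the contraction factor $\mu<1$ (its derivative is $\beta\mu$ times a slope of the concave value), I would compare the two slopes to prove that $h$ is non-decreasing, i.e. suspension loses its edge monotonically as the channel looks better. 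Concavity of $h$, the fact $h(0)<0$, and this single-crossing property then yield a unique $\omega^\star(\Delta,k;\lambda)$ with $h<0$ below it and $h\ge 0$ above, which is exactly \eqref{avg_opt_policy}.

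Finally I would transfer the structure to the average-cost problem \eqref{unconstrained_avg} by a vanishing-discount argument: verifying the standard conditions (a uniform bound on the relative cost $V_\beta(\mathbf{s})-V_\beta(\mathbf{s}_0)$ supplied by a tractable reference policy, together with the at-most-linear growth of AoI) so that along $\beta\uparrow 1$ the relative value functions converge to a solution of the average-cost optimality equation, whose minimizing stationary deterministic policy inherits concavity, monotonicity, and hence the belief-threshold form in the limit. I expect the main obstacle to be exactly where the unbounded, AoI-dependent one-stage cost bites: (i) the slope comparison proving monotonicity of $h$ must control value functions evaluated at \emph{different and unbounded} AoI levels $k$ and $\Delta+1$, unlike classical POMDP threshold proofs whose per-stage cost is bounded and state-independent; and (ii) justifying the vanishing-discount passage requires the uniform relative-value bounds despite the infinite state space and unbounded cost.
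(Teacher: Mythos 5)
Your overall architecture---a $\beta$-discounted surrogate, induction along value iteration to get concavity and the two monotonicities, then a vanishing-discount passage under Sennott-type conditions---is exactly the paper's. The genuine gap is in the step you yourself flag as decisive: the claim that the advantage $h(\omega)=Q_0(\omega)-Q_1(\omega)$ is non-decreasing. Concavity of $Q_0$, affineness of $Q_1$, and $h(0)=-\lambda<0$ only tell you that the transmit region $\{h\ge 0\}$ is an interval excluding $\omega=0$; they do not rule out an \emph{interior} interval (transmit only at middling beliefs), which is precisely what destroys threshold structure. Your proposed fix---that $Q_0$'s $\omega$-slope is ``damped by $\mu$'' so $h'\ge 0$---does not close as stated: $|Q_0'(\omega)|=\beta\mu\,\bigl|\partial_\omega V^\beta(\Delta+1,(k)_+,\mathcal{T}(\omega))\bigr|$, and the belief-slope of the value at the \emph{next} state is not bounded from above by concavity; unrolling the recursion, it is governed by gaps of the form $\beta\bigl(V^\beta(\Delta+1+m,\cdot,p_{01})-V^\beta(k^{(m)},\cdot,p_{11})\bigr)$ at strictly larger AoI, which grow with $m$ while being damped only geometrically by $(\beta\mu)^m$. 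Making your comparison rigorous therefore requires a new quantitative lemma of the type $\mu\beta\bigl[V^\beta(\Delta+2,\cdot,p_{01})-V^\beta((k)_+,\cdot,p_{11})\bigr]\le V^\beta(\Delta+1,\cdot,p_{01})-V^\beta(k,\cdot,p_{11})$, with its own induction; you neither state nor prove it, and this is exactly where the unbounded, AoI-dependent one-stage cost bites.

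The paper avoids any slope bound by introducing and propagating a different invariant, property (c) of Lemma \ref{property} (inequality \eqref{prop 2}): the concavity defect of $V^\beta$ in the belief is at most $(1-\omega)\lambda$. With it, the threshold step becomes a two-case analysis at $\omega=1$: if $Q_1(1)<Q_0(1)$, convexity of $Q_1-Q_0$ together with the endpoint signs forces a unique crossing, hence an upper interval; if $Q_1(1)\ge Q_0(1)$, property (c) shows $Q_1(\omega)\ge Q_0(\omega)$ for \emph{every} $\omega$, killing the interior-interval scenario. Note also that this invariant is load-bearing in a second place your sketch glosses over: the monotonicity-in-AoI induction is not just ``adding $\Delta$,'' because it must compare transmitting at age $\Delta'\ge K$ against the \emph{forced suspension} at a post-delivery state with $\Delta=k-1<K$, and that cross-action comparison needs $\lambda+\beta\bigl[\omega V^\beta(k,\cdot,p_{11})+(1-\omega)V^\beta(k,\cdot,p_{01})\bigr]\ge\beta V^\beta(k,\cdot,\mathcal{T}(\omega))$---an inequality in the \emph{opposite} direction of concavity, supplied precisely by the $\lambda$-slack in \eqref{prop 2}. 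Either prove your slope-domination lemma or adopt the paper's invariant; as written, both the single-crossing claim and the AoI-monotonicity step are unsupported.
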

\begin{proof}
Please see Section \ref{analysis}.
\end{proof}

\noindent Note that the techniques in papers dealing with threshold property in POMDP \cite{lovejoy1987some, albright1979structural,laourine2010betting,liu2010indexability,abad2017channel} cannot be applied to our problem. This is because, given hidden state and action, the one-stage cost in these papers is \emph{constant and bounded}, while the one-stage cost in our paper depends on \emph{varying and unbounded} AoI. Next, we show that the optimal policy for the original problem \eqref{ori_prob} is a mixture of no more than two stationary deterministic threshold-type policies.
\begin{corollary}
\label{opt_policy_sturcture}
	There exists a stationary randomized policy $\pi^\star$ that is the optimal solution to the constrained average-AoI belief MDP in \eqref{ori_prob}, where $\pi^\star$ is a randomized mixture of threshold-type policies as follows:
\begin{align}
	\pi^\star=q \pi_{\lambda_1}^\star+(1-q) \pi_{\lambda_2}^\star,
	\label{randomized_mixture}
\end{align}
where $q \in [0,1]$ is a randomization factor, and $\pi_{\lambda_1}^\star$ and $\pi_{\lambda_2}^\star$ are the optimal threshold-type policies \eqref{avg_opt_policy} for some Lagrange multipliers $\lambda_1$ and $\lambda_2$, respectively.
\end{corollary}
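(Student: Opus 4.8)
The plan is to combine the strong-duality relation established in the preceding Corollary with the threshold structure of Theorem~\ref{thre_avg}, and then invoke the classical single-constraint argument for constrained average-cost MDPs \cite{altman1999constrained} to turn the Lagrangian-optimal threshold policies into a feasible, optimal randomized mixture. First I would record that $\bar{L}^*(\lambda)=\min_\pi[\bar{A}(\pi)+\lambda\bar{E}(\pi)]$ is a pointwise minimum of functions that are affine in $\lambda$, hence concave on $[0,\infty)$. Consequently $g(\lambda)\triangleq\bar{L}^*(\lambda)-\lambda E_{\text{max}}$ is concave, and by the preceding Corollary $\bar{A}^*=\sup_{\lambda\geq 0}g(\lambda)$. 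I would let $\lambda^\star$ attain this supremum; existence follows from concavity together with the coercivity of $g$, which in turn rests on the finiteness of the sublevel sets $G(r)$ already shown in the proof of that Corollary.

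Next I would characterize the superdifferential of $\bar{L}^*$ at $\lambda^\star$ by an envelope argument. For each $\lambda$, differentiating the affine selector $\bar{A}(\pi_\lambda^\star)+\lambda\bar{E}(\pi_\lambda^\star)$ shows that $\bar{E}(\pi_\lambda^\star)$ is a supergradient of $\bar{L}^*$ at $\lambda$, and that $\lambda\mapsto\bar{E}(\pi_\lambda^\star)$ is non-increasing (a larger transmission penalty cannot raise the optimal transmission rate). Concavity then makes $\partial\bar{L}^*(\lambda^\star)$ an interval $[\bar{E}_+,\bar{E}_-]$ whose endpoints are the right- and left-limits of $\bar{E}(\pi_\lambda^\star)$ as $\lambda\to\lambda^{\star+}$ and $\lambda\to\lambda^{\star-}$; by Theorem~\ref{thre_avg} each limiting policy is again threshold-type and is Lagrangian-optimal at $\lambda^\star$. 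Call these $\pi_{\lambda_2}^\star$ and $\pi_{\lambda_1}^\star$, respectively. The interior optimality condition $0\in\partial g(\lambda^\star)$ then reads $E_{\text{max}}\in[\bar{E}_+,\bar{E}_-]$, i.e.\ $\bar{E}(\pi_{\lambda_2}^\star)\le E_{\text{max}}\le\bar{E}(\pi_{\lambda_1}^\star)$.

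I would then construct the mixture. Choose $q\in[0,1]$ by linear interpolation so that $q\,\bar{E}(\pi_{\lambda_1}^\star)+(1-q)\,\bar{E}(\pi_{\lambda_2}^\star)=E_{\text{max}}$, which is feasible precisely because $E_{\text{max}}$ lies between the two energies. Under the unichain assumption, realizing $\pi^\star=q\,\pi_{\lambda_1}^\star+(1-q)\,\pi_{\lambda_2}^\star$ as a single initial coin flip that fixes one stationary policy for all time makes the long-run averages of both AoI and energy equal to the corresponding convex combinations; hence $\bar{E}(\pi^\star)=E_{\text{max}}$, so $\pi^\star$ is feasible. Since both component policies are Lagrangian-optimal at $\lambda^\star$, we have $\bar{A}(\pi_{\lambda_i}^\star)+\lambda^\star\bar{E}(\pi_{\lambda_i}^\star)=\bar{L}^*(\lambda^\star)$ for $i\in\{1,2\}$, and therefore
\[
\bar{A}(\pi^\star)=q\,\bar{A}(\pi_{\lambda_1}^\star)+(1-q)\,\bar{A}(\pi_{\lambda_2}^\star)=\bar{L}^*(\lambda^\star)-\lambda^\star E_{\text{max}}=g(\lambda^\star)=\bar{A}^*,
\]
which proves optimality. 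The two degenerate cases fold in naturally: if the supremum is attained at $\lambda^\star=0$ the constraint is inactive and $\pi_0^\star$ alone is optimal ($q\in\{0,1\}$), and if some single $\lambda^\star$ already yields $\bar{E}(\pi_{\lambda^\star}^\star)=E_{\text{max}}$ then that one threshold policy is optimal.

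The hard part will be making the infinite-state, unbounded-AoI setting rigorous. The envelope identity for $\partial\bar{L}^*$ and the interchange ``average cost of a mixture $=$ mixture of average costs'' are routine for finite models but need justification here; I would lean on the unichain assumption, so that each stationary threshold policy induces a well-defined, initial-condition-independent average cost, and on the finiteness of $G(r)$ from the preceding Corollary, which both guarantees coercivity of $g$ (hence attainment of $\lambda^\star$) and legitimizes the constrained-MDP strong-duality machinery of \cite{altman1999constrained}. A final point requiring care is confirming that the two extremal Lagrangian-optimal policies are themselves threshold-type rather than arbitrary optimizers; this is exactly what the one-sided limits in $\lambda$ preserve by Theorem~\ref{thre_avg}.
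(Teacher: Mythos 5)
Your proof takes a genuinely different route from the paper's. The paper's own argument is two lines long: it exhibits an explicit feasible policy (transmit at the start of every $\left\lceil \frac{1}{KE_{\text{max}}}\right\rceil$ frames), notes the unichain assumption, and invokes Theorem 4.4 of \cite{altman1999constrained}, which packages precisely the statement that a singly-constrained average-cost MDP admits an optimal policy that is a randomized mixture of at most two stationary deterministic Lagrangian-optimal policies; the threshold form of those two policies is then immediate from Theorem \ref{thre_avg}. You instead re-derive that packaged theorem from first principles: concavity of the dual, the supergradient selection $\bar{E}(\pi_\lambda^\star)$, one-sided limits at the optimal multiplier $\lambda^\star$, and interpolation of the energy values to pin down $q$. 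This is the classical Beutler--Ross/Sennott construction; it is self-contained, and it has the side benefit of explaining the paper's own algorithmic recipe in Section \ref{lagrange} (the bisection on $\lambda$ and the formula for $q$), which the citation-based proof does not.

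However, two steps are genuinely unsupported as written. First, you attribute coercivity of $g(\lambda)=\bar{L}^*(\lambda)-\lambda E_{\text{max}}$, and hence attainment of $\lambda^\star$, to finiteness of the sets $G(r)$. That condition validates the duality identity of the preceding corollary but says nothing about the behavior of $g$ as $\lambda\to\infty$. Coercivity requires a Slater-type policy, i.e., one with finite average AoI and average energy \emph{strictly} below $E_{\text{max}}$ --- essentially the explicit periodic policy the paper constructs in its feasibility check (made strict by transmitting less often). Without exhibiting such a policy, your dual maximizer need not exist, so the feasibility step you skipped is not optional. Second, and more seriously, defining $\pi_{\lambda_1}^\star,\pi_{\lambda_2}^\star$ as ``limiting policies'' as $\lambda\to\lambda^{\star\pm}$ presumes (i) that such limits of threshold policies exist, (ii) that the limits are Lagrangian-optimal at $\lambda^\star$, and (iii) that their average AoI and energy equal the one-sided limits $\bar{E}_\pm$. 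Here the state space is countably infinite and the one-stage cost (AoI) is unbounded, so (i)--(iii) require a diagonal compactness extraction over states plus a continuity argument for long-run averages of unbounded costs along policy limits; none of this follows from the unichain assumption alone, and limits of average costs can genuinely fail to equal average costs of limit policies in such models. This is exactly the technical content of the results you are bypassing (Theorem 4.4 of \cite{altman1999constrained}; see also \cite{sennott1993constrained}), so as written your argument leaves open precisely the step those theorems exist to settle. The skeleton is sound and could be completed along the lines of \cite{sennott1993constrained}, but in its current form it is an outline of that proof rather than a proof.
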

\begin{proof}
	Note that a stationary policy that transmits at the beginning of every $\left \lceil{\frac{1}{KE_{\text{max}}}}\right \rceil$ frames satisfies energy constraint, where $\lceil \cdot\rceil$ is the ceil function. Thus, the problem \eqref{ori_prob} is feasible. Together with our unichain assumption, the result follows from Theorem 4.4 in \cite{altman1999constrained}.
\end{proof}
The method to determine $\lambda_1$, $\lambda_2$ and $q$ will be discussed in Section \ref{lagrange}.

\subsubsection{Proof of Theorem \ref{thre_avg}}
\label{analysis}
%A method for studying average cost MDPs is to relate them to discounted cost MDPs. In the proof, 
We prove Theorem \ref{thre_avg} in two steps: (i) address an unconstrained discounted cost belief MDP; (ii) relate it to the unconstrained average cost belief MDP. In particular, we show that the optimal policy for the unconstrained discounted cost belief MDP is of threshold-type in $\omega$, which implies that the optimal policy for the unconstrained average cost belief MDP is of threshold-type in $\omega$

Given an initial state $\mathbf{s}$, the total expected discounted Lagrangian cost under policy $\pi$ is given by
\begin{align}
\vspace{-0.2cm}
	&L_{\mathbf{s}}^\beta(\pi;\lambda)= \limsup_{T \rightarrow \infty} \mathbb{E_\pi}\big[\sum_{t=1}^T \beta^{t-1}C(\mathbf{s}_{t},u_{t};\lambda)|\mathbf{s}\big],
	\label{lag_disc_cost}
\end{align}
where $\beta \in (0, 1)$ is a discount factor. The optimization problem of minimizing the total expected discounted Lagrangian cost can be cast as

\textit{Problem 3 (Unconstrained discounted cost belief MDP):}
\begin{equation}
	V^\beta(\mathbf{s})\triangleq\min_{\pi} \ L_{\mathbf{s}}^\beta(\pi;\lambda),
	\label{disc_cost_opt}
\end{equation}
where $V^\beta(\mathbf{s})$ denotes the optimal total expected $\beta$-discounted Lagrangian cost (for convenience, we omit $\lambda$ in notation $V^\beta(\mathbf{s})$).

A policy is said to be \emph{$\beta$-discounted cost optimal} if it minimizes the total expected $\beta$-discounted Lagrangian cost. 
%In particular, we establish the existence of a stationary deterministic policy for the unconstrained discounted belief MDP in \eqref{disc_cost_opt} and then study the structure of the optimal policy.                                                                                                                    In the end, we extend the results obtained from the unconstrained \textit{discounted} belief MDP to the unconstrained \textit{average}-cost MDP in \eqref{unconstrained_avg}.
In Proposition \ref{existence_discount}, we introduce the optimality equation of $V^\beta(\mathbf{s})$.
\begin{proposition}
\label{existence_discount}
	\noindent	(a) The optimal total expected $\beta$-discounted Lagrangian cost $V^\beta(\Delta,k,\omega)$ satisfies the optimality equation as follows:	
	\begin{align}
		V^\beta\left(\Delta,k,\omega\right)=\min_{u\in \{0,1\}}  Q^\beta\left(\Delta,k,\omega;u\right), \label{opt_equ_disc}
		\vspace{-0.3cm}
	\end{align}
	\vspace{-0.2cm}	
	where
	\begin{align}
		Q^\beta\left(\Delta,k,\omega;0\right)=&\Delta+\beta V^\beta\left(\Delta+1,\left(k\right)_{+},\mathcal{T}\left(\omega\right)\right);\label{q1_}\\
		%\displaybreak
		Q^\beta\left(\Delta,k,\omega;1\right)=&\Delta+\lambda
		+\beta \Big(\omega V^\beta\left(k, \left(k\right)_{+},p_{11}\right)\notag\\+&(1-\omega)V^\beta\left(\Delta+1,\left(k\right)_{+},p_{01}\right)\Big).\label{q2_}
		\vspace{-0.2cm}
	\end{align}
	%\vspace{-0.1cm}
\noindent(b) A stationary deterministic policy determined by the right-hand-side of \eqref{opt_equ_disc} is $\beta$-discounted cost optimal.% is $\beta$-discounted optimal. 
	
\noindent (c) Let $V_n^\beta(\mathbf{s})$ be the cost-to-go function such that $V_0^\beta(\mathbf{s})\!=\!0$, for all $\mathbf{s}\in\mathcal{S}$ and for $n\geq 0$,
		\begin{align}
		V_{n+1}^\beta(\Delta,k,\omega)=\min_{u\in \{0,1\}} Q_{n+1}^\beta(\Delta,k,\omega;u),
		\label{iteration}
	    \end{align}
	    where
	    \vspace{-0.2cm}
	    \begin{align}
		Q_{n+1}^\beta\left(\Delta,k,\omega;0\right)=&\Delta+\beta V_{n}^\beta\left(\Delta+1,\left(k\right)_{+},\mathcal{T}\left(\omega\right)\right);\label{q1}\\
		Q_{n+1}^\beta\left(\Delta,k,\omega;1\right)=&\Delta+\lambda+\beta \Big(\omega V_{n}^\beta\left(k, \left(k\right)_{+},p_{11}\right)\notag\\+&(1-\omega)V_{n}^\beta
		\left(\Delta+1,\left(k\right)_{+},p_{01}\right)\Big).\label{q2}
		\vspace{-0.2cm}
	    \end{align}
	    Then, we have $V_n^\beta(\mathbf{s}) \rightarrow V^\beta(\mathbf{s})$ as $n\rightarrow \infty$, for every $\mathbf{s}$, $\beta$.
\end{proposition}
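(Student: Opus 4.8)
The plan is to treat Problem~3 in \eqref{disc_cost_opt} as a discounted Markov decision problem with \emph{nonnegative but unbounded} one‑stage costs, and to establish (a)--(c) together by means of monotone value iteration rather than a contraction argument. First I would record the single structural fact that drives everything despite the unbounded state and cost: since AoI increases by at most one per slot, under any policy $\Delta_t \le \Delta + (t-1)$ when $\Delta_1=\Delta$, so the instantaneous Lagrangian cost obeys $C(\mathbf{s}_t,u_t;\lambda)=\Delta_t+\lambda u_t \le \Delta+(t-1)+\lambda$ for $\lambda\ge 0$. Substituting into \eqref{lag_disc_cost} and using $\sum_{t\ge 1}\beta^{t-1}(t-1)<\infty$ yields a finite, policy‑independent upper bound, so $V^\beta(\mathbf{s})<\infty$ and, moreover, $V^\beta$ grows at most \emph{linearly} in $\Delta$. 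This linear‑growth estimate is the lever that later makes the discounted tail vanish.

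Next I would analyze the value‑iteration sequence $\{V_n^\beta\}$ defined in \eqref{iteration}--\eqref{q2}. Because the one‑stage cost is nonnegative and $V_0^\beta\equiv 0$, the dynamic‑programming operator in \eqref{iteration} is monotone and preserves nonnegativity, so a straightforward induction gives $0=V_0^\beta\le V_1^\beta\le V_2^\beta\le\cdots$. Since each $V_n^\beta$ is the optimal cost of the $n$‑horizon truncation of Problem~3, we also have $V_n^\beta\le V^\beta$. Hence $V_n^\beta$ increases pointwise to a limit $V_\infty^\beta\le V^\beta$, which already gives the convergence asserted in part (c), provided I can identify $V_\infty^\beta$ with $V^\beta$.

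To obtain the optimality equation of part (a), I would pass to the limit $n\to\infty$ in \eqref{iteration}--\eqref{q2}. The inner minimization is over the finite set $\{0,1\}$ and the averaging over observations is only the two‑term combination $\omega(\cdot)+(1-\omega)(\cdot)$, so interchanging the limit with both $\min$ and this finite sum is justified by monotone convergence; this shows $V_\infty^\beta$ satisfies \eqref{opt_equ_disc}--\eqref{q2_}. For the remaining inequality $V_\infty^\beta\ge V^\beta$, let $d^\star$ be a stationary policy that selects a minimizing action on the right‑hand side of the fixed‑point equation for $V_\infty^\beta$. Unrolling that equation $N$ times along the trajectory generated by $d^\star$ gives $V_\infty^\beta(\mathbf{s})=\mathbb{E}\big[\sum_{t=1}^{N}\beta^{t-1}C(\mathbf{s}_t,u_t;\lambda)\mid\mathbf{s}\big]+\beta^{N}\mathbb{E}\big[V_\infty^\beta(\mathbf{s}_{N+1})\mid\mathbf{s}\big]$. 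The linear‑growth bound from the first step controls $\mathbb{E}[V_\infty^\beta(\mathbf{s}_{N+1})]$ by $O(N)$, so the tail $\beta^{N}O(N)\to 0$; letting $N\to\infty$ identifies $V_\infty^\beta(\mathbf{s})$ with the discounted cost of $d^\star$, whence $V_\infty^\beta(\mathbf{s})\ge V^\beta(\mathbf{s})$. Combined with the previous paragraph this yields $V_\infty^\beta=V^\beta$, completing (a) and (c) and simultaneously exhibiting $d^\star$ as a $\beta$‑discounted optimal stationary deterministic policy, which is exactly part (b).

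The main obstacle is the unboundedness of the cost: it destroys the usual sup‑norm contraction proof, since the dynamic‑programming operator is not a contraction and a Banach fixed‑point theorem cannot be invoked directly. The balance between the \emph{linear} growth of AoI and the \emph{geometric} discount factor (the finiteness and tail steps above) is precisely what replaces the contraction property; once it is secured, the monotonicity of nonnegative‑cost value iteration and the vanishing discounted tail carry the argument, while the limit‑interchange issues are benign because both the action set and the observation support are finite.
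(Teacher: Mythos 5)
Your proposal is correct, but it takes a genuinely different route from the paper. The paper's proof is a one-condition verification: it invokes the discounted-cost theory of \cite{sennott1989average} for countable-state MDPs with nonnegative unbounded costs, under which (a)--(c) hold as soon as \emph{some} stationary policy has finite discounted cost from every state; the paper exhibits the always-suspend policy and computes $\sum_{n\geq 0}\beta^{n}(\Delta+n)=\Delta/(1-\beta)+\beta/(1-\beta)^{2}<\infty$. You instead re-derive that theory from first principles: your policy-independent bound $C(\mathbf{s}_t,u_t;\lambda)\leq \Delta+(t-1)+\lambda$ (valid because AoI increases by at most one per slot, as $k_t\leq \Delta_t+1$) gives finiteness and linear growth of $V^{\beta}$ in $\Delta$; monotone value iteration plus the finite-action limit interchange yields a fixed point $V^{\beta}_{\infty}\leq V^{\beta}$; and the $N$-step unrolling with the vanishing tail $\beta^{N}O(N)\to 0$ identifies $V^{\beta}_{\infty}$ with the cost of the min-attaining stationary policy, delivering (a)--(c) simultaneously. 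In substance, your steps are the proof of the theorem the paper cites, so your approach buys self-containedness and makes explicit \emph{why} unboundedness is harmless here (linear AoI growth versus geometric discounting), whereas the paper buys brevity by outsourcing the monotone-convergence machinery and only checking the finiteness hypothesis. One small simplification you could make: in your last step, nonnegativity of $V^{\beta}_{\infty}$ already lets you drop the tail term to get $V^{\beta}_{\infty}(\mathbf{s})\geq \mathbb{E}\big[\sum_{t=1}^{N}\beta^{t-1}C(\mathbf{s}_t,u_t;\lambda)\mid\mathbf{s}\big]$ for every $N$, hence $V^{\beta}_{\infty}\geq L^{\beta}_{\mathbf{s}}(d^{\star};\lambda)\geq V^{\beta}$, and the sandwich with $V^{\beta}_{\infty}\leq V^{\beta}$ then gives both equality and optimality of $d^{\star}$; the explicit $\beta^{N}O(N)\to 0$ estimate is thus not strictly needed, and the linear-growth bound is only required for finiteness.
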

\begin{proof}
%Since we have bounded cost, stationary costs, stationary transition probabilities, finite actions and countable state space, the proposition can be directly obtained by Chapter 6 of \cite{puterman2014markov}. 
According to \cite{sennott1989average}, it suffices to show that there exists a stationary deterministic policy $f$ such that for all $\beta, \mathbf{s}$, we have $L_{\mathbf{s}}^\beta(f;\lambda)\!<\!\!\infty$. Let $f$ be a policy that chooses $u=0$ for every time slot. For any initial state $\mathbf{s}_1=(\Delta,t,\omega)$ under this policy, we have
\begin{align}
	L_{\mathbf{s}_1}^\beta(f;\lambda)&= \limsup_{T \rightarrow \infty} \mathbb{E}_f\big[\sum_{t=1}^T \beta^{t-1}C(\mathbf{s}_{t},0;\lambda)|\mathbf{s}_{1}\big]\notag\\
	&=\sum_{n=0}^{\infty}\beta^n(\Delta+n)\notag\\
	&=\frac{\Delta}{1-\beta}+\frac{\beta}{(1-\beta)^2}<\infty.\notag
\end{align}
\end{proof}

Using (c) in Proposition \ref{existence_discount}, we show properties of $V^\beta$ in Lemma \ref{property}.% Without loss of generality, we extend the belief state space to $[0,1]$ and show properties in Lemma \ref{property} hold for this belief space.
%In this lemma, some transient states like $(\cdot,\cdot,0)$ are used. That's why they are included in the state space $\mathcal{S}$. Note that adding these transient states will not change the optimal value. This is because the average cost until the transient states enter state space of recurrent states approaches 0, as the number of slots goes to infinity.
\begin{lemma}
\label{property}
   If $p_{11}\geq p_{01}$, then the value function $V^\beta$ has the following properties:
   
\noindent(a) $V^\beta(\Delta, k, \omega)$ is non-decreasing with regard to age $\Delta$.

\noindent(b) $V^\beta(\Delta, k, \omega)$ is non-increasing with regard to belief $\omega$.

\noindent(c) For beliefs $x, y, z,\omega$ that satisfy $z=\omega x+(1-\omega)y$ and $x\geq y$, we have 
\begin{equation}
\!(\!1\!-\!\omega)\lambda\!+\omega V^\beta(\Delta, k, x)\!+\!(1\!-\!\omega)V^\beta(\Delta, k, y)\!\geq \!\!V^\beta(\Delta, k, z)\!.\label{prop 2}
\end{equation}
%where $x\geq y$ and $z=\omega x+(1-\omega)y$.
%where $x=\mathcal{T}^m(p_{11})$, $y=\mathcal{T}^m(p_{01})$ and $z=\mathcal{T}^{m+1}(\omega)$, for $m\geq 0$, $\omega \in \{0,1,\mathcal{T}^j(p_{01}),\mathcal{T}^j(p_{11}),j\geq 0\}$. 

\noindent(d) The optimal policy corresponding to $V^\beta$ is of a threshold-type in $\omega$, i.e. given $\Delta$, $k$, there exists a threshold $\omega_{\beta}^*(\Delta, k;\lambda)$ such that it is optimal to transmit only when $\omega\geq\omega_{\beta}^*(\Delta, k;\lambda)$.
\end{lemma}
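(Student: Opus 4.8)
The plan is to prove all four parts by a single induction on the value-iteration functions $V_n^\beta$ supplied by Proposition \ref{existence_discount}(c), establishing each structural property for $V_n^\beta$ and then passing to the limit $V_n^\beta\to V^\beta$; since pointwise limits preserve monotonicity and the (in)equalities in (a)--(d), every property transfers to $V^\beta$. The base case is immediate because $V_0^\beta\equiv 0$ trivially satisfies (a), (b) and (c) (for (c) the left side is $(1-\omega)\lambda\ge 0$). In the inductive step I would treat the two Q-functions in \eqref{q1}--\eqref{q2} separately and then argue that the pointwise minimum in \eqref{iteration} inherits the property, using the structural features that $\mathcal{T}(\omega)=\omega p_{11}+(1-\omega)p_{01}$ is affine and, because $p_{11}\ge p_{01}$, non-decreasing in $\omega$, and that $Q_{n+1}^\beta(\cdot;1)$ is affine in $\omega$.

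Parts (a) and (b) I would run jointly. Monotonicity in $\Delta$ is routine: in both \eqref{q1} and \eqref{q2} the age enters through an explicit $+\Delta$ term and through $V_n^\beta(\Delta+1,\cdot,\cdot)$, both non-decreasing by hypothesis, and the minimum of non-decreasing functions is non-decreasing. For (b), $Q_{n+1}^\beta(\cdot;0)$ is non-increasing in $\omega$ since $\mathcal{T}$ is non-decreasing and $V_n^\beta$ is non-increasing in belief; for $Q_{n+1}^\beta(\cdot;1)$, which is affine in $\omega$ with slope $\beta\bigl(V_n^\beta(k,(k)_+,p_{11})-V_n^\beta(\Delta+1,(k)_+,p_{01})\bigr)$, the key sub-step is the sign of this slope, i.e.\ the relation $V_n^\beta(k,(k)_+,p_{11})\le V_n^\beta(\Delta+1,(k)_+,p_{01})$. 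This follows from (a) (a successful transmission lands on age $k\le\Delta+1$) together with (b) (since $p_{11}\ge p_{01}$), so it is available within the joint induction.

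Part (c) is the heart of the argument. Because $Q_{n+1}^\beta(\cdot;1)$ is affine in the belief it reproduces the convex combination exactly, while for $Q_{n+1}^\beta(\cdot;0)$ the affineness of $\mathcal{T}$ lets me apply the hypothesis (c) at age $\Delta+1$ to $\mathcal{T}(x)\ge\mathcal{T}(y)$, with the factor $\beta<1$ absorbing the resulting $\beta(1-\omega)\lambda$ slack into the required $(1-\omega)\lambda$. When the minimizing actions at $x$ and $y$ coincide, a single action bound closes the inequality after taking the minimum. The main obstacle is the \emph{mixed-action} case, where transmitting is optimal at the larger belief $x$ but suspending is optimal at the smaller belief $y$; here no single action bound suffices, and the minimum of two Q-functions does not in general preserve a convexity-type inequality. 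I would handle it by first establishing the threshold structure of the stage-$(n+1)$ minimizer (so that only this one configuration can arise) and then invoking an auxiliary dominance property, namely that transmitting never yields a larger continuation cost than suspending, $\omega V_n^\beta(k,(k)_+,p_{11})+(1-\omega)V_n^\beta(\Delta+1,(k)_+,p_{01})\le V_n^\beta(\Delta+1,(k)_+,\mathcal{T}(\omega))$, carried as an extra invariant (it encodes that transmitting is more informative and may reset the age, and reduces to concavity of $V_n^\beta$ in the belief). This dominance is precisely what forces $Q_{n+1}^\beta(z;1)$ below the target bound in the mixed case.

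Finally, (d) is a consequence of (a)--(c). I would show that the transmit-optimal set $\{\omega:\,Q^\beta(\Delta,k,\omega;1)\le Q^\beta(\Delta,k,\omega;0)\}$ is an up-set of the form $[\omega_\beta^\star(\Delta,k;\lambda),1]$. Assuming transmitting is optimal at some $\omega_0$, for any $\omega_1>\omega_0$ I would write $\mathcal{T}(\omega_0)=\sigma\,\mathcal{T}(\omega_1)+(1-\sigma)p_{01}$ with $\sigma=\omega_0/\omega_1\in(0,1)$, apply (c) at age $\Delta+1$ to obtain a lower bound on $V^\beta(\Delta+1,(k)_+,\mathcal{T}(\omega_1))$, substitute the optimality hypothesis at $\omega_0$, and simplify, whereupon the whole inequality collapses to $(1-\sigma)(1-\beta)\ge 0$, which holds since $\sigma<1$ and $\beta<1$. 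I expect the mixed-action step in (c)---equivalently, securing the dominance/concavity invariant---to be the principal difficulty of the whole lemma.
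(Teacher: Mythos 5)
Your proposal is correct, and its skeleton coincides with the paper's proof: the same induction on the value-iteration sequence $V_n^\beta$ from Proposition \ref{existence_discount}(c), the same ordering (threshold property at stage $n+1$ established first from (a)--(c) at stage $n$, then (a)--(c) at stage $n+1$), the same slope-sign fact $V_n^\beta(k,(k)_+,p_{11})\le V_n^\beta(\Delta+1,(k)_+,p_{01})$ for part (b), the same three-case analysis for part (c), and the same resolution of the mixed-action case: your ``dominance invariant'' is exactly concavity of $V_n^\beta$ in the belief combined with monotonicity (a), which is what the paper invokes (citing the piecewise-linear-concave structure from \cite{smallwood1973optimal}) in the chain \eqref{i1_16}--\eqref{i1_17}. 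The one genuine divergence is part (d). The paper argues geometrically: $Q_{n+1}^\beta(\cdot;1)$ is affine in $\omega$, $Q_{n+1}^\beta(\cdot;0)$ is concave (again via the PWLC property), and $Q_{n+1}^\beta(\Delta,k,0;1)\ge Q_{n+1}^\beta(\Delta,k,0;0)$, so the two curves cross at most once, with the degenerate case $Q_{n+1}^\beta(\Delta,k,1;1)\ge Q_{n+1}^\beta(\Delta,k,1;0)$ handled separately through property (c). You instead prove the transmit-optimal set is an up-set directly, writing $\mathcal{T}(\omega_0)=\sigma\mathcal{T}(\omega_1)+(1-\sigma)p_{01}$ with $\sigma=\omega_0/\omega_1$ and feeding this into (c); your computation indeed collapses to $\frac{\lambda}{\sigma}(1-\sigma)(1-\beta)\ge 0$, and inside the induction it only consumes (c) at stage $n$, so there is no circularity. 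This algebraic route is a clean, self-contained alternative that bypasses the concavity-of-$Q(\cdot;0)$ step entirely (it needs only property (c)), at the cost of a small edge case at $\omega_0=0$ (where $\sigma$ is undefined; but transmitting can be optimal at $\omega_0=0$ only if $\lambda=0$, in which case the dominance invariant makes transmission optimal everywhere). Both routes ultimately rest on the same structural facts, so the choice is one of packaging: the paper leans on the cited POMDP concavity result, while your version makes the induction fully explicit.
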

\begin{proof}
Please see Appendix \ref{P2}.
\end{proof}
%\begin{theorem}
%\label{threshold_prop}
%	Given Lagrange multiplier $\lambda$ and discount factor $\beta$, the optimal policy that minimizes the $\beta$-discounted Lagrangian cost is of threshold-type with regard to belief state $\omega$, i.e. for given $\Delta, k$, there exists some threshold $\omega_{\beta}^*(\Delta,k;\lambda)$ such that it is optimal to transmit only when $\omega \geq \omega_{\beta}^*(\Delta,k;\lambda)$.
%\end{theorem}
%\begin{proof}
%	By Proposition \ref{existence_discount}, we have $V_n^\beta(\mathbf{s}) \rightarrow V^\beta(\mathbf{s})$ as $n\rightarrow \infty$. According to (d) in Lemma \ref{property}, we obtain that the optimal policy for the discounted Lagrangian cost minimization is of threshold-type.
%\end{proof} By Lemma \ref{property}, the optimal stationary deterministic policies that minimize the discounted Lagrangian cost are given by threshold-type policies. It is known that under suitable conditions the average cost MDP can be viewed as the limit of a sequence of discounted cost MDPs with discount factor $\beta \rightarrow 1$ \cite{sennott1989average}. The following theorem uses this approach to show that there exists a stationary threshold-type transmission scheduling policy that minimizes the average Lagrangian cost.

By (d) in Lemma \ref{property}, the $\beta$-discounted cost optimal policies are of threshold-type in belief. 
By \cite{sennott1989average}, under certain conditions (A proof of these conditions verification is provided in Appendix \ref{exist_1}), average cost optimal policy can be viewed as a limit of a sequence of $\beta$-discounted cost optimal policies as $\beta\rightarrow 1$. Thus, the average cost optimal policies are of threshold-type in belief.	
%It is known that under suitable conditions the average cost MDP can be viewed as the limit of a sequence of discounted cost MDPs with discount factor $\beta \rightarrow 1$ \cite{sennott1989average}. With this, the following theorem shows that there exists a threshold-type transmission scheduling policy that minimizes the average Lagrangian cost.
%\vspace{-0.6cm}
%\subsection{Monotonicity of Constrained Average Cost Optimal Policy}

%\vspace{-0.1cm}
\subsection{Structure-Aware Algorithm Design}
\label{policy}
We exploit Corollary \ref{opt_policy_sturcture} to design a structure-aware algorithm for \eqref{ori_prob} in two steps: We first design a structure-aware algorithm for \eqref{unconstrained_avg}, and then construct a way to determine parameters $\lambda_1$, $\lambda_2$ and $q$. %In the end, (i) and (ii) are combined in Algorithm \ref{alg1}. 

\subsubsection{Structure-Aware Algorithm for the approximate unconstrained average cost belief MDP}
\label{approximate_algo}
In practice, classic value iteration cannot work if state space is infinite. To deal with this, we first propose a finite-state approximation for infinite-state belief MDP in \eqref{unconstrained_avg} and show the convergence of our approximate belief MDPs to the original one.

Let $N$ be an upper bound for the AoI and the number of Markov transitions from $p_{01}$ or $p_{11}$. Since $\mathcal{T}^i(p_{01})\leq\mathcal{T}^{i+1}(p_{01})$ and $\mathcal{T}^i(p_{11})\geq\mathcal{T}^{i+1}(p_{11})$ for $i\in \mathbb{N}$, we have that with bound $N$, the state space of the approximate belief MDP is given by $\mathcal{S}^N\!\triangleq\!\{(\Delta,k,\omega)\!\in \!\mathcal{S}\!:\! \Delta\leq N, p_{01}\leq \omega\!\leq\!\mathcal{T}^N(p_{01}) \ \text{or}\  \mathcal{T}^N(p_{11})\leq \omega \leq p_{11} \}$. Without loss of generality, we assume $N>K$.

Given the state $(\Delta_t, k_t, \omega_t)\in \mathcal{S}^N$, the state $\mathbf{s}_{t+1}=(\Delta_{t+1}, k_{t+1}, \omega_{t+1})\in \mathcal{S}^N$ is updated as follows:
\begin{align}
 &\mathbf{s}_{t+1} %\left\{
  \!\!=\!\!\begin{cases}%{l l}  
  \left(k_t,\left(k_t\right)_+\!,p_{11}\right) \! &\text{if} \, u_{t}\!=\!1, \theta_{t}\!=\!1, \\  
  \left(\phi(\Delta_{t}\!+\!1),\left(k_t\right)_+\!,p_{01}\right)\!  &\text{if} \, u_{t}\!=\!1, \theta_{t}\!=\!0, \\
  \left(\phi(\Delta_{t}\!+\!1),\left(k_t\right)_+\!,\varphi(\mathcal{T}\left(\omega_{t}\right))\right)\!\!\! & \text{if} \, u_{t}\!=\!0,
  \end{cases} %\right.		
  \label{omega2}
\end{align}
where $\phi(x)=\min\{x,N\}$, and $\varphi(y)$ is given by\footnotemark[1]
\begin{align}
 &\varphi(y)=
  \begin{cases}%{l l}  
  \mathcal{T}^N(p_{11}) & \text{if}\, \, \mathcal{T}^N(p_{01}) <y <\mathcal{T}^N(p_{11}), \\
  y & \text{otherwise}.
  \end{cases} %\right.		
\end{align}
\footnotetext[1]{{We upper bound the belief state by $\mathcal{T}^N(p_{11})$. This ensures that the optimal policy for the approximate unconstrained belief MDP is of threshold-type.}}

Given action $u$, the transition probability from $\mathbf{s}$ to $\mathbf{s}'$ on state space $\mathcal{S}^N$, denoted by $P^{N}_{\mathbf{s}\mathbf{s}'}(u)$, is expressed as 
\begin{align}
		P^{N}_{\mathbf{s}\mathbf{s}'}(u)=P_{\mathbf{s}\mathbf{s}'}(u)+\sum_{\mathbf{r}\in \mathcal{S}-\mathcal{S}^{N}}P_{\mathbf{s}\mathbf{r}}(u)\mathbbm{1}_{\{\nu(\mathbf{r})=\mathbf{s}'\}},
\end{align}
	%\vspace{-0.2cm}
	%\sum_{\nu(\mathbf{r})=\mathbf{s}',\atop \mathbf{r}\in \mathcal{S}-\mathcal{S}^{N}}
where $P_{\mathbf{s}\mathbf{s}'}(u)$ and $P_{\mathbf{s}\mathbf{r}}(u)$ are the transition probabilities on $\mathcal{S}$ defined in \eqref{transition_prob}, $\mathbbm{1}_{\{\cdot\}}$ is the indicator function, and approximation operation to state is 
\begin{equation}
	\nu\left(\left(z1,z2,z3\right)\right)\triangleq(\phi(z1),z2,\varphi(z3)).\label{approximation_opr}
\end{equation}

In general, a sequence of approximate MDPs may not converge to the original MDP \cite{sennott2009stochastic}. In Theorem \ref{approx}, we show the convergence of our approximate MDPs to the original MDP. %in We use Theorem \ref{approx} to show that the optimal policy for the proposed sequence of approximate MDPs converges to that for the original unconstrained average Lagrangian cost optimal problem with $N\rightarrow \infty$.
\begin{theorem}
\label{approx}
	Let $\bar{L}^{N*}(\lambda)$ be the minimum average Lagrangian cost for the approximate MDP with regard to bound $N$ and Lagrange multiplier $\lambda$. Then, $\bar{L}^{N*}(\lambda) \rightarrow \bar{L}^*(\lambda)$ as $N\rightarrow \infty$.
\end{theorem}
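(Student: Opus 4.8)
The plan is to deduce the statement from an established finite-state approximation theorem for average-cost MDPs with \emph{unbounded} one-stage costs, namely the approximating-sequence results in \cite{sennott2009stochastic}. Because the one-stage cost $C(\mathbf{s},u;\lambda)=\Delta+\lambda u$ is unbounded in $\Delta$, the elementary bounded-cost perturbation theorems do not apply, which is exactly the situation warned about in the text (a truncation may fold high-AoI states back to $N$ and thereby lower the time-average in a way that need not vanish). The proof therefore splits into two ingredients: (i) a routine pointwise convergence of the truncated data to the original data, and (ii) a uniform-in-$N$ drift bound that rules out escape of cost/probability mass to infinity. The theorem of \cite{sennott2009stochastic}, combined with the existence of a stationary average-cost optimal policy for the original MDP (guaranteed by Theorem~\ref{thre_avg} together with the vanishing-discount conditions verified in Appendix~\ref{exist_1} via \cite{sennott1989average}), then yields $\bar{L}^{N*}(\lambda)\to\bar{L}^*(\lambda)$.

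For ingredient (i), I would first note that the approximate state spaces increase to the whole space, $\mathcal{S}^N\uparrow\mathcal{S}$, since $\phi(x)=\min\{x,N\}$ and the belief clipping $\varphi$ both act as the identity once $N$ exceeds the relevant AoI and the relevant beliefs lie inside $[\mathcal{T}^N(p_{01}),\mathcal{T}^N(p_{11})]^{c}\cap\Omega_\Delta$. Fix a state $\mathbf{s}=(\Delta,k,\omega)$ with $\Delta<\infty$ and an action $u$. For all $N>\Delta$ large enough that the at most two successor beliefs ($p_{11}$, $p_{01}$, or $\mathcal{T}(\omega)$) remain admissible, the operator $\nu$ in \eqref{approximation_opr} fixes every reachable successor, so $P^{N}_{\mathbf{s}\mathbf{s}'}(u)=P_{\mathbf{s}\mathbf{s}'}(u)$ and the costs coincide identically. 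In particular the redirected mass $\sum_{\mathbf{r}\in\mathcal{S}-\mathcal{S}^N}P_{\mathbf{s}\mathbf{r}}(u)\mathbbm{1}_{\{\nu(\mathbf{r})=\mathbf{s}'\}}$ is eventually zero for each fixed $\mathbf{s}$. This establishes convergence of both the one-stage costs and the transition kernels on every fixed state-action pair, which are the structural approximating-sequence hypotheses.

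The hard part is ingredient (ii): a uniform bound ensuring the average cost does not ``leak'' across the truncation boundary. I would fix a low-AoI reference state reached upon a successful delivery and build a Lyapunov/drift argument with AoI (essentially) as the test function, exploiting the positive correlation $p_{11}\ge p_{01}>0$ so that a fixed reference policy (e.g.\ transmitting once every $\lceil 1/(KE_{\text{max}})\rceil$ frames, which is feasible) drives a strictly negative expected drift in AoI whenever AoI is large, with a geometric inter-delivery tail whose rate is bounded \emph{uniformly in} $N$. This gives a uniform upper bound on the average cost of the reference policy across all approximate MDPs, hence $\limsup_{N}\bar{L}^{N*}(\lambda)<\infty$ and, after projecting the original stationary threshold-optimal policy onto $\mathcal{S}^N$ (where capping AoI can only decrease the stage cost), the one-sided bound $\limsup_{N}\bar{L}^{N*}(\lambda)\le\bar{L}^*(\lambda)$. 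The genuinely delicate direction is the reverse inequality $\liminf_{N}\bar{L}^{N*}(\lambda)\ge\bar{L}^*(\lambda)$, because this is precisely where the truncation could spuriously cheapen the problem. The uniform drift bound controls this: it forces the stationary occupation measure of the states at the AoI/belief truncation boundary to be uniformly small, so the discrepancy between the capped cost and the true cost contributes an amount that vanishes as $N\to\infty$. Lifting the optimal approximate policy $f_N$ back to the original MDP and comparing time-averages then yields $\bar{L}^*(\lambda)\le\liminf_N\bar{L}^{N*}(\lambda)$, completing the two-sided squeeze. I expect the uniform drift estimate — obtaining a single geometric tail on the AoI valid for every $N$ — to be the main technical obstacle, as it is what converts the mere pointwise convergence of ingredient (i) into genuine convergence of the average costs.
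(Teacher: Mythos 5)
There is a genuine gap, and it sits exactly where you yourself flag it: the direction $\liminf_{N}\bar{L}^{N*}(\lambda)\ge\bar{L}^*(\lambda)$. Your drift bound is established only for a \emph{fixed reference} policy; that yields $\limsup_{N}\bar{L}^{N*}(\lambda)<\infty$ and, via Markov's inequality, that the stationary occupation of the capped states under the approximate-optimal policy $f_N$ is $O(1/N)$, but neither fact controls what actually matters. First, $f_N$ is not defined on $\{\Delta>N\}$, so ``lifting $f_N$ back to the original MDP'' requires an extension whose high-AoI behavior you must design and then analyze. Second, and more fundamentally, in the approximate MDP the stage cost is frozen at $N$ once the AoI hits the cap, so $f_N$ pays no marginal AoI penalty for idling there; consequently $f_N$ may legitimately dwell at the cap (this is precisely the ``spurious cheapening'' you warn about), and an excursion of the \emph{original} chain above $N$ under the lifted policy can accumulate cost far in excess of ``boundary occupation times a bounded term.'' Closing this requires a geometric tail on excursions above $N$ under \emph{every} $f_N$, uniformly in $N$ --- exactly the estimate you defer as ``the main technical obstacle.'' So on its decisive step the proposal is a program, not a proof.

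The paper avoids this difficulty entirely by exploiting the monotone structure of the truncation rather than drift. It invokes the approximating-sequence theorem of \cite{sennott1997computing} and verifies its two conditions: (B1) uniform bounds $-L\le h^{\beta,N}(\mathbf{s})\le M(\mathbf{s})$ on the relative discounted value functions, and (B2) $\limsup_{N}\bar{L}^{N*}(\lambda)\le\bar{L}^*(\lambda)$. Both follow from one observation: the folding map $\nu$ redirects out-of-range probability mass to states with \emph{smaller} AoI and more favorable belief, and the relevant quantities are monotone in those coordinates. Concretely, with $\pi$ a state-independent randomized policy, the first-passage costs satisfy $c^{N}_{\mathbf{s},\mathbf{0}}(\pi)\le c_{\mathbf{s},\mathbf{0}}(\pi)<\infty$ because $\sum_{\mathbf{s}'\in\mathcal{S}^N}P^{N}_{\mathbf{s}\mathbf{s}'}(u)\,c_{\mathbf{s}',\mathbf{0}}(\pi)\le\sum_{\mathbf{s}'\in\mathcal{S}}P_{\mathbf{s}\mathbf{s}'}(u)\,c_{\mathbf{s}',\mathbf{0}}(\pi)$, giving B1 with $M(\mathbf{s})=c_{\mathbf{s},\mathbf{0}}(\pi)$; and a value-iteration induction gives $V^{\beta,N}(\mathbf{s})\le V^{\beta}(\mathbf{s})$, hence B2 through the vanishing-discount identity $\bar{L}^{N*}(\lambda)=\lim_{\beta\to1}(1-\beta)V^{\beta,N}(\mathbf{s})$. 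The delicate $\liminf$ direction is then delivered wholesale by the cited theorem, with no Lyapunov or occupation-measure analysis of the unknown policies $f_N$. (Your ``easy'' direction also has a small flaw worth noting: projecting the original threshold policy onto $\mathcal{S}^N$ does not produce a sample-path coupling, since belief clipping can change which action the threshold rule selects; the paper's monotone induction on value iteration is the clean substitute.)
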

\begin{proof}
	Please see Appendix \ref{convg}.
\end{proof}

The Relative Value Iteration (RVI) algorithm can be utilized to obtain an optimal stationary deterministic policy for the approximate MDP. In particular, RVI starts with $V_0^N(\mathbf{s})=0$, $\forall \mathbf{s}\in \mathcal{S}^N$ and updates $V_{n+1}^N(\mathbf{s})$ by minimizing the RHS of equation \eqref{iter_avg} in the $(n+1)$-th iteration, $n\in \{0,1,2,\cdots\}$.
\vspace{-0.1cm}
\begin{align}
	V_{n+1}^{N}(\mathbf{s})=\min_{u}\Big\{&C(\mathbf{s},u;\lambda)\notag\\
	&+\sum_{\mathbf{s}'\in \mathcal{S}^{N}}P^{N}_{\mathbf{s}\mathbf{s}'}(u)h_n^{N}(\mathbf{s}')-h_n^{N}(\mathbf{0})\Big \},
	\label{iter_avg}
\end{align}
where $\mathbf{0}$ is the reference state and $h^{N}_n(\mathbf{s})=V^{N}_n(\mathbf{s})-V^{N}_n(\mathbf{0})$. %Since the size of the state space $\mathcal{S}^N$ is $O(N^2K)$, the computation complexity of updating value function for all states in each iteration is more than $O(N^2K)$ without using threshold property. 
Note that similar to the proof in Section \ref{threshold_type_policy}, it can be shown that the optimal policy for the approximate MDP is still of threshold-type. Thus, we utilize the threshold property in RVI algorithm and propose a threshold-type RVI to reduce the complexity in Algorithm \ref{alg1} (Line 4-24). For each iteration, we update the threshold $\omega^\star(\Delta,k;\lambda)$ (Line 16) in addition to $V^N(\mathbf{s})$. If certain state satisfies the threshold condition (Line 11), then the optimal action for the state in this iteration is determined immediately without doing the optimization operation (Line 12), which reduces the algorithm complexity.

%\vspace{-0.4cm}
\begin{algorithm}
%\algsetup{linenosize=\small}
%\scriptsize
\caption{Structure-Aware Scheduling without channel sensing}
\label{alg1}
\LinesNumbered
\footnotesize given tolerance $\epsilon>0, \epsilon_\lambda>0$, $\lambda^{*-}$, $\lambda^{*+}$, $N$ \;
\While{$|\lambda^{*+}-\lambda^{*-}|>\epsilon_\lambda$}{
$\lambda=(\lambda^{*+}+\lambda^{*-})/2$\;
$V^{N}(\mathbf{s})=0,h^{N}(\mathbf{s})=0,h^{N}_\textit{prev}(\mathbf{s})=\infty,$ for all $\mathbf{s}\in \mathcal{S}^{N}$\;
 \While{$\max_{\mathbf{s}\in \mathcal{S}^{N}}|h^{N}(\mathbf{s})-h^{N}_{\textit{prev}}(\mathbf{s})|>\epsilon$}{
$ \omega^*(\Delta,k;\lambda)=\infty$ for all $\mathbf{s}=(\Delta,k,\omega)\in \mathcal{S}^{N}$\;
  \ForEach{$\mathbf{s}=(\Delta,k,\omega)\in \mathcal{S}^{N}$}{
  \eIf{$\Delta<K$}{$u^*=0$\;
       }{
       \eIf{$\omega\geq \omega^*(\Delta,k;\lambda)$}{
        $u^*=1$\;
       }{
       $u^*=\argmin_{u\in \{0,1\}} \{C(\mathbf{s},u;\lambda)+\sum_{\mathbf{s}'\in \mathcal{S}^{N}}P^{N}_{\mathbf{s}\mathbf{s}'}(u)h^{N}(\mathbf{s}')\}$\;
         \If{$u^*=1$}{
   $\omega^*(\Delta,k;\lambda)=\omega$\;
           }
       }
       $V^{N}(\mathbf{s})=C(\mathbf{s},u^*;\lambda)+\sum_{\mathbf{s}'\in \mathcal{S}^{N}}P^{N}_{\mathbf{s}\mathbf{s}'}(u^*)h^{N}(\mathbf{s}')-h^{N}(\mathbf{0})$\;
   }
  $h^{N}_\textit{prev}(\mathbf{s})=h^{N}(\mathbf{s})$\;
  $h^{N}(\mathbf{s})=V^{N}(\mathbf{s})-V^{N}(\mathbf{0})$\;
 }
 }
 Compute the average energy cost $\bar{E}(\lambda)$\;
 \eIf{$\bar{E}(\lambda)>E_{\text{max}}$}{
 $\lambda^{*-}=\lambda$\;}{
 $\lambda^{*+}=\lambda$\;
 }
 }
\end{algorithm}

\subsubsection{Lagrange Multiplier Estimation}
\label{lagrange}
By Lemma 3.4 of \cite{sennott1993constrained}, for $\lambda_1\!\!<\!\!\lambda_2$, we have $\bar{A}(\pi^\star_{\lambda_1})\!\!\leq\!\! \bar{A}(\pi^\star_{\lambda_2})$ and $\bar{E}(\pi^\star_{\lambda_1})\!\!\geq \!\!\bar{E}(\pi^\star_{\lambda_2})$. Thus, the optimal Lagrangian multiplier $\lambda^\star$ is defined as $\lambda^\star\!\triangleq\!\! \inf \{\lambda\!\!>\!\!0\!:\!\bar{E}(\pi_{\lambda}^\star)\leq E_{\text{max}}\}$. If there exists $\lambda^\star$ such that $\bar{E}(\pi^\star_{\lambda^\star})= E_{\text{max}}$, then the constrained average-AoI optimal policy is a stationary deterministic policy where $q$ in Corollary \ref{opt_policy_sturcture} is either 0 or 1. Otherwise,
the optimal policy $\pi^\star$ chooses policy $\pi^\star_{\lambda^{\star-}}$ with probability $q$ and policy $\pi^\star_{\lambda^{\star+}}$ with probability $1-q$. The randomization factor $q$ can be computed by
\begin{equation}
\label{q}
	q=\frac{E_{\text{max}}-\bar{E}(\pi^\star_{\lambda^{\star+}})}{\bar{E}(\pi^\star_{\lambda^{\star-}})-\bar{E}(\pi^\star_{\lambda^{\star+}})}.
\end{equation}
The bisection method is used to compute $\lambda^{\star-}$, $\lambda^{\star+}$ and thus $q$ (Line 2-3 and Line 26-30 in Algorithm \ref{alg1}). The algorithm starts with $\lambda^{\star-}=0$ and sufficiently large $\lambda^{\star+}$. 

\section{Scheduling with Delayed Channel Sensing}
\label{case22}
With delayed channel sensing, the CSI of the last time slot is always available at the beginning of each slot. Thus, the problem in this case can be formulated as a constrained MDP. The state space reduces to $\mathcal{S}\!\triangleq \!\{(\Delta,k,g)\!:\!k\in \mathcal{K}, \Delta\in \mathcal{A}_k,  g\in \{0,1\}\}$, where $g$ denotes the CSI of the last time slot. Given $\mathbf{s}_t=(\Delta_t,k_t,g_t)$ and $u_t$ at time slot $t$, the transition probability to $\mathbf{s}_{t+1}=(\Delta_{t+1},k_{t+1},g_{t+1})$ is written as follows:
  \vspace{-0.15cm}	
\begin{align}
 &P_{\mathbf{s}_{t}\mathbf{s}_{t+1}}(u_{t})\notag\\=
  &\begin{cases}%{l l}  
  p_{g_t1} & \text{if} \, u_{t}=1,\mathbf{s}_{t+1}=(k_t, (k_t)_+,1), \\
  1-p_{g_t1} & \text{if}  \, u_{t}=1,\mathbf{s}_{t+1}=(\Delta_t+1, (k_t)_+,0),\\  
  1 &\text{if}  \, u_{t}=0,\mathbf{s}_{t+1}=(\Delta_t+1, (k_t)_+,g_{t+1}).
  \end{cases} %\right.	
\end{align}

Following Section \ref{Lag_transform} and Section \ref{structure}, the optimal transmission scheduling policy in this case is also a randomized mixture of no more than two deterministic policies, each of which is optimal for an unconstrained average cost MDP. But thanks to the simplification in state, we can show that the optimal policy for the unconstrained average cost MDP in this case is of threshold-type in AoI in Theorem \ref{thre_avg2}.
\begin{theorem}
\label{thre_avg2}
	Given Lagrange multiplier $\lambda$, there exists a stationary unconstrained average cost optimal policy that is deterministic and of threshold-type in AoI. Specifically, the policy is in the form $\pi_\lambda^*=(d_\lambda^*,d_\lambda^*,\cdots)$, where
\begin{align}
\label{avg_opt_policy2}
 &d_{\lambda}^*(\Delta,k,g)=
  \begin{cases}%{l l}  
  0 & \text{if}\, \, 0\leq \Delta<\Delta^*(k,g;\lambda), \\
  1 & \text{if} \, \,\Delta^*(k,g;\lambda)\leq \Delta,
  \end{cases} %\right.		
\end{align}
%\vspace{-0.2cm}
and
\vspace{-0.2cm}
\begin{equation}
	\Delta^*(k,1;\lambda)\leq\Delta^*(k,0;\lambda),
	\label{thre_relation}
	\vspace{-0.2cm}
\end{equation}
 where $\Delta^*(k,g;\lambda)$ denotes the threshold given pair of relative slot index and delayed CSI $(k,g)$ and Lagrange multiplier $\lambda$.
\end{theorem}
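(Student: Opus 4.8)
The plan is to mirror the two-stage strategy used for Theorem \ref{thre_avg}: first establish the threshold structure for the associated unconstrained \emph{discounted}-cost MDP, and then transfer it to the average-cost MDP by letting the discount factor $\beta\to 1$. The essential simplification here is that the belief coordinate $\omega$ has collapsed to the binary delayed CSI $g\in\{0,1\}$, so the whole argument reduces to monotonicity in the single scalar $\Delta$. First I would write down the discounted optimality equation $V^\beta(\Delta,k,g)=\min\{Q^\beta(\Delta,k,g;0),Q^\beta(\Delta,k,g;1)\}$ with $Q^\beta(\Delta,k,g;0)=\Delta+\beta\big[p_{g1}V^\beta(\Delta+1,(k)_+,1)+(1-p_{g1})V^\beta(\Delta+1,(k)_+,0)\big]$ and $Q^\beta(\Delta,k,g;1)=\Delta+\lambda+\beta\big[p_{g1}V^\beta(k,(k)_+,1)+(1-p_{g1})V^\beta(\Delta+1,(k)_+,0)\big]$. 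Existence of this equation and convergence of value iteration follow exactly as in Proposition \ref{existence_discount}, since the always-suspend policy again has finite discounted cost on this (now smaller) state space.

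The key structural step is to show that $V^\beta(\Delta,k,g)$ is non-decreasing in $\Delta$, which I would prove by induction on the value-iteration iterates $V_n^\beta$: both $Q_{n+1}^\beta(\cdot;0)$ and $Q_{n+1}^\beta(\cdot;1)$ are non-decreasing in $\Delta$ (the explicit additive $\Delta$ term together with the inductive monotonicity of $V_n^\beta$ in its first argument), and the pointwise minimum preserves monotonicity; passing to the limit $V_n^\beta\to V^\beta$ gives the claim. I would then form the switching function. Writing $D^\beta(\Delta,k)\triangleq V^\beta(\Delta+1,(k)_+,1)-V^\beta(k,(k)_+,1)$, a direct subtraction yields
\[
Q^\beta(\Delta,k,g;0)-Q^\beta(\Delta,k,g;1)=-\lambda+\beta\,p_{g1}\,D^\beta(\Delta,k),
\]
so transmission is optimal precisely when $\lambda\le\beta\,p_{g1}\,D^\beta(\Delta,k)$. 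Using the AoI lower bound $\Delta\ge(k)_-\ge k-1$ (hence $\Delta+1\ge k$, with both $\Delta+1$ and $k$ lying in $\mathcal{A}_{(k)_+}$) together with the monotonicity just established, $D^\beta(\Delta,k)\ge 0$ and is non-decreasing in $\Delta$. Therefore the transmit region is an up-set $[\Delta^\star(k,g;\lambda),\infty)$, which is exactly the threshold form in \eqref{avg_opt_policy2}. For the ordering \eqref{thre_relation}, the assumption $p_{11}\ge p_{01}$ and $D^\beta\ge 0$ give $\beta\,p_{11}\,D^\beta(\Delta,k)\ge\beta\,p_{01}\,D^\beta(\Delta,k)$, so whenever the transmit condition holds for $g=0$ at a given $\Delta$ it also holds for $g=1$; the $g=1$ transmit region thus contains the $g=0$ one, forcing $\Delta^\star(k,1;\lambda)\le\Delta^\star(k,0;\lambda)$.

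Finally I would transfer both conclusions to the average-cost MDP. By verifying the Sennott conditions as in Appendix \ref{exist_1} (easier now, since only $\Delta$ is unbounded and the CSI coordinate is finite), an average-cost optimal stationary deterministic policy arises as a limit point of the $\beta$-discounted optimal policies as $\beta\to 1$; because every discounted policy is threshold-type in $\Delta$ and obeys the $g$-ordering, and the action set is finite, a stabilizing subsequence yields a limiting policy with the same structure. The main obstacle is not the monotonicity, which is routine, but making the $\beta\to 1$ limit rigorous: confirming Sennott's conditions for this countable, unbounded-cost chain and arguing that the \emph{integer} thresholds—and in particular the inequality \eqref{thre_relation}—persist in the limiting policy.
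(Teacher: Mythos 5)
Your proposal follows essentially the same route as the paper: prove monotonicity of $V^\beta$ in $\Delta$ by induction on the value-iteration iterates (the paper's Lemma \ref{property2}), deduce the threshold structure and the ordering $\Delta^*(k,1;\lambda)\le\Delta^*(k,0;\lambda)$ from the sign and monotonicity in $\Delta$ of the switching difference $\lambda+\beta p_{g1}\bigl(V^\beta(k,(k)_+,1)-V^\beta(\Delta+1,(k)_+,1)\bigr)$ together with $p_{11}\ge p_{01}$ (the paper's Lemma \ref{threshold_prop2}), and then pass to the average-cost problem via the vanishing-discount argument under Sennott's conditions. The only cosmetic difference is that you work with an explicit switching function $D^\beta(\Delta,k)$ while the paper compares $Q$-values at two ages directly; these rest on the identical cancellation of the $p_{g0}V^\beta(\Delta+1,(k)_+,0)$ term.
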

Different from Theorem \ref{thre_avg} which provides threshold structure in belief $\omega$, Theorem \ref{threshold_prop2} obtains that (i) the average cost optimal policy is of threshold-type in AoI, and (ii) threshold when $g=1$ is no larger than the threshold when $g=0$. Indeed, (ii) is used in algorithm to further reduce algorithm complexity. In particular, similar to Section \ref{approximate_algo}, we bound AoI with $N$ and propose a threshold-type algorithm in Algorithm \ref{alg2} to minimize unconstrained average cost. Different from corresponding part in Algorithm \ref{alg1}, $\Delta^*(k,1;\lambda)$ is updated along with each threshold updating (Line 15) to keep the threshold relation in \eqref{thre_relation}. This further reduces algorithm complexity.

\begin{algorithm}[]
\caption{Threshold-type scheduling for unconstrained average cost MDP with delayed channel sensing}
\label{alg2}
\LinesNumbered
\footnotesize given tolerance $\epsilon>0$, Lagrange multiplier $\lambda$ and bound $N$ \;
$V^{N}(\mathbf{s})=0,h^{N}(\mathbf{s})=0,h^{N}_\textit{prev}(\mathbf{s})=\infty,$ for all $\mathbf{s}\in \mathcal{S}^{N}$\;
 \While{$\max_{\mathbf{s}\in \mathcal{S}^{N}}|h^{N}(\mathbf{s})-h^{N}_{\textit{prev}}(\mathbf{s})|>\epsilon$}{
$ \Delta^*(k,g;\lambda)=\infty$ for all $\mathbf{s}=(\Delta,k,g)\in \mathcal{S}^{N}$\;
  \ForEach{$\mathbf{s}=(\Delta,k,g)\in \mathcal{S}^{N}$}{
  \eIf{$\Delta<K$}{$u^*=0$\;
       }{
       \eIf{$\Delta\geq \Delta^*(k,g;\lambda)$}{
        $u^*=1$\;
       }{
       $u^*=\argmin_{u\in \{0,1\}} \{C(\mathbf{s},u;\lambda)+\sum_{\mathbf{s}'\in \mathcal{S}^{N}}P^{N}_{\mathbf{s}\mathbf{s}'}(u)h^{N}(\mathbf{s}')\}$\;
         \If{$u^*=1$}{
   $\Delta^*(k,g;\lambda)=\Delta$\;
   $\Delta^*(k,1;\lambda)=\min\{\Delta,\Delta^*(k,1;\lambda)\}$\;
           }
       }
       $V^{N}(\mathbf{s})=C(\mathbf{s},u^*;\lambda)+\sum_{\mathbf{s}'\in \mathcal{S}^{N}}P^{N}_{\mathbf{s}\mathbf{s}'}(u^*)h^{N}(\mathbf{s}')-h^{N}(\mathbf{0})$\;
   }
  $h^{N}_\textit{prev}(\mathbf{s})=h^{N}(\mathbf{s})$\;
  $h^{N}(\mathbf{s})=V^{N}(\mathbf{s})-V^{N}(\mathbf{0})$\;
 }
 }
 \end{algorithm}

The proof idea of Theorem \ref{threshold_prop2} is similar to Theorem \ref{thre_avg}. We relate average cost MDPs to discounted cost MDPs.  Next, we explore the structure of discounted cost optimal policies.
%This is because for positively correlated channel, it has higher possibility that the channel is in good state at current slot if it is in good state at last time slot than in bad state. Thus, scheduler is more willing to attempt a transmission (with smaller threshold) when $g=1$. %Note that if we consider the negatively correlated channel, the threshold associated with bad channel will be not larger than that associated with good channel. Our proof in the theorem can be modified accordingly to show this. With similar argument in the proof of Theorem \ref{thre_avg}, the result in Theorem \ref{threshold_prop2} can be extended to the average Lagrangian cost optimization problem as in Theorem \ref{thre_avg2}

The optimality equation in \eqref{opt_equ_disc} is modified as follows:
 \begin{align}
 	&V^\beta \left(\Delta,k,g\right)\!=\!\Delta+\beta\min\Big\{\!\sum_{g'\in\{0,1\}}\!\!p_{gg'}V^\beta\!\left(\!\Delta+1,\left(k\right)_{+}\!,g'\right),\notag\\
 	&\   \ \lambda+ p_{g1} V^\beta\left(k, \left(k\right)_{+},1\right)\!+\!p_{g0}V^\beta\left(\Delta+1,\left(k\right)_{+},0\right)\Big\}.
 \end{align}
\vspace{-0.3cm}
 
 First, we prove the monotonicity of value function $V^\beta$ in AoI in Lemma \ref{property2}.
\begin{lemma}
\label{property2}
   The function $V^\beta(\Delta, k, g)$ is non-decreasing with regard to AoI $\Delta$.
\end{lemma}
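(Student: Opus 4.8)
The plan is to mirror the proof of Lemma~\ref{property}(a), working through the value-iteration representation rather than the fixed point $V^\beta$ directly. Adapting Proposition~\ref{existence_discount}(c) to the reduced state space, I would define iterates $V_0^\beta(\mathbf{s})=0$ and $V_{n+1}^\beta(\Delta,k,g)=\min\{Q_{n+1}^\beta(\Delta,k,g;0),Q_{n+1}^\beta(\Delta,k,g;1)\}$ with
\begin{align}
Q_{n+1}^\beta(\Delta,k,g;0)&=\Delta+\beta\sum_{g'\in\{0,1\}}p_{gg'}V_n^\beta(\Delta+1,(k)_+,g'),\notag\\
Q_{n+1}^\beta(\Delta,k,g;1)&=\Delta+\lambda+\beta\big(p_{g1}V_n^\beta(k,(k)_+,1)+p_{g0}V_n^\beta(\Delta+1,(k)_+,0)\big).\notag
\end{align}
Since $V_n^\beta\to V^\beta$ pointwise (an analogue of Proposition~\ref{existence_discount}, obtained from the same finiteness argument applied to the suspend-always policy) and since monotonicity in $\Delta$ survives pointwise limits, it suffices to prove by induction on $n$ that $V_n^\beta(\Delta,k,g)$ is non-decreasing in $\Delta$ for each fixed $(k,g)$.

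The base case holds trivially as $V_0^\beta\equiv0$. For the inductive step, I would assume $V_n^\beta(\cdot,k',g')$ is non-decreasing in its first argument for every $(k',g')$, fix $k,g$, and take $\Delta_1<\Delta_2$ in $\mathcal{A}_k$. The goal is to show that each $Q$-function preserves this ordering, after which $V_{n+1}^\beta=\min_u Q_{n+1}^\beta$ does as well: choosing $u^\star$ to attain the minimum at $\Delta_2$ gives $V_{n+1}^\beta(\Delta_1,k,g)\le Q_{n+1}^\beta(\Delta_1,k,g;u^\star)\le Q_{n+1}^\beta(\Delta_2,k,g;u^\star)=V_{n+1}^\beta(\Delta_2,k,g)$. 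For $u=0$ the ordering follows because the running cost $\Delta$ is increasing and $\Delta_1+1<\Delta_2+1$ both lie in $\mathcal{A}_{(k)_+}$, so the hypothesis gives $V_n^\beta(\Delta_1+1,(k)_+,g')\le V_n^\beta(\Delta_2+1,(k)_+,g')$ termwise.

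The $u=1$ branch is where the argument differs from the $u=0$ case, and it is the step I would treat most carefully. Here the success outcome resets the age to the slot index $k$, so the term $p_{g1}V_n^\beta(k,(k)_+,1)$ carries no $\Delta$-dependence at all; the only $\Delta$-dependent contributions are the linear cost $\Delta+\lambda$ and the failure term $p_{g0}V_n^\beta(\Delta+1,(k)_+,0)$, both of which respect the ordering by the hypothesis. Thus $Q_{n+1}^\beta(\Delta_1,k,g;1)\le Q_{n+1}^\beta(\Delta_2,k,g;1)$, completing the step. Passing $n\to\infty$ then yields the claim.

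I expect no genuine obstacle here. Because the belief coordinate has collapsed to the binary CSI $g$, the delicate convexity and monotonicity-in-belief arguments of Lemma~\ref{property}(b)--(c) are not needed, and the induction is purely on the age. The only point demanding attention is the observation that a successful transmission resets the age to a value independent of $\Delta$, which is exactly what prevents the increasing one-step cost from being cancelled by the state transition; verifying the convergence $V_n^\beta\to V^\beta$ for the reduced MDP is routine and imported directly from the earlier finiteness argument.
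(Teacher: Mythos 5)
Your strategy---value iteration, induction on $n$, convergence imported from the suspend-always finiteness argument of Proposition~\ref{existence_discount}---is exactly the paper's, and your treatment of the $u=0$ branch and of the monotonicity of $Q_{n+1}^\beta(\cdot\,,k,g;1)$ is correct. However, there is a concrete justification gap at the one place where the paper's proof does real work. The model forbids transmission once the current frame's update has been delivered (Section~\ref{sys_scheduler}: after a delivery, $u_t=0$ for the remaining slots of the frame), and the delivered states are exactly those with $\Delta=k-1<K$. At such a state the feasible action set is $\{0\}$, so the constrained value iterate satisfies $V_{n+1}^\beta(\Delta_1,k,g)=Q_{n+1}^\beta(\Delta_1,k,g;0)$ rather than $\min_{u}Q_{n+1}^\beta(\Delta_1,k,g;u)$. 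Your chain
\begin{equation*}
V_{n+1}^\beta(\Delta_1,k,g)\le Q_{n+1}^\beta(\Delta_1,k,g;u^\star)\le Q_{n+1}^\beta(\Delta_2,k,g;u^\star)=V_{n+1}^\beta(\Delta_2,k,g)
\end{equation*}
therefore has an unjustified first inequality precisely when $\Delta_1=k-1$ and the minimizer at $\Delta_2$ is $u^\star=1$: you are bounding $V_{n+1}^\beta(\Delta_1,k,g)$ by the $Q$-value of an action that is not available at $(\Delta_1,k,g)$. This is the case the paper isolates (``If $\Delta<K$ \ldots\ the action is to suspend''), where it proves $Q_{n+1}^\beta(\Delta_2,k,g;1)\ge Q_{n+1}^\beta(\Delta_1,k,g;0)$ directly, writing $\Delta_2=\Delta_1+mK$ and using $mK\ge 0$, $\lambda\ge 0$, and the induction hypothesis.

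The gap is real but one line away from being closed in this reduced MDP: when $\Delta_1=k-1$ we have $\Delta_1+1=k$, so the two $Q$-values at $(\Delta_1,k,g)$ have identical continuation terms and
\begin{equation*}
Q_{n+1}^\beta(\Delta_1,k,g;1)=Q_{n+1}^\beta(\Delta_1,k,g;0)+\lambda\ \ge\ Q_{n+1}^\beta(\Delta_1,k,g;0)=V_{n+1}^\beta(\Delta_1,k,g),
\end{equation*}
so the unconstrained minimization you wrote down is value-preserving and your argument then goes through verbatim. Note that this shortcut is specific to the delayed-sensing case: in the belief MDP of Lemma~\ref{property}, the analogous domination at delivered states needs the concavity-type property (c), which is why both of the paper's appendix proofs carry this subcase explicitly rather than arguing termwise monotonicity of the two $Q$-functions. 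Without either this domination observation or the paper's direct comparison, your induction step is incomplete.
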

\begin{proof} Please see Appendix \ref{mono_property_case}.
%The proof uses induction and $V_n^\beta(\mathbf{s}) \rightarrow V^\beta(\mathbf{s})$ as $n\rightarrow \infty$ which is similar to proof in Lemma \ref{property}.
\end{proof}
%\begin{lemma}
%\label{property2}
%   The function $V^\beta(\Delta, k, g)$ is non-decreasing with regard to AoI $\Delta$.
%\end{lemma}
%\begin{proof}
%Please see Appendix \ref{mono_property_case2}.
%\end{proof}

With this, we characterize the structure of optimal policy for the unconstrained discounted cost MDP in Lemma \ref{threshold_prop2}. 
\begin{lemma}
\label{threshold_prop2}
	Given $\lambda$ and $\beta$, the optimal policy that minimizes the $\beta$-discounted Lagrangian cost is of threshold-type in AoI $\Delta$, i.e. given $k,g$, there exists a threshold $\Delta_{\beta}^*(k,g;\lambda)$ such that it is optimal to transmit only when $\Delta \geq \Delta_{\beta}^*(k,g;\lambda)$. In addition, $\Delta_{\beta}^*(k,1;\lambda)\leq\Delta_{\beta}^*(k,0;\lambda)$.
\end{lemma}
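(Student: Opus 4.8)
The plan is to work directly from the modified discounted optimality equation for the delayed-sensing case and study the gap between the costs of the two actions. Writing the action-value functions as
\begin{align}
Q^\beta(\Delta,k,g;0)&=\Delta+\beta\!\sum_{g'\in\{0,1\}}\!p_{gg'}V^\beta(\Delta+1,(k)_+,g'),\notag\\
Q^\beta(\Delta,k,g;1)&=\Delta+\lambda+\beta\big(p_{g1}V^\beta(k,(k)_+,1)\notag\\
&\quad+p_{g0}V^\beta(\Delta+1,(k)_+,0)\big),\notag
\end{align}
transmission is optimal at $(\Delta,k,g)$ exactly when $Q^\beta(\Delta,k,g;1)\le Q^\beta(\Delta,k,g;0)$. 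First I would form the difference $D^\beta(\Delta,k,g)\triangleq Q^\beta(\Delta,k,g;0)-Q^\beta(\Delta,k,g;1)$ and observe that the common term $\beta p_{g0}V^\beta(\Delta+1,(k)_+,0)$ cancels, leaving the clean single-term expression
\[
D^\beta(\Delta,k,g)=\beta p_{g1}\big[V^\beta(\Delta+1,(k)_+,1)-V^\beta(k,(k)_+,1)\big]-\lambda,
\]
with transmission optimal precisely when $D^\beta(\Delta,k,g)\ge 0$.

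For the threshold-in-$\Delta$ claim, I would note that the subtracted term $V^\beta(k,(k)_+,1)$ is independent of $\Delta$, while by Lemma \ref{property2} the term $V^\beta(\Delta+1,(k)_+,1)$ is non-decreasing in $\Delta$. Hence $D^\beta(\Delta,k,g)$ is non-decreasing in $\Delta$, so once it becomes non-negative it remains non-negative. Defining $\Delta_\beta^*(k,g;\lambda)$ as the smallest $\Delta\in\mathcal{A}_k$ with $D^\beta(\Delta,k,g)\ge 0$ then yields exactly the stated threshold rule: transmit iff $\Delta\ge\Delta_\beta^*(k,g;\lambda)$.

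For the ordering $\Delta_\beta^*(k,1;\lambda)\le\Delta_\beta^*(k,0;\lambda)$, I would first verify that the bracketed quantity $G(\Delta)\triangleq V^\beta(\Delta+1,(k)_+,1)-V^\beta(k,(k)_+,1)$ is non-negative. Indeed $\Delta+1\in\mathcal{A}_{(k)_+}$, and since $((k)_+)_-=k$ the value $k$ is the smallest element of $\mathcal{A}_{(k)_+}$, so $\Delta+1\ge k$; monotonicity of $V^\beta$ in AoI (Lemma \ref{property2}) then gives $G(\Delta)\ge 0$. Using the positive-correlation assumption $p_{11}\ge p_{01}$, for every $\Delta$ we have $\beta p_{11}G(\Delta)\ge\beta p_{01}G(\Delta)$, i.e. $D^\beta(\Delta,k,1)\ge D^\beta(\Delta,k,0)$. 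Consequently, whenever $D^\beta(\Delta,k,0)\ge 0$ we also have $D^\beta(\Delta,k,1)\ge 0$, so the smallest $\Delta$ triggering transmission for $g=1$ cannot exceed that for $g=0$, giving the claimed ordering.

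The argument is short, and the only points requiring care are the cancellation that collapses $D^\beta$ to a single difference of value functions, and the verification that $\Delta+1\ge k$ so that $G(\Delta)\ge 0$. The latter is the crux: without the non-negativity of $G$, the monotonicity coefficient comparison driven by $p_{11}\ge p_{01}$ would not yield $D^\beta(\Delta,k,1)\ge D^\beta(\Delta,k,0)$, and hence the threshold ordering could fail.
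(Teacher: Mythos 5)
Your proof is correct and takes essentially the same route as the paper's: the same cancellation of the $\beta p_{g0}V^\beta(\Delta+1,(k)_+,0)$ term reducing the action gap to $\beta p_{g1}\big[V^\beta(\Delta+1,(k)_+,1)-V^\beta(k,(k)_+,1)\big]-\lambda$, monotonicity of $V^\beta$ in AoI (Lemma \ref{property2}) for the threshold structure, and the combination of $p_{11}\ge p_{01}$ with non-negativity of that bracketed difference for the ordering $\Delta_\beta^*(k,1;\lambda)\le\Delta_\beta^*(k,0;\lambda)$. The only differences are presentational: you package the argument as monotonicity of a gap function $D^\beta$ in $\Delta$ rather than the paper's ``optimal to transmit at $\Delta$ implies optimal at $\Delta'>\Delta$'' phrasing, and you make explicit the check $\Delta+1\ge k$ (via $k$ being the minimum of $\mathcal{A}_{(k)_+}$) that the paper leaves implicit when citing Lemma \ref{property2}.
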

\begin{proof}
Please see Appendix \ref{threshold_discount2}.
	%For the first result, we show that, if it is optimal to transmit at a state $(\Delta,k,g)$, then it is optimal to transmit for any state $(\Delta',k,g)$ such that $\Delta'>\Delta$. For the second result, we let $\Delta_1=\Delta_{\beta}^*(k,0;\lambda)$ be AoI threshold when $g=0$ and show that it is optimal to transmit at state $(\Delta_1,k,1)$, which implies that AoI threshold when $g=1$ is no larger than $\Delta_1$.
\end{proof}

Similar to the proof of Theorem \ref{thre_avg}, we can extend the result to the unconstrained average cost MDP as in Theorem \ref{thre_avg2}.
 %\vspace{-0.2cm}

%\vspace{-0.6cm}
\section{Numerical Results}
\label{numerical_results}
In this section, we numerically evaluate the performance of the proposed algorithms. We assume $N=1000$ and obtain all simulation results over $10^5$ time slots.
%\vspace{-0.2cm}
\subsection{Average AoI Performance}
%\textit{The impact of power and frame length on average AoI:}
Fig. \ref{Age_energy} plots the AoI-energy tradeoff with different fading characteristics (different $p_{11}$ and $p_{01}$) for the two cases that we consider in this paper. In this simulation, we set $K=3$. The optimal average AoI with no energy constraint is plotted as a gray dashed line accordingly. When comparing Fig. \ref{age_e_case1} with Fig. \ref{age_e_case2}, it is easy to observe that for fixed energy constraint and pair of $p_{11}$ and $p_{01}$, the average AoI with delayed channel sensing is no larger than that without channel sensing. %This is because the CSI obtained in both cases reveals the channel state in the previous slot. However, with delayed channel sensing, CSI is always available while in the case without channel sensing, CSI is only available if a transmission attempt is made. 

Moreover, the curves in Fig. \ref{age_e_case1} and Fig. \ref{age_e_case2} exhibit the same trend as follows. For each pair of $p_{11}$ and $p_{01}$, average AoI decreases with energy constraint. Note that it is prohibited to transmit delivered status update. Thus, even if there is no energy constraint, obtaining the optimal average AoI does not necessarily imply transmitting at every time slot. This explains why the average AoI achieved by our proposed policies approaches the gray line even when $E_\text{max}\neq 1$. In addition, we can observe that for certain energy constraint, the average AoI decreases with either $p_{11}$ or $p_{01}$. This is due to the fact that increase in either $p_{11}$ or $p_{01}$ results in the increase of steady state probability that channel is in good state.
\begin{figure}[t]
 %\vspace{-.5cm}
 \subfloat[Without channel sensing]{
 \includegraphics[scale=.2]{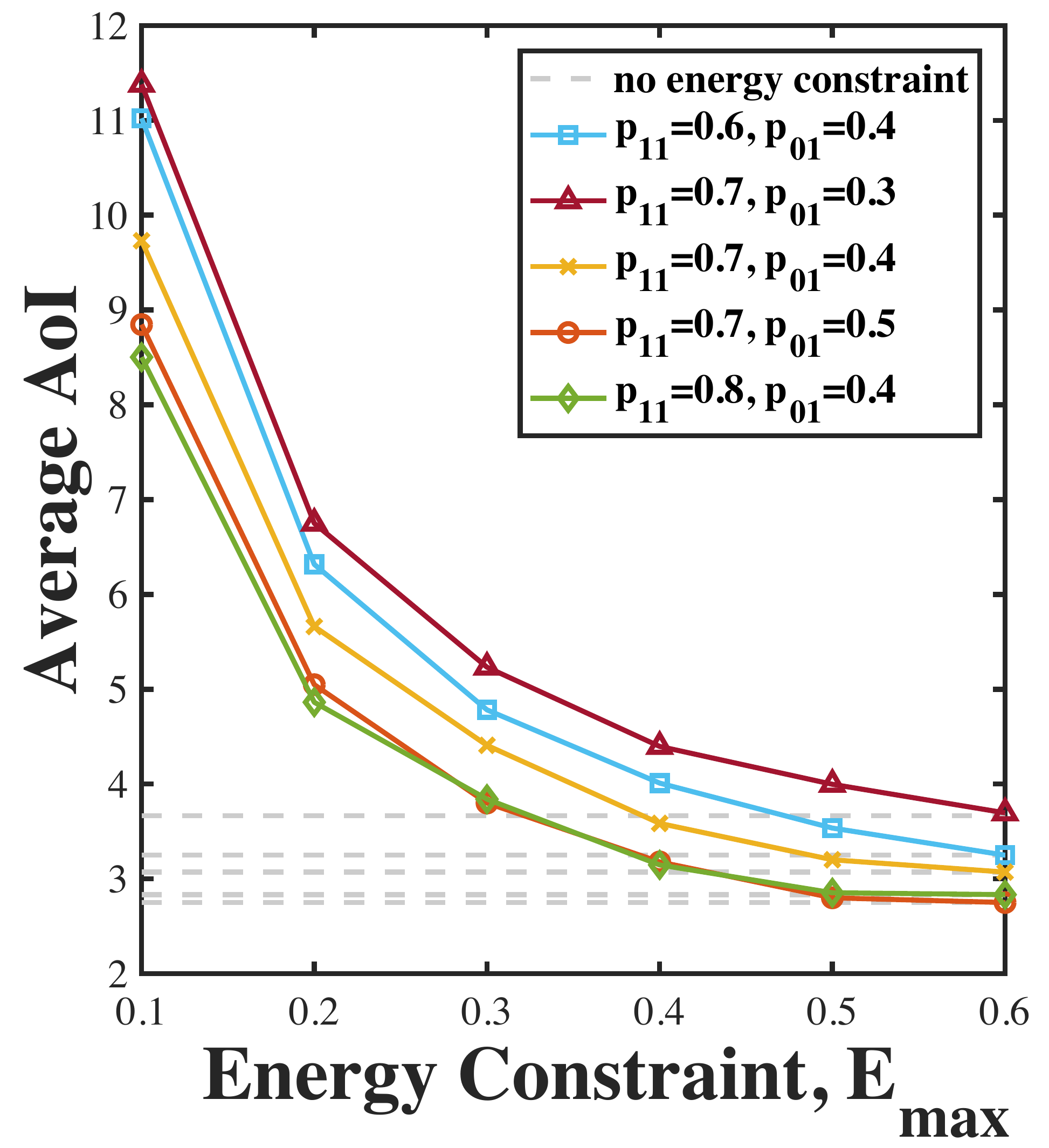}
 \label{age_e_case1}
 }\ 
 \subfloat[With delayed channel sensing]{
 \includegraphics[scale=.2]{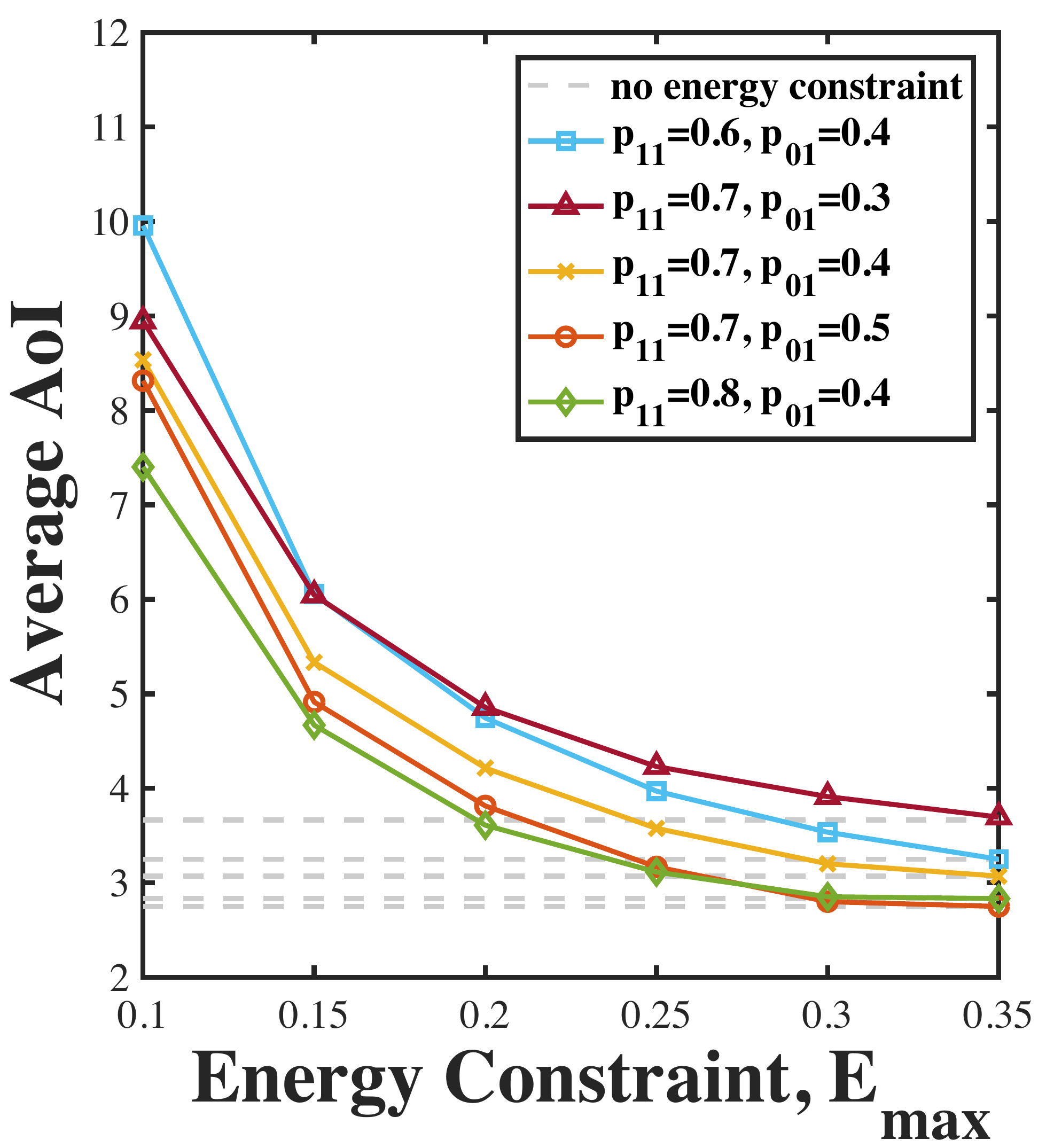}
 \label{age_e_case2}
 }
 \caption{AoI-energy tradeoff with different transition probabilities}
 \label{Age_energy}
 \vspace{-0.5cm}
\end{figure}

Fig. \ref{Age_framelength} studies the average AoI performance vs frame length with different fading characteristics in the two cases. We set the energy constraint $E_{\text{max}}=0.3$. %Compare Fig. \ref{age_f_case1} with Fig. \ref{age_f_case2}, we have same observations in Fig. \ref{Age_energy} that average AoI is smaller in the case with delayed channel sensing. Except this, average AoI increases with frame length. This is reasonable since increase of frame length means less frequent generation of status updates. We can also find that the increasing rate is changing. This is actually the outcome of both frame length and energy constraints.  
\begin{figure}[h]
% \vspace{-.4cm}
 \subfloat[Without channel sensing]{
 \includegraphics[scale=.2]{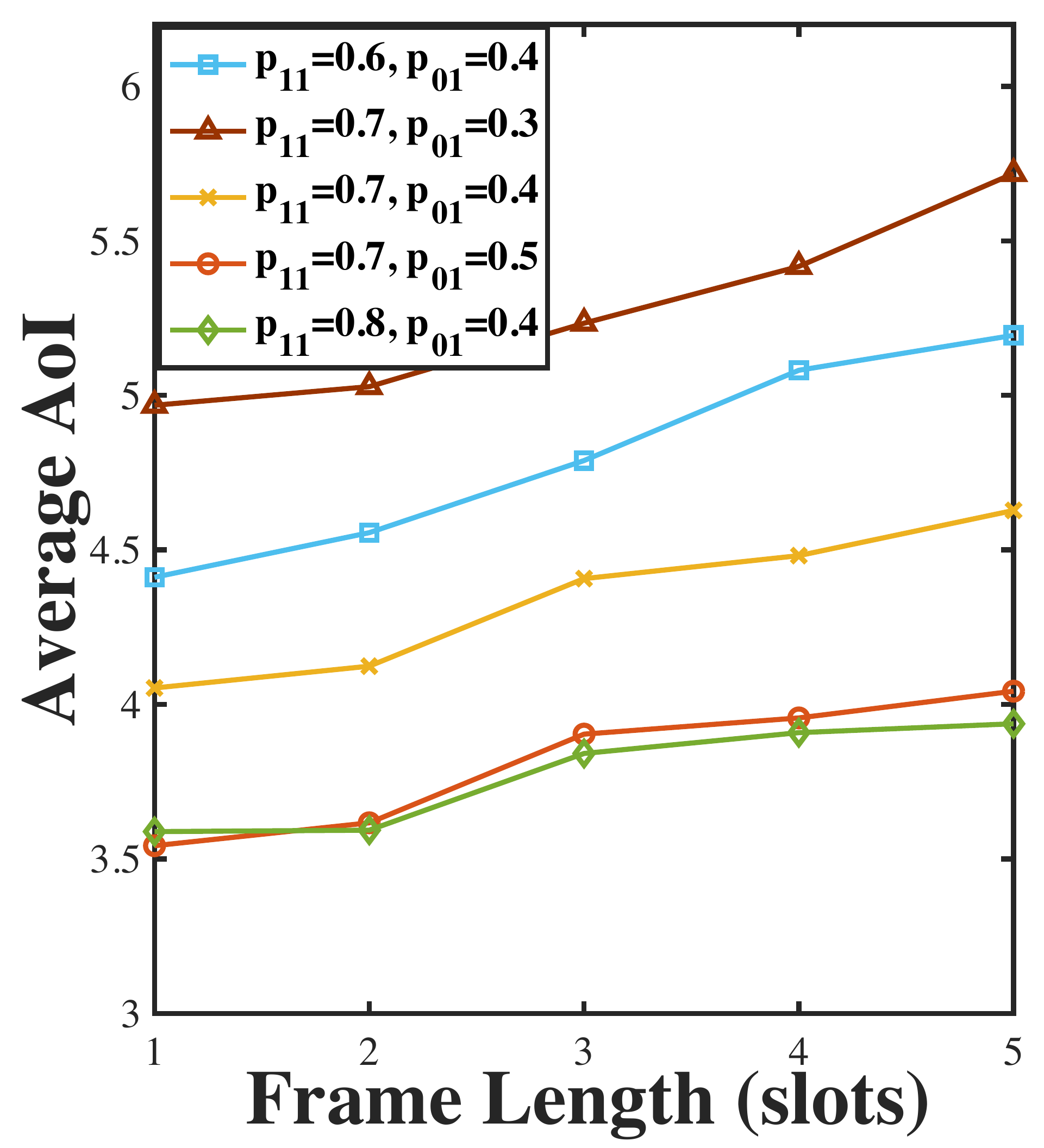}
 \label{age_f_case1}
 }\ \ \
 \subfloat[With delayed channel sensing]{
 \includegraphics[scale=.2]{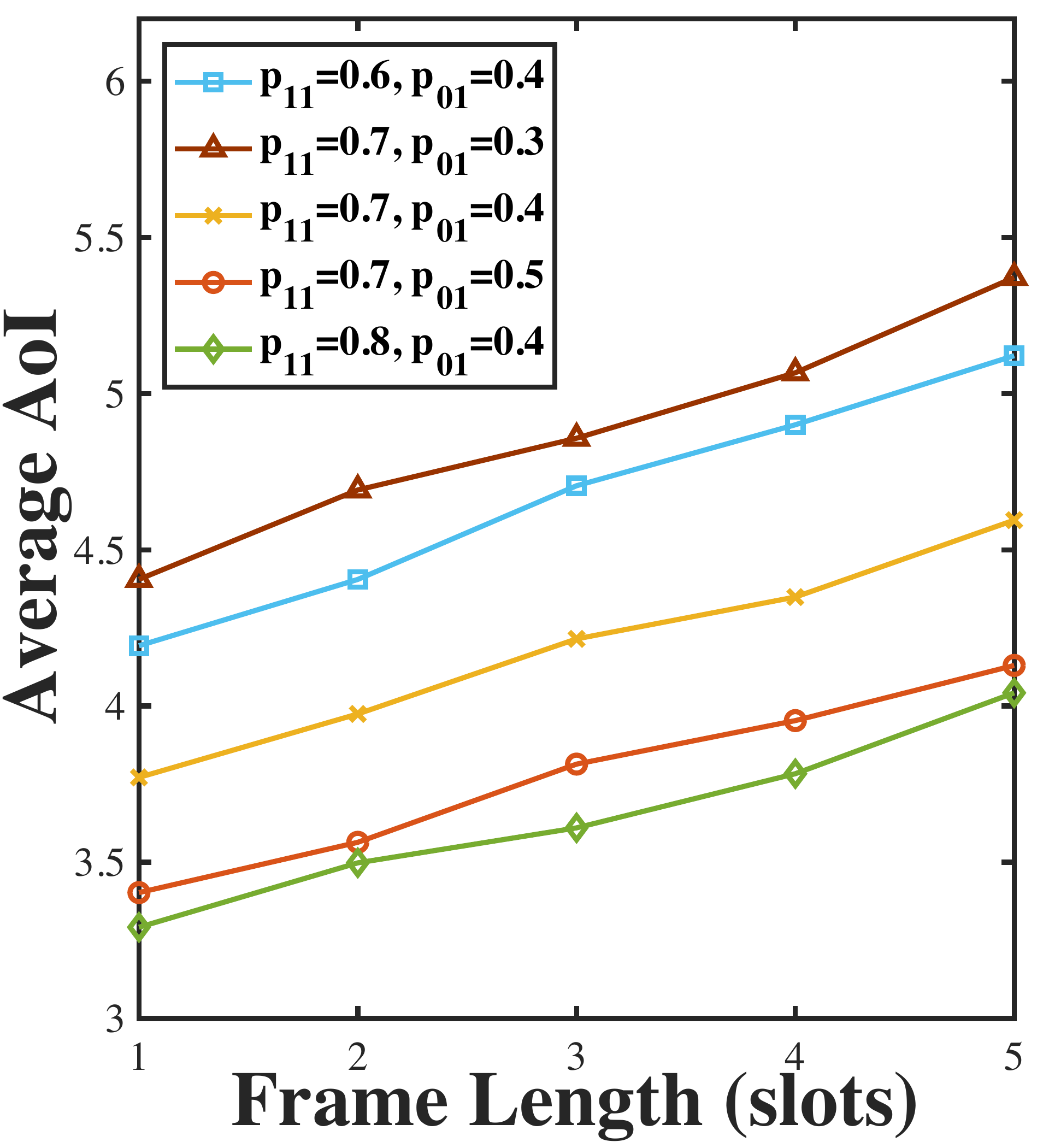}
 \label{age_f_case2}
 }
 \caption{Average AoI vs frame length with different transition probabilities}
	\label{Age_framelength}
%\vspace{-0.2cm}
\end{figure}

\begin{figure}[h]
%\vspace{-0.5cm}
	\centering
	\includegraphics[width=0.41\textwidth]{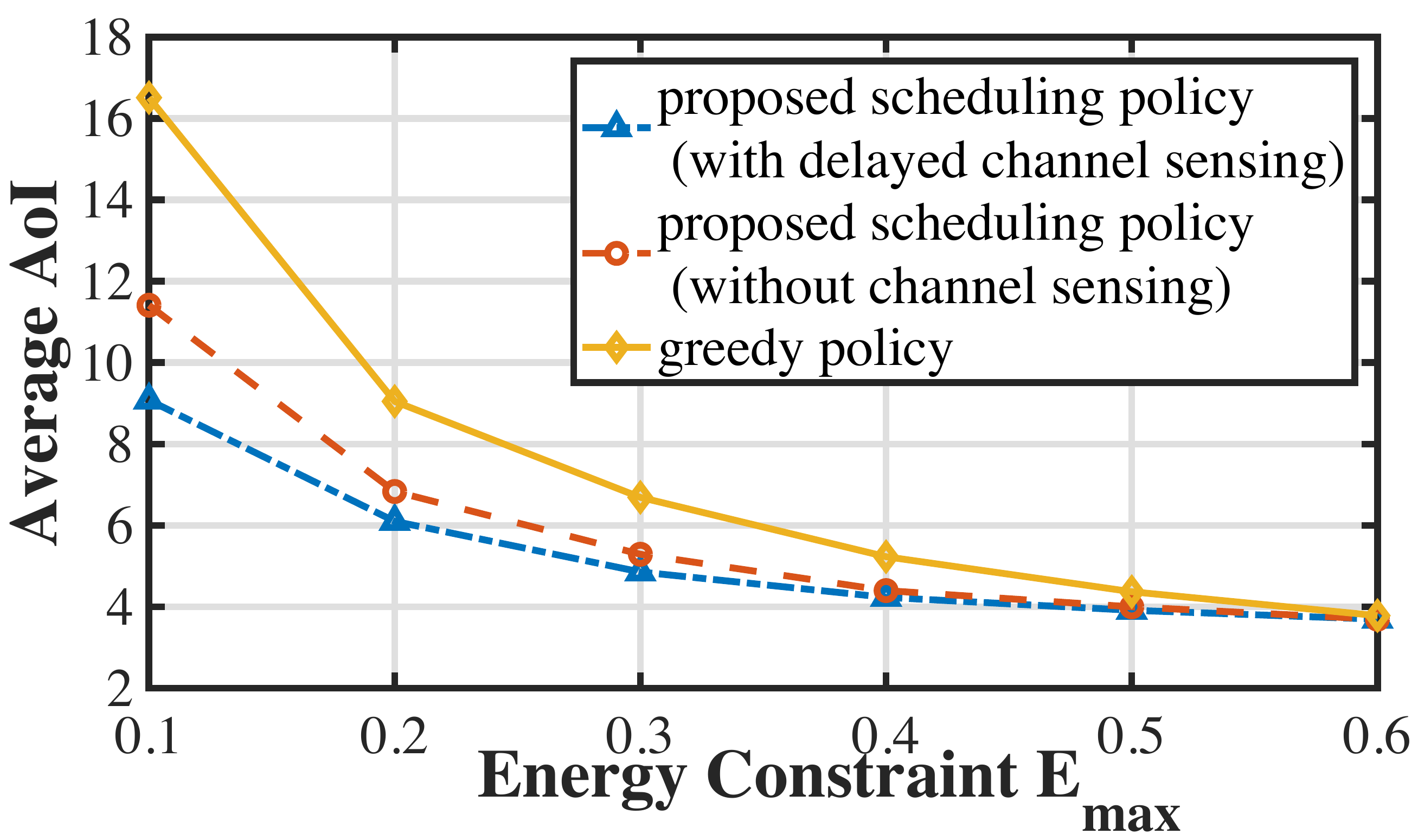}
	\caption{Comparison with greedy policy}
	\label{comp}
	%\vspace{-0.2cm}
\end{figure}
%\textit{The impact of channel memory on average AoI:}
%Recall that the presence of channel memory of the fading channel enables the user to predict the channel state. Memory level exactly affects the prediction results of the channel state, which further affects the resulted average AoI.
% In this simulation, we set $K=3$, $E_{\text{max}}=0.4$ unit and $p_{11}=2p_{01}$, and study the variation of average AoI as channel memory $\mu$ varies from 0.1 to 0.45 (Fig. \ref{Age_memo}) in different cases of CSI. In Fig. \ref{Age_memo}, for both cases, the average AoI decreases with channel memory, demonstrating the role of channel memory in predicting the channel state.
%\begin{figure}
%	\centering
%	\includegraphics[width=0.48\textwidth]{age_channelmemory}
%	\caption{The impact of channel memory on average AoI}
%	\label{Age_memo}
%\end{figure}

\subsection{Comparison with greedy policy}
Let $e_t$ denote total energy consumption before slot $t$. Then, $\bar{e}_t\triangleq e_t/(t-1)$ denotes the average energy consumed before slot $t$. We compare the proposed transmission scheduling policies with a greedy policy that transmits when $\bar{e}_t<E_{\text{max}}$ and $\Delta_t\geq K$. We set $K=3$, $p_{11}=0.7$, $p_{01}=0.3$, in which case the optimal AoI with no energy constraint is achieved with 0.6167 units energy on average. Thus, the comparison is conducted with energy constraint ranging from 0.1 to 0.6. In Fig. \ref{comp}, it is easy to observe that the proposed transmission scheduling policy outperforms the greedy policy in both cases. The gap between the greedy policy and scheduling policy in either case narrows as the energy constraint is loosened.  

%\vspace{-0.2cm}
\section{Conclusion}
We studied scheduling transmission of periodically generated updates over a Gilbert-Elliott fading channel in two cases. For the case without channel sensing, the problem is a constrained POMDP and is rewritten as a constrained belief MDP by introducing belief state. We show that the optimal policy for the constrained belief MDP is a randomization of no more than two stationary deterministic policies, each of which is of a threshold-type in the belief on the channel. For the case with delayed channel sensing, we show that the optimal policy has a similar structure as the one in the former case but with AoI associated threshold. In addition, we show that the AoI threshold has monotonic behavior in the delayed channel state in this case. The structure is utilized in either case to reduce algorithm complexity.
\appendices
\section{Proof of Lemma \ref{property}}
\label{P2}

Without loss of generality, we extend space of belief state to $[0, 1]$ and show that (a)-(d) hold for $\omega\in [0,1]$. By Proposition \ref{existence_discount}, $V_n^\beta(\mathbf{s})\! \rightarrow \!V^\beta(\mathbf{s})$ as $n\rightarrow \!\!\infty$. Thus, we show that $V_n^\beta(\mathbf{s})$ satisfies (a)-(d) for $n\geq 0$ via induction. Note that $V_0^\beta(\mathbf{s})=0$ satisfies (a)-(d).  

Suppose that (a)-(d) hold for $n$. We (1) show that (d) holds for $n+1$ based on the assumption that (a)-(c) hold for $n$, and (2) show that (c) holds for $n+1$ based on the result that (d) hold for $n+1$ shown in step (1) and the assumption that (a)-(c) hold for $n$.

\emph{Step (1):} We show that (d) holds for $n\!+\!1$.
%With similar argument in \cite{smallwood1973optimal}, we can show that $V_{n}^\beta$ is a concave function in $\omega$ via induction. Thus, by \eqref{q1}, $Q_{n+1}^\beta(\Delta,k,\omega;0)$ is concave in $\omega$. 
%Moreover, by \eqref{q2}, $Q_{n+1}^\beta(\Delta,k,\omega;1)$ is linear in $\omega$. Thus, (d) for $n+1$ can be shown by analyzing two cases as in Fig. \ref{CompQ1Q2}.   
%\emph{Case 2: $Q_{n+1}^\beta(\Delta,k,1;1)\!\geq\! Q_{n+1}^\beta(\Delta,k,1; 0)$ as in Fig. \ref{case2}.} 
%\noindent As in case 1, we have $Q_{n+1}^\beta(\Delta,k,0;1)\!\geq\! Q_{n+1}^\beta(\Delta,k,0;0)$. By re-expressing $Q_{n+1}^\beta(\Delta,k,\omega;1)\!-\! Q_{n+1}^\beta(\Delta,k,\omega;0)$ with \eqref{q1}, \eqref{q2} and substituting (c) for $n$ (induction hypothesis) into the expression, we obtain $Q_{n+1}^\beta(\Delta,k,\omega;1)\!-\!\! Q_{n+1}^\beta(\Delta,k,\omega;0)\!\geq\! 0$. This implies that it is always optimal to suspend in this case.%, i.e. $Q_{n+1}^\beta(\Delta,k,\omega;1)\!\geq\! Q_{n+1}^\beta(\Delta,k,\omega;0)$, by combining \eqref{q1},\eqref{q2} and hypothesis that (c) holds for $n$. 	
Recall that $V_{n+1}^\beta(\mathbf{s})=\min \{Q_{n+1}^\beta(\mathbf{s};1),Q_{n+1}^\beta(\mathbf{s};0)\}$. Thus, we can obtain the threshold property by examining the Q functions $Q_{n+1}^\beta(\mathbf{s};0)$ and $Q_{n+1}^\beta(\mathbf{s};1)$ given in \eqref{q1} and \eqref{q2}. By the expression in \eqref{q2}, $Q_{n+1}^\beta(\Delta,k,\omega;1)$ is linear in $\omega$. Besides, the value function $V_{n}^\beta(\Delta,k,\omega;1)$ in our case is a piecewise linear and concave function with respect to the belief state for all $n$, which can be shown via induction similar to \cite{smallwood1973optimal}. Thus, $Q_{n+1}^\beta(\mathbf{s};0)$ is concave by \eqref{q2}. Moreover, by definition, we have $Q_{n+1}^\beta(\Delta,k,0;1)\!\!\geq\!\! Q_{n+1}^\beta(\Delta,k,0;0)$. 
%
%Since $\mathcal{T}(0)=p_{01}$, for $\omega=0$, we have
%\begin{align}
%	Q_{n+1}^\beta(\Delta,k,0;1)&=\Delta+\lambda+\beta V_n^\beta(\Delta+1,\left(k\right)_{+},p_{01})\notag\\
%	&\geq \Delta+\beta V_n^\beta(\Delta+1,\left(k\right)_{+},\mathcal{T}(0))\notag\\
%	&=Q_{n+1}^\beta(\Delta,k,0;0)
%\end{align}
Based on the relation between values of $Q_{n+1}^\beta(\Delta,k,1;1)$ and $Q_{n+1}^\beta(\Delta,k,1;0)$, there are two possible cases for curves of $Q_{n+1}^\beta(\Delta,k,\omega;1)$ and $Q_{n+1}^\beta(\Delta,k,\omega;0)$ as shown in Fig. \ref{CompQ1Q2}.

\begin{figure}[h]
%\vspace{-0.1cm}
\centering
\subfloat[]{
 \includegraphics[scale=.38]{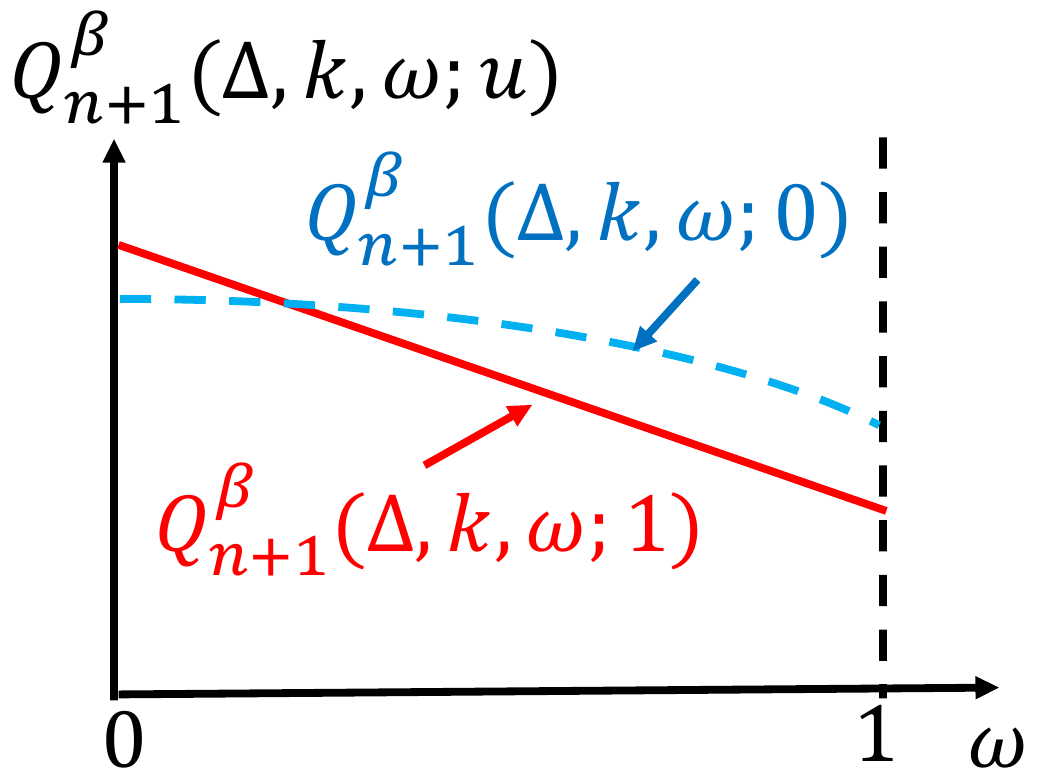}
 \label{case1}
 }
\subfloat[]{
 \includegraphics[scale=.38]{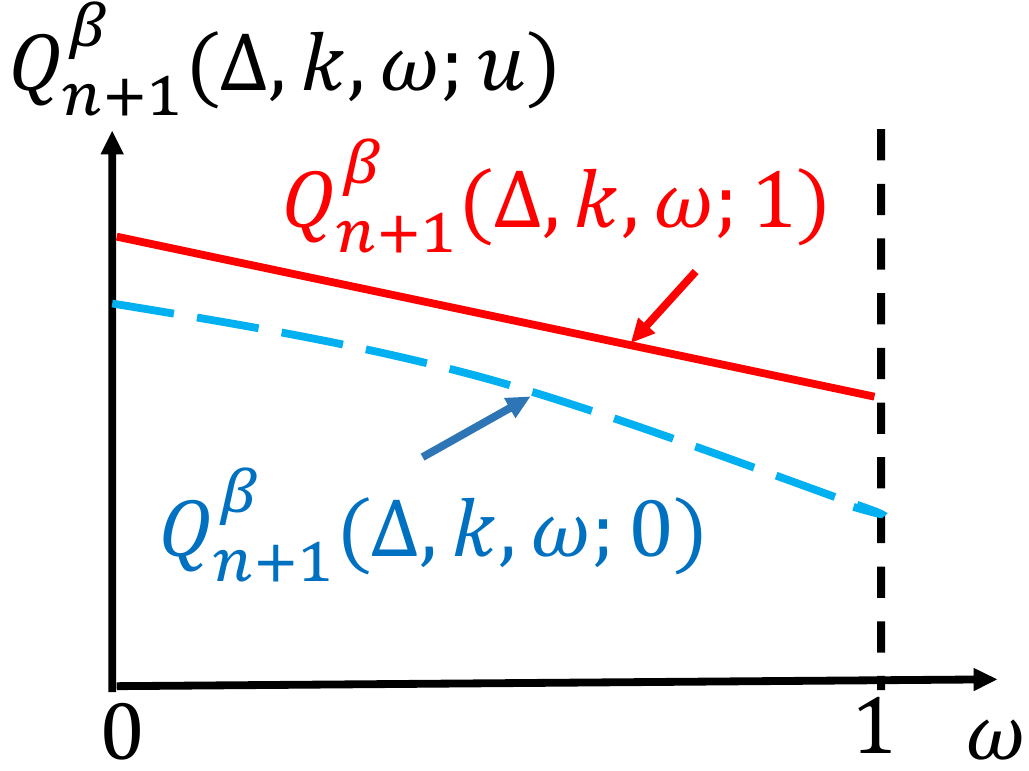}
 \label{case2}
 }
\caption{Values of $Q_{n+1}^\beta(\Delta,k,\omega;u)$}
\label{CompQ1Q2}
\end{figure}

\emph{Case 1: $Q_{n+1}^\beta(\Delta,k,1;1)\!\!<\! \!Q_{n+1}^\beta(\Delta,k,1;0)$ as in Fig. \ref{case1}.} Due to the concavity of $Q_{n+1}^\beta(\Delta,k,\omega;0)$ and linearity of $Q_{n+1}^\beta(\Delta,k,\omega;1)$ in $\omega$, there must be one unique intersection (corresponds to threshold). 

\emph{Case 2: $Q_{n+1}^\beta(\Delta,k,1;1)\!\geq\!\!Q_{n+1}^\beta(\Delta,k,1; 0)$ (see Fig. \ref{case2}):}. In the case, we will show that it is always optimal to suspend for any $\omega$ given $\Delta$ and $k$, i.e. $Q_{n+1}^\beta(\Delta,k,\omega;1)\geq Q_{n+1}^\beta(\Delta,k,\omega;0)$ for every $\omega$. In particular, by $Q_{n+1}^\beta(\Delta,k,1;1)\!\geq\!\!Q_{n+1}^\beta(\Delta,k,1; 0)$, and definitions \eqref{q1} and \eqref{q2}, we have $\lambda+\beta V_{n}^\beta(k, \left(k\right)_{+},p_{11})-\beta V_{n}^\beta(\Delta+1, \left(k\right)_{+},p_{11})\geq 0$. Moreover, by induction hypothesis, (c) holds for $n$. Thus, we have
\begin{align}
	&Q_{n+1}^\beta(\Delta,k,\omega;1)- Q_{n+1}^\beta(\Delta,k,\omega;0)\notag\\
	%\displaybreak
	%=&\Delta+\lambda
	%	+\beta \Big((1-\omega)V_{n}^\beta(\Delta+1, \left(k\right)_{+},p_{01})+\omega V_{n}^\beta(k, \left(k\right)_{+},p_{11})\Big)\notag\\
		%&
		%\!\!-\!\!\Big(\Delta\!+\!\beta V_{n}^\beta(\Delta+1, \left(k\right)_{+}\!,\!\mathcal{T}(\omega))\Big)\\
		=&\omega\Big(\lambda+\beta V_{n}^\beta(k, \left(k\right)_{+},p_{11})-\beta V_{n}^\beta(\Delta+1, \left(k\right)_{+},p_{11})\Big)\notag\\
		&
		+\beta\Big(\omega V_n^\beta(\Delta+1, \left(k\right)_{+}, p_{11})-V_n^\beta(\Delta+1,\left(k\right)_{+}, \mathcal{T}(\omega))\notag\\
		&(1-\omega)V_n^\beta(\Delta+1, \left(k\right)_{+}, p_{01})\Big) +(1-\omega)\lambda \label{i1_1}\\
		\geq & 0\label{i1_2}
\end{align}
%The equality \eqref{i1_1} is by definition \eqref{q1} and \eqref{q2}. The inequality \eqref{i1_2} holds since property (c) holds for $n$ by induction hypothesis.

\emph{Step (2):} We show that (a)-(c) hold for $n\!+\!1$. 
First, we consider property (a). It suffices to show that if $\Delta'>\Delta$, then $V_{n+1}^\beta(\Delta',k,\omega)\geq V_{n+1}^\beta(\Delta,k,\omega)$. Since $V_{n+1}^\beta(\mathbf{s})=\min \{Q_{n+1}^\beta(\mathbf{s};1),Q_{n+1}^\beta(\mathbf{s};0)\}$, we only need to show that for any $u$ that applies to state $(\Delta',k,\omega)$, there exists an action $u'$ such that $Q_{n+1}^\beta(\Delta',k,\omega;u)\geq Q_{n+1}^\beta(\Delta,k,\omega;u')$

If $u=0$, then we have 
	\begin{align}
		&Q_{n+1}^\beta(\Delta', k, \omega;0)\notag\\
		=&\Delta'+\beta V_n^\beta(\Delta'+1,(k)_{+},\mathcal{T}(\omega))\label{i1_3}\\
		\geq &  \Delta+\beta V_n^\beta(\Delta+1,(k)_{+},\mathcal{T}(\omega))\label{i1_4}\\
		= & Q_{n+1}^\beta(\Delta, k, \omega;0)\label{i1_5}
	\end{align}
	The inequality \eqref{i1_4} holds since property (a) holds for $n$ by induction hypothesis.
	
If $u=1$, according to values of $\Delta$, we have two cases to consider specified as follows.	
	If $\Delta<K$, then $\Delta=k-1$ and it implies that the receiver has received the latest status update generated at the beginning of the frame. In the case, the action chosen for state $(\Delta, k, \omega)$ is to suspend. Recall that $\Delta'=mK+k-1$ at the $k$-th slot of certain frame, where $m> 0$. For the case, we have
	\begin{align}
 &Q_{n+1}^\beta(\Delta+mK, k, \omega;1)\notag\\
		=&\Delta+mK+\lambda+\beta\Big( \omega V_n^\beta(k,(k)_{+},p_{11})\notag\\
		&+(1-\omega) V_n^\beta(\Delta+K+1,(k)_{+},p_{01})\Big)\\
		\geq & \Delta+\lambda
		+\beta\Big( \omega V_n^\beta(k,(k)_{+},p_{11})\notag\\
		&+(1-\omega) V_n^\beta(k,(k)_{+},p_{01})\Big)\label{i1_6}\\
		\geq & \Delta+\beta V_n^\beta(k,(k)_{+},\mathcal{T}(\omega)) \label{i1_7}\\
		=&Q_{n+1}^\beta(\Delta, k, \omega;0)
	\end{align}		
		The inequality \eqref{i1_6} holds since property (a) holds for $n$ by induction hypothesis. The inequality \eqref{i1_7} holds since property (c) holds for $n$ by induction hypothesis.
		
	If $\Delta\geq K$, then we have
		\begin{align}
		&Q_{n+1}^\beta(\Delta', k, \omega;1)\notag\\
		=&\Delta'+\lambda
		+\beta\Big( \omega V_n^\beta(k,\left(k\right)_{+},p_{11})\notag\\
		&+(1-\omega) V_n^\beta(\Delta'+1,\left(k\right)_{+},p_{01})\Big)\\
		\geq &\Delta+\lambda+\beta\Big( \omega V_n^\beta(k,\left(k\right)_{+},p_{11})\notag\\
		&+(1-\omega) V_n^\beta(\Delta+1,\left(k\right)_{+},p_{01})\Big) \label{i1_8}\\
		=&Q_{n+1}^\beta(\Delta, k, \omega;1)
	\end{align}	
	The inequality \eqref{i1_8} holds since property (a) holds for $n$ by induction hypothesis.

Second, we consider property (b). It suffices to show that if $\omega'\leq \omega$, then $V_{n+1}^\beta(\Delta,t,\omega')\geq V_{n+1}^\beta(\Delta,t,\omega)$ given $V_{n}^\beta$ has properties (a)-(c). The general idea to show this is same to that in proving property (a). 

Since $p_{11}\geq p_{01}$, $\mathcal{T}(\omega)=(p_{11}-p_{01})\omega+p_{01}$ is non-decreasing in $\omega$ and $\mathcal{T}(\omega')\leq\mathcal{T}(\omega)$. Then, we have 
	\begin{align}
		& Q_{n+1}^\beta(\Delta, k, \omega';0)\notag\\
		=&\Delta+\beta V_n^\beta(\Delta+1,\left(k\right)_{+},\mathcal{T}(\omega'))\\
		\geq & \Delta+\beta V_n^\beta(\Delta+1,\left(k\right)_{+},\mathcal{T}(\omega)) \label{i1_9}\\
		=&Q_{n+1}^\beta(\Delta, k, \omega;0)
	\end{align}
	The inequality \eqref{i1_9} holds since property (b) holds for $n$ by induction hypothesis.
	
Recall that for the $\left(k\right)_+$-th slot of certain frame, the smallest age is $k$. Then, $V_n^\beta(k, \left(k\right)_{+},p_{11})- V_n^\beta(\Delta+1,\left(k\right)_{+},p_{01})\leq 0$ since properties (a) and (b) in Lemma \ref{property} hold for $n$ by induction hypothesis. Hence, we have
    \begin{align}
		&Q_{n+1}^\beta(\Delta, k, \omega';u=1)\notag\\
		=&\Delta+\lambda+\beta\Big(V_n^\beta(\Delta+1,\left(k\right)_{+},p_{01})\notag\\ 
		&+\omega' \left(V_n^\beta\left(k,\left(k\right)_{+},p_{11}\right)\!-\!V_n^\beta\left(\Delta+1,\left(k\right)_{+},p_{01}\right)\right)\Big)\\
		\geq &\Delta+\lambda+\beta\Big(V_n^\beta(\Delta+1,\left(k\right)_{+},p_{01})\notag\\
		&+\omega (V_n^\beta(k, \left(k\right)_{+},p_{11})-V_n^\beta(\Delta+1,\left(k\right)_{+},p_{01})) \Big)\label{i1_10}\\
		=&Q_{n+1}^\beta(\Delta, k, \omega;u=1)
	\end{align}	
The inequality \eqref{i1_10} holds since $\omega'\leq\omega$ and $V_n^\beta(k, \left(k\right)_{+},p_{11})- V_n^\beta(\Delta+1,\left(k\right)_{+},p_{01})\leq 0$.
%Operation min preserves monotonicity. Hence, $V_{n+1}^\beta(\Delta,k,\omega')\geq V_{n+1}^\beta(\Delta,k,\omega)$.

Finally, we consider property (c). Note that $x\geq y$ and $z=\omega x+(1-\omega)y$.
For the left-hand-side of Eq. \eqref{prop 2}, there are three possible combinations of actions for state $(\Delta, k, x)$ and $(\Delta, k, y)$, i.e. suspending for both states, transmitting for both states and suspending for latter state but transmitting for former state. Note that $x\geq y$ implies that if the optimal action for state $(\Delta, k, y)$ is to update, then the optimal action for state $(\Delta, k, x)$ is also to update since the optimal policy for $n+1$-th iteration is of threshold type.

For the case of suspending for both states, we have 
\begin{align}
	& (1-\omega)\lambda+\omega Q_{n+1}^\beta(\Delta, k, x; 0)+ (1-\omega) Q_{n+1}^\beta(\Delta, k, y; 0)\notag\\
	=&(1-\omega)\lambda+\omega\left(\Delta+\beta V_n^\beta \left(\Delta+1,\left(k\right)_{+},\mathcal{T}(x)\right)\right)\notag\\
	&+(1-\omega)\left(\Delta+\beta V_n^\beta(\Delta+1,\left(k\right)_{+},\mathcal{T}(y))\right)\\
	\geq &  \Delta+\beta V_n^\beta(\Delta+1,\left(k\right)_{+},\mathcal{T}(z))\label{i1_11}\\
	= & Q_{n+1}^\beta(\Delta, k,z;0)\\
	\geq & V_{n+1}^\beta(\Delta, k,z) \label{i1_12}
\end{align}
The inequality \eqref{i1_11} holds since property (c) holds for $n$ by induction hypothesis. The inequality \eqref{i1_12} holds by \eqref{iteration}.

For the case of transmitting for both states, we have 
\begin{align}
	& (1-\omega)\lambda+\omega Q_{n+1}^\beta(\Delta, k, x; 1)+ (1-\omega) Q_{n+1}^\beta(\Delta, k, y; 1)\notag\\
	%=&(1-\omega)\lambda+\Delta+\lambda+\beta\Big(\omega x V_n^\beta(k,\left(k\right)_{+},p_{11})\notag\\
	%&+\omega(1-x) V_n^\beta(\Delta+1,\left(k\right)_{+},p_{01})\notag\\
	%&+(1-\omega)y  V_n^\beta(k,\left(k\right)_{+},p_{11})\notag\\
	%&+(1-\omega)(1-y) V_n^\beta(\Delta+1,\left(k\right)_{+},p_{01})\Big)\\
	= & (1-\omega)\lambda+ \Delta+\lambda+\beta\Big( z V_n^\beta(k,\left(k\right)_{+},p_{11})\notag\\
	&+\!(1\!-\!z) V_n^\beta(\Delta+1,\left(k\right)_{+},p_{01})\Big)\\
	= & (1-\omega)\lambda+Q_{n+1}^\beta(\Delta, k,z;1)\\
	\geq &Q_{n+1}^\beta(\Delta, k,z;1) \label{i1_13}\\
	\geq & V_{n+1}^\beta(\Delta, k,z) \label{i1_14}
\end{align}
The first equality is by \eqref{q2} plus some basic calculation. The second equality is by \eqref{q2}. The inequality \eqref{i1_14} holds by \eqref{iteration}.

For the case of transmitting for state $(\Delta, k, x)$ but suspending for $(\Delta, k, y)$, we have 
\begin{align}
	& (1-\omega)\lambda+\omega Q_{n+1}^\beta(\Delta, k, x; 1)+ (1-\omega) Q_{n+1}^\beta(\Delta, k, y; 0)\notag\\
	=&\lambda+\Delta+\beta\omega\Big( x V_n^\beta(k, \left(k\right)_{+},p_{11})\notag\\
	&+(1-x) V_n^\beta(\Delta+1,\left(k\right)_{+},p_{01})\Big)\notag\\
	&+\beta (1-\omega)V_n^\beta(\Delta+1,\left(k\right)_{+},\mathcal{T}(y)) \label{i1_15} \\
	\geq & \lambda+\Delta+\beta\omega\Big( x V_n^\beta(k,\left(k\right)_{+},p_{11})\notag\\
	&+(1-x) V_n^\beta(\Delta+1,\left(k\right)_{+},p_{01})\Big)\notag\\
	&+\beta(1-\omega)\Big( y V_n^\beta(\Delta+1,\left(k\right)_{+},p_{11})\notag\\
	&+(1-y) V_n^\beta(\Delta+1,\left(k\right)_{+},p_{01})\Big) \label{i1_16}\\
	\geq & \lambda+ \Delta+\beta\Big( z V_n^\beta(k,\left(k\right)_{+},p_{11})\notag\\
	&+(1-z) V_n^\beta(\Delta+1,\left(k\right)_{+},p_{01})\Big)\label{i1_17}\\
	= &Q_{n+1}^\beta(\Delta, k,z;1)\\
	\geq & V_{n+1}^\beta(\Delta, k,z) \label{i1_110}
\end{align}
The equality \eqref{i1_15} is by \eqref{q1} and \eqref{q2}.
The inequality \eqref{i1_16} holds since the value function is a piecewise linear and concave function with respect to the belief state, which can be verified with theory developed in \cite{smallwood1973optimal}. The inequality \eqref{i1_17} holds since property (a) holds for $n$ by induction hypothesis with some basic calculation. The inequality \eqref{i1_110} holds by \eqref{iteration}.
%The second inequality holds since property (a) holds for $n$ by induction hypothesis.  

\section{Proof for Verification of Conditions in \cite{sennott1989average}}
\label{exist_1}
	The conditions are listed below:
	\begin{itemize}
		\item A1: $V^\beta(\mathbf{s})$ defined in \eqref{disc_cost_opt} is finite $\forall \mathbf{s},\beta$.
		\item A2: $\exists L\geq 0$ s.t. $-L\leq h^\beta (\mathbf{s})\triangleq V^\beta(\mathbf{s})-V^\beta(\mathbf{0})$, $\forall \mathbf{s}, \beta$.
		\item A3: $\exists M(\mathbf{s})\geq 0$ s.t. $h^\beta (\mathbf{s})\leq M(\mathbf{s})$, $\forall \mathbf{s},\beta$. Moreover, for each $\mathbf{s}$, $\exists \, u(\mathbf{s})$ s.t. $\sum_{\mathbf{s'}\in \mathcal{S}}\mathbb{P}(\mathbf{s'}|\mathbf{s},u(\mathbf{s}))M(\mathbf{s'})<\infty$.
		\item A4: $\sum_{\mathbf{s'}\in \mathcal{S}}\mathbb{P}(\mathbf{s'}|\mathbf{s},u)M(\mathbf{s'})<\infty$ $\forall \mathbf{s}, u$.
	\end{itemize}
	In Proposition \ref{existence_discount}, we showed that a policy $f$ that chooses $u=0$ at every time slot satisfies $L_{\mathbf{s}}^\beta(f;\lambda)\!<\!\!\infty$. By \eqref{disc_cost_opt}, we have $V^\beta(\mathbf{s})\leq L_{\mathbf{s}}^\beta(f;\lambda)$, which implies A1. Moreover, we have $V^\beta$ increasing in $\Delta$ and decreasing in $\omega$ by Lemma \ref{property}. Hence, by setting $L=V^\beta(\mathbf{0})\!-\!\min_{k\in \mathcal{K}}V^\beta((k)_{-},k,p_{11})\!\geq \!0$, where $\mathbf{0}=(K,1,p_{11})$ is the reference state, we proves A2.

  Let $\delta$ be the policy that transmits at each time slot. Similar to proof of Lemma 6 in \cite{hsu2017scheduling}, The AoI can be regarded as a stable AoI queue. In particular, average arrival rate is one since age increases by 1 at each time slot, and average service rate is infinite since the channel is in a good state with positive probability and can serve infinite number of age packets when it is in a good state. In the case, the age queue is stable. Hence, states that occur after delivery are recurrent. This implies that $\mathbf{0}$ is recurrent. Actually, the probability of not entering state $\mathbf{0}$ after $l$ frames is no more than $b^{lK}$, where $b$ is steady state probability that channel is in a bad state. 
 Hence, under policy $\delta$ the expected cost of the first passage from state $\mathbf{s}$ to $\mathbf{0}$, denoted by $c_{\mathbf{s},\mathbf{0}}(\delta)$, is finite.
 Let $\delta'$ be a mix policy where $\delta$ is used until entering state $\mathbf{0}$ and the discounted Lagrange cost optimal policy $\delta_{\beta}$ is used afterwards. Suppose $T$ is the first time slot when system enters $\mathbf{0}$. Then, we have
   \begin{align}
   	&V^\beta(\mathbf{s})\notag\\
   	\leq &\mathbb{E}_{\delta'}[\sum_{t=1}^{T-1}\beta^{t-1}C(\mathbf{s}_t,\!u_t)|\mathbf{s}]\!+\!\mathbb{E}_{\delta'}[\sum_{t=T}^{\infty}\beta^{t-1}C(\mathbf{s}_t,\!u_t)|\mathbf{0}]\\
   	\leq & c_{\mathbf{s},\mathbf{0}}(\delta)+ \mathbb{E}_{\delta_{\beta}}(\beta^{(T-1)})V^\beta(\mathbf{0})\\
   	\leq & c_{\mathbf{s},\mathbf{0}}(\delta)+ V^\beta(\mathbf{0}).
   \end{align}   
   Hence, by setting $M(\mathbf{0})\!\!=\!0$ and $M(\mathbf{s})\!=\!\!c_{\mathbf{s},\mathbf{0}}(\delta)$ for $\mathbf{s}\!\neq \!\mathbf{0}$, we proves A3. After transition from $\mathbf{s}$ under any action, there will be at most two possible states. Since for all $\mathbf{s}$, $M(\mathbf{s})<\infty$, the sum of at most two $M(\cdot)$ is also finite. Hence, A4 holds.

\section{Proof of Theorem \ref{approx}}
   \label{convg}
	Let $V^{\beta,N}$ be the minimum $\beta$-discounted Lagrangian cost for the approximate MDP with bound $N$ and $h^{\beta,N}(\mathbf{s})=V^{\beta,N}(\mathbf{s})-V^{\beta,N}(\mathbf{0})$. By \cite{sennott1997computing}, it suffices to verify the following conditions B1-B2.
	\begin{itemize}
		\item B1: $\exists$ $L\geq 0$, $M(\cdot)\geq 0$ on $\mathcal{S}$ s.t. $-L\leq h^{\beta,N}(\mathbf{s})\leq M(\mathbf{s})$ for $\mathbf{s} \in \mathcal{S}^{N}$, where $\beta \in (0,1)$ and $N=K+1,K+2,\cdots$.
		\item B2: $\limsup_{N \rightarrow \infty} \bar{L}^{N*}(\lambda)\leq \bar{L}^*(\lambda)$.
	\end{itemize}
%We use similar method in the proof of Theorem 9 in \cite{hsu2017scheduling} to verify these conditions. The main difference is that we bound two components in state, i.e., AoI and belief while in \cite{hsu2017scheduling}, only AoI is bounded. 
%We deal with the difference by carefully designing the bound in belief. Recall that $\varphi(\cdot)$ and $\nu(\cdot)$ are operators defined in Section \ref{approximate_algo} to bound belief and state, respectively. In fact, we have $\varphi(\omega)=\mathcal{T}^N(p_{11})$ if $\mathcal{T}^N(p_{01}) <\omega <\mathcal{T}^N(p_{11})$. Then, since $V^{\beta}(\Delta,k,\omega)$ increases with age $\Delta$ and decreases with belief state $\omega$ in Lemma \ref{property}, we have $V^{\beta}(\nu(\mathbf{r}))\leq V^{\beta}(\mathbf{r})$, which is a key inequality used in the proof.

Consider policy $\pi$ that updates at each time slot with equal probability. Let $c_{\mathbf{s},\mathbf{0}}(\pi)$ and $c_{\mathbf{s},\mathbf{0}}^{N}(\pi)$ be the expected cost of the first passage from state $\mathbf{s}$ to $\mathbf{0}$ by applying $\pi$ to original and approximate MDP, respectively. 
%Similar to the proof in Lemma \ref{thre_avg}, we have that $V^{\beta,N}(\Delta,t,\omega)$ increases with $\Delta$ and decreases with $\omega$. Let the reference state be $\mathbf{0}=(K,1,p_{11})$. Then, $L=V^{\beta,N}(K,1,p_{11})-\min_{k\in \{1,2,\cdots,T\}}V^{\beta,N}(k_{-},k,1)\geq 0$, similar to proof of Theorem \ref{thre_avg}. 
Similar to the proof in Appendix \ref{exist_1}, we have $L=V^{\beta,N}(\mathbf{0})-\min_{k\in \mathcal{K}}V^{\beta,N}((k)_{-},k,p_{11})$, $c_{\mathbf{s},\mathbf{0}}(\pi)<\infty$ and $h^{\beta,N}(\mathbf{s})\leq c_{\mathbf{s},\mathbf{0}}^{N}(\pi)$. Next, we show that $c_{\mathbf{s},\mathbf{0}}^{N}(\pi)\leq c_{\mathbf{s},\mathbf{0}}(\pi)$. Then, $M(\mathbf{s})=c_{\mathbf{s},\mathbf{0}}(\pi)$. By the proof of Corollary 4.3 in \cite{sennott1997computing}, 
%if $c_{\mathbf{s},\mathbf{0}}(\pi)$ satisfies the equation 
%	$$c_{\mathbf{s},\mathbf{0}}(\pi)\geq \sum_{u} \pi(u)\left(C(\mathbf{s},u;\lambda)+\sum_{\mathbf{s}'\in \mathcal{S}^N}P_{\mathbf{s}\mathbf{s}'}^N(u)c_{\mathbf{s}',\mathbf{0}}(\pi)\right),$$ 
%	then $c_{\mathbf{s},\mathbf{0}}^{N}(\pi)\leq c_{\mathbf{s},\mathbf{0}}(\pi)$. Since $$c_{\mathbf{s},\mathbf{0}}(\pi)=\sum_{u} \pi(u)\left(C(\mathbf{s},u;\lambda)+\sum_{\mathbf{s}'\in \mathcal{S}}P_{\mathbf{s}\mathbf{s}'}(u)c_{\mathbf{s}',\mathbf{0}}(\pi)\right), $$ 
it suffices to show that 
\begin{align}
	\sum_{\mathbf{s}'\in \mathcal{S}^N}P^N_{\mathbf{s}\mathbf{s}'}(u)c_{\mathbf{s}',\mathbf{0}}(\pi)\leq \sum_{\mathbf{s}'\in \mathcal{S}}P_{\mathbf{s}\mathbf{s}'}(u)c_{\mathbf{s}',\mathbf{0}}(\pi)\label{i3_0}
\end{align}
Recall that $\nu$ is approximation operation to the state defined in \eqref{approximation_opr}. Then, we have
	\begin{align}
		&\sum_{\mathbf{s}'\in \mathcal{S}^N}\!\!P^N_{\mathbf{s}\mathbf{s}'}(u)c_{\mathbf{s}',\mathbf{0}}(\pi)
		\!\notag\\
		=&\sum_{\mathbf{s}'\in \mathcal{S}^N}\!\!\Big(P_{\mathbf{s}\mathbf{s}'}(u)\!+\!\!\!\sum_{\mathbf{r}\in \mathcal{S}-\mathcal{S}^{N}}P_{\mathbf{s}\mathbf{r}}(u)\mathbbm{1}_{\{\nu(\mathbf{r})=\mathbf{s}'\}}\Big)c_{\mathbf{s}',\mathbf{0}}(\pi)\\
		\leq & \sum_{\mathbf{s}'\in \mathcal{S}^N}\!\!P_{\mathbf{s}\mathbf{s}'}(u)c_{\mathbf{s}',\mathbf{0}}(\pi)\!+\!\!\!\!\!\!\!\sum_{\mathbf{r}\in \mathcal{S}-\mathcal{S}^N}\!\!\!\!\!P_{\mathbf{s}\mathbf{r}}(u)c_{\mathbf{r},\mathbf{0}}(\pi)\label{i3_1}\\
		=&\sum_{\mathbf{s}'\in \mathcal{S}}P_{\mathbf{s}\mathbf{s}'}(u)c_{\mathbf{s}',\mathbf{0}}(\pi)
		%\label{aa}
	\end{align}
The inequality \eqref{i3_1} holds since policy $\pi$ does not depend on states  % and immediate cost only depends on age and action, 
 and thus $c_{(\Delta, k,\omega),\mathbf{0}}(\pi)\leq c_{(\Delta', k, \omega'),\mathbf{0}}(\pi)$ for $\Delta\leq \Delta'$.	

For B2, 
	%Under B1, $\bar{L}^{N*}(\lambda)=\lim_{\beta\rightarrow 1}(1-\beta)V^{\beta,N}(\mathbf{s})$ for all $\mathbf{s} \in \mathcal{S}^{N}$ \cite{sennott1997computing}. 
	claim that $V^{\beta,N}(\mathbf{s})\leq V^\beta(\mathbf{s})$ for all $N$. Then, for all $N$.
%	\begin{align}
%	\vspace{-0.2cm}
		$\bar{L}^{N*}(\lambda)\!\!=\!\!\lim_{\beta\rightarrow 1}(1\!-\!\beta)V^{\beta,N}(\mathbf{s})
		\leq \lim_{\beta\rightarrow 1}(1\!-\!\beta)V^\beta(\mathbf{s})\!=\!\bar{L}^*(\lambda).$
%		\vspace{-0.2cm}
%	\end{align}
	We use induction to prove the claim. The claim holds obviously when $n=0$. Suppose $V_n^{\beta,N}(\mathbf{s})\leq V_n^\beta(\mathbf{s})$, then 	
	\begin{align}
		&V_{n+1}^{\beta,N}(\mathbf{s})\notag\\
		=&\min_{u}\{C(\mathbf{s},u;\lambda)+\beta\sum_{\mathbf{s}'\in\mathcal{S}^{N}} P^N_{\mathbf{s}\mathbf{s}'}(u)V_n^{\beta,N}(\mathbf{s}')\}\\
		%\displaybreak
		\leq &\min_{u}\{C(\mathbf{s},u;\lambda)+\beta\sum_{\mathbf{s}'\in\mathcal{S}^{N}} P^N_{\mathbf{s}\mathbf{s}'}(u)V_n^{\beta}(\mathbf{s}')\}\label{i3_2}\\
		\leq & \min_{u}\{C(\mathbf{s},u;\lambda)+\beta \sum_{\mathbf{s}'\in \mathcal{S}}P_{\mathbf{s}\mathbf{s}'}(u)V_n^{\beta}(\mathbf{s}')\}\label{i3_3}\\
		=&V_{n+1}^{\beta}(\mathbf{s})		
	\end{align}
The inequality \eqref{i3_2} is due to the induction hypothesis. The inequality \eqref{i3_3} is due to $\sum_{\mathbf{s}'\in \mathcal{S}^N}P^N_{\mathbf{s}\mathbf{s}'}(u)V_n^{\beta}(\mathbf{s}')\leq \sum_{\mathbf{s}'\in \mathcal{S}}P_{\mathbf{s}\mathbf{s}'}(u)V_n^{\beta}(\mathbf{s}')$, which can be shown similar to \eqref{i3_0}.

\section{Proof of Lemma \ref{property2}}
\label{mono_property_case}
 Let $V_n^\beta(\mathbf{s})$ be the cost-to-go function such that $V_0^\beta(\mathbf{s})=0$ for all $\mathbf{s}\in\mathcal{S}$ and for $n\geq 0$,
		\begin{align}
		V_{n+1}^\beta(\Delta,k,g)=\min_{u\in \{0,1\}} Q_{n+1}^\beta(\Delta,k,g;u)
	    \end{align}
	    where
	    \begin{align}
		\!\!Q_{n+1}^\beta\left(\Delta,k,g;0\right)=&\Delta\!+\!\beta\!\!\!\!\! \sum_{g'\in\{0,1\}}\!\!\!p_{gg'}V_{n}^\beta\left(\Delta\!+\!1,\left(k\right)_{+},g'\right)\label{u0}\\
		\!\!Q_{n+1}^\beta\left(\Delta,k,g;1\right)=&\Delta+\lambda+\beta \Big(p_{g1} V_{n}^\beta\left(k, \left(k\right)_{+},1\right)\notag\\&+p_{g0}V_{n}^\beta
		\left(\Delta+1,\left(k\right)_{+},0\right)\Big)\label{u1}
	    \end{align}
	     With similar argument in proof of Proposition \ref{existence_discount}, we can obtain that $V_n^\beta(\mathbf{s}) \rightarrow V^\beta(\mathbf{s})$ as $n\rightarrow \infty$, for every $\mathbf{s}$, $\beta$. Hence, we only need to show that for all $n$, the function $V_n^\beta(\Delta,k,g)$ is non-decreasing in AoI. Next, we show the result using induction. Note that zero function (i.e., $V_0^\beta(\mathbf{s})=0$) satisfies the property. In other words, for $n=0$, the property holds. Suppose that the property holds for $n$. It remains to show that the property holds for $n+1$.
Suppose $\Delta'>\Delta$, we will show $V_{n+1}^\beta(\Delta',k,g)\geq V_{n+1}^\beta(\Delta,k,g)$. Since $V_{n+1}^\beta(\mathbf{s})=\min \{Q_{n+1}^\beta(\mathbf{s};1),Q_{n+1}^\beta(\mathbf{s};0)\}$, it suffices to show that for each $u$ that applies to state $(\Delta',k,g)$, there exists an action $u'$ such that $Q_{n+1}^\beta(\Delta',k,g;u)\geq Q_{n+1}^\beta(\Delta,k,g;u')$.

If $u=0$, then we have
	\begin{align}
		&Q_{n+1}^\beta(\Delta', k, g;0)\notag\\
		=&\Delta'+\beta \sum_{g'\in\{0,1\}}p_{gg'} V_n^\beta(\Delta'+1,(k)_{+},g')\\
		\geq & \Delta+\beta \sum_{g'\in\{0,1\}}  p_{gg'}V_n^\beta(\Delta+1,(k)_{+},g')\label{i4_1}\\
		=& Q_{n+1}^\beta(\Delta, k, g;0)
	\end{align}
The inequality \eqref{i4_1} holds by our induction hypothesis.

If $u=1$, then we have two cases to consider based on the values of $\Delta$.
	At the $k$-th slot of a time frame, if $\Delta<K$, then $\Delta=k-1$ and it implies that the receiver has received the latest status update generated at the beginning of the frame. In the case, the action is to suspend. Recall that $\Delta'=mK+k-1$ at the $k$-th slot of certain frame, where $m> 0$. For the case, we have
	\begin{align}
 &Q_{n+1}^\beta(mK+k-1, k, g;1)\notag\\
		=&mK+k-1+\lambda+\beta\Big( p_{g1} V_n^\beta(k,(k)_{+},1)\notag\\
		&+p_{g0} V_n^\beta(mK+k,(k)_{+},0)\Big)\\
		\geq & k-1+\beta\left( p_{g1} V_n^\beta(k,(k)_{+},1)+p_{g0} V_n^\beta(k,(k)_{+},0)\right) \label{i4_2}\\
		=&Q_{n+1}^\beta(k-1, k, g;0)\\
		=&Q_{n+1}^\beta(\Delta, k, g;0)
	\end{align}	
	The inequality \eqref{i4_2} holds by induction hypothesis.
	
	If $\Delta\geq K$, then we have
		\begin{align}
		&Q_{n+1}^\beta(\Delta', k, g;1)\notag\\
		=&\Delta'+\lambda
		+\beta\Big( p_{g1} V_n^\beta(k,(k)_{+},1)\notag\\
		&+p_{g0} V_n^\beta(\Delta'+1,(k)_{+},0)\Big)\\
		\geq &\Delta+\lambda+\beta\Big(p_{g1} V_n^\beta(k,\left(k\right)_{+},1)\label{i4_3}\notag\\
		&+p_{g0} V_n^\beta(\Delta+1,\left(k\right)_{+},0)\Big)\\
		=&Q_{n+1}^\beta(\Delta, k, g;1)
	\end{align}	
	%The equalities are by \eqref{u1}. 
	The inequality \eqref{i4_3} holds by induction hypothesis.
	%Since $V_{n+1}^\beta(\mathbf{s})=\min \{Q_{n+1}^\beta(\mathbf{s};1),Q_{n+1}^\beta(\mathbf{s};0)\}$, we have $V_{n+1}^\beta(\Delta',k,g)\geq V_{n+1}^\beta(\Delta,k,g)$.
	
\section{Proof of Lemma \ref{threshold_prop2}}	
\label{threshold_discount2}

	Without loss of generality, we assume that at state $(\Delta,k,g)$ it is optimal to attempt a transmit. That is, $Q^\beta(\Delta,k,g;1)\leq Q^\beta(\Delta,k,g;0)$. Then, for any $\Delta'>\Delta$, 
	\begin{align}
		&Q^\beta(\Delta',k, g;1)-Q^\beta(\Delta',k,g;0)\notag\\
		=&\lambda+\beta p_{g1}(V^\beta\left(k, \left(k\right)_{+},1\right)-V^\beta\left(\Delta'+1, \left(k\right)_{+},1\right))\\
		\leq &\lambda+\beta p_{g1}(V^\beta\left(k, \left(k\right)_{+},1\right)-V^\beta\left(\Delta+1, \left(k\right)_{+},1\right))\label{i5_1}\\
		=&Q^\beta(\Delta,k,g;1)- Q^\beta(\Delta,k,g;0)\\
		\leq &0
	\end{align}
	The inequality \eqref{i5_1} holds since $V^\beta\left(\Delta'+1, \left(k\right)_{+},1\right)\geq V^\beta\left(\Delta+1, \left(k\right)_{+},1\right)$ by Lemma \ref{property2}. 
	Thus, it is also optimal to transmit at $(\Delta',k,g)$. Hence, the unconstrained discounted Lagrange cost optimal policy is of threshold-type in AoI.
	
	Let $\Delta_{\beta}^*(k,g;\lambda)$ denote the threshold associated with $k$ and $g$. That is, given $k$ and $g$, it is optimal to transmit when $\Delta\geq\Delta_{\beta}^*(k,g;\lambda)$. Let $\Delta_1=\Delta_{\beta}^*(k,0;\lambda)$, we have
	\begin{align}
		&Q^\beta(\Delta_1,k, 1;1)-Q^\beta(\Delta_1,k,1;0)\notag\\
		=&\lambda+\beta p_{11}\big(V^\beta\left(k, \left(k\right)_{+},1\right)-V^\beta\left(\Delta_1+1, \left(k\right)_{+},1\right)\big)\\
		\leq &\lambda+\beta p_{01}\big(V^\beta\left(k, \left(k\right)_{+},1\right)-V^\beta\left(\Delta_1+1, \left(k\right)_{+},1\right)\big)\label{i5_2}\\
		=&Q^\beta(\Delta_1,k,0;1)- Q^\beta(\Delta_1,k,0;0)\\
		\leq &0. \label{i5_3}
	\end{align}
	The inequality \eqref{i5_2} holds since $p_{01}\leq p_{11}$ by assumption and $V^\beta\left(k, \left(k\right)_{+},1\right)-V^\beta\left(\Delta+1, \left(k\right)_{+},1\right)\leq 0$ by Lemma \ref{property2}. The inequality \eqref{i5_3} holds by optimality. 
	
	Thus, we have $\Delta_{\beta}^*(k,0;\lambda)=\Delta_1\geq\Delta_{\beta}^*(k,1;\lambda)$. In other words, the threshold associated with good state is not larger than that associated with bad state.

\bibliographystyle{unsrt}
\bibliography{References}
\end{document}